\newtheorem{theorem}{Theorem}[section]
\newtheorem{proposition}[theorem]{Proposition}
\newtheorem{lemma}[theorem]{Lemma}
\theoremstyle{remark}
\theoremstyle{plain}
\newcommand\wt\widetilde
\newcommand\sheaf\mathcal
\newcommand\complex\mathscr
\newcommand{\exterior}[1]{\ensuremath{{\textstyle\bigwedge}^{\! #1 }}}
\newcommand\lto\longrightarrow
\newcommand{\Sp}[1]{\ensuremath{\text{Splittings} #1  }}
\newcommand{\uSp}[1]{\ensuremath{\underline{\text{Splittings}}} #1  }
\newcommand\Isom{\ensuremath{\text{Isom}}}
\newcommand\Conn{\ensuremath{\text{Conn}}}
\newcommand\Ext{\ensuremath{\text{Ext}}}
\newcommand\Sym{\ensuremath{\text{Sym}}}
\newcommand\Hom{\ensuremath{\text{Hom}}}
\newcommand\End{\ensuremath{\text{End}}}
\newcommand\Spec{\ensuremath{\text{Spec}}}
\newcommand\Defo{\ensuremath{\text{Defo}}}
\newcommand\Maps{\ensuremath{\text{Maps}}}
\newcommand\h{h^{12}}
\newcommand\CC{\ensuremath{\mathbb C}}
\newcommand\ZZ{\ensuremath{\mathbb Z}}
\newcommand\RR{\ensuremath{\mathbb R}}
\renewcommand\AA{\ensuremath{\mathbb A}}
\newcommand\X{{\ensuremath{X}}}
\newcommand\M{{\ensuremath{\mathcal{M}}}}
\newcommand\D{{\ensuremath{\mathcal{D}}}}
\newcommand\sM{{\ensuremath{\mathcal{SM}}}}
\newcommand\SM{{\ensuremath{\mathfrak{M}}}}
\renewcommand\S{{\ensuremath{\mathcal{S}}}}
\newcommand\OO{{\ensuremath{\mathcal{O}}}}
\def\d{{\mathrm d}}
\def\L{{\mathcal L}}
\def\Res{{\mathrm{Res}}}
\def\Z{{\Bbb Z}}
\def\Tr{{\mathrm{Tr}}}
\def\C{{\Bbb C}}
\font\teneurm=eurm10 \font\seveneurm=eurm7  \font\fiveeurm=eurm5
\font\teneusm=eusm10 \font\seveneusm=eusm7 \font\fiveeusm=eusm5
\font\tencmmib=cmmib10 \skewchar\tencmmib='177
\font\sevencmmib=cmmib7 \skewchar\sevencmmib='177
\font\fivecmmib=cmmib5 \skewchar\fivecmmib='177
\def\red{{\mathrm{red}}}
\def\Res{{\mathrm{Res}}}
\numberwithin{equation}{section} 
\begin{document}
\bibliographystyle{hep}

\parindent=0cm
\parskip=0.5cm

\title[Super Atiyah Classes]{Super Atiyah Classes and Obstructions to Splitting of Supermoduli Space}
\author{Ron Donagi}
\address{Department of Mathematics, University of Pennsylvania, Philadelphia, PA 19104}
\email{donagi@math.upenn.edu}
\author{Edward Witten}
\address{Institute for Avanced Study, Princeton NJ  08540}
\email{witten@ias.edu}

\date{\today}

\begin{abstract}
The first obstruction to splitting a supermanifold $S$ is one of the three components of its super Atiyah class, the two other components being the ordinary Atiyah 
classes on the reduced space $M$ of the even and odd tangent bundles of $S$. We evaluate these classes explicitly for the moduli space of super Riemann surfaces 
(``super moduli space'') and its reduced space, the moduli space of spin curves. These classes are interpreted in terms of certain extensions arising from line bundles on the 
square of the varying (super) Riemann surface. These results are used to give a new proof of the non-projectedness of $\SM_{g,1}$, the moduli space of super Riemann surfaces with one puncture.
\end{abstract}
\maketitle
\newpage
\tableofcontents
\thispagestyle{empty}
\newpage
\section{Introduction}
It was shown in  \cite{Not projected} that various natural moduli spaces of super Riemann surfaces, with and without NS punctures, are not projected. The argument boiled down to showing that the  obstruction class $\omega$ to the splitting of these moduli spaces (by which we will always mean the {\em first} obstruction class, denoted  $\omega_2 $ in \cite{Not projected})  does not vanish. In the present work we interpret this  first obstruction class, for an arbitrary supermanifold, as a component of the  super Atiyah class. There are two other components, which are the ordinary (bosonic) Atiyah classes of the even and odd tangent bundles. For the relevant moduli spaces, we are able to write down explicit formulas for the obstruction class and the Atiyah class. This leads to an alternate proof for the non-vanishing of $\omega$ for the moduli space $\SM_{g,1}$ of super Riemann surfaces of genus $g \geq 2$ with one  (Neveu-Schwarz) puncture. This was an important input in the other 
major results of   \cite{Not projected}, including non-projectedness of  $\SM_{g}$ itself in genus $g \geq  5$. It would be desirable to have an explicit formula also for the super Atiyah class; this should yield the results for the obstruction class and the two Atiyah classes as special cases. This remains an open problem.

As we will review below,  the first obstruction to splitting the supermanifold $S=(M,\OO_S)$ 
is given \cite{Not projected, Manin} by a cohomology class
\[
 \omega = \omega _2 \in H^1 (M, (T_{+} \otimes \exterior{2}    {T_-}  ^{\vee}))
=\Ext^1(\exterior{2}   {T_-}   ,  T_{+} ),
\]
where $T_{\pm}$ denote the even and odd tangent bundles of $S$.
We start in section \ref{2} by interpreting the obstruction class $\omega_S$ to the splitting of a supermanifold $S$ in terms of various data associated to $S$. One interpretation is based on rings of differential operators: we interpret  $\omega_S$ as an extension class encoded in the rings $D^i_{S,-}$ of $i$-th order differential operators on $S$ whose symbol is purely odd.
More precisely, the relevant ring is $\bar{D} := D^2_{S,-} / D^1_{S,-}$. It is an extension of $\exterior{2}T_-$ by $T_+$, so it has an extension class in  $\Ext^1(\exterior{2}   {T_-},  T_{+} )$. We check that this coincides with the first obstruction class $\omega$.
A second interpretation is in terms of second order deformations in $S$ of points of the underlying manifold $M$. The latter form an affine bundle (over a particular vector bundle over $M$), and  the class of this affine bundle
can be identified with $\omega$ .

The Atiyah class $\alpha_{M,V} = \alpha_V$ of a complex vector bundle $V$ on a manifold $M$ measures the obstruction to existence of a global connection on $V$.  It lives in the cohomology group $H^1(M,T^{\vee} \otimes End(V))$,
since a connection is a section of a torsor (=principal homogeneous bundle) under the vector bundle of $1$-forms with values in $End(V)$.
The Atiyah class can be viewed as a non-commutative Chern character. Indeed, the Chern character is recovered as the trace: $ch(V) = \Tr(\exp{(\alpha_V)})$, meaning that
the $i$-th component of the Chern character is recovered as: $ch_i(V) = \Tr(\frac{(\alpha_V)^i}{i!})$.
The most important case is when $V$ happens to be the tangent bundle of $M$. The notation is then abbreviated to $\alpha_{M,T} = \alpha_{M}$, and some simplifications occur, related to the notion of torsion free connection, which makes sense for $T$ but not for a general $V$.
We note in section \ref{2.3} that this Atiyah class has characterizations, in terms of rings of differential operators and second order deformations of points, that are very reminiscent of the corresponding characterizations of the obstruction class. 

The mystery is resolved in the next section \ref{cghjjlkh}: there are straightforward super-versions $\alpha_{S,V}$  and $\alpha_{S}$  of the Atiyah class of a bundle (or the tangent bundle) on a supermanifold $S$. When the latter is restricted back to the reduced manifold $M$, it splits in three pieces. Two of these are the ordinary Atiyah classes of the even and odd tangent bundles of $S$. The third is the obstruction class.

In section \ref{3} we focus on moduli spaces of super Riemann surfaces, and give an algebro-geometric interpretation of their obstruction classes. We have seen in section \ref{2}  that on any supermanifold, the obstruction class $\omega$ is the extension class of a vector bundle on $S$, with quotient $\exterior{2}   {T_-}$ and subspace $T_+$.
When $S$ is the moduli space $\SM_g$ of super Riemann surfaces, these bundles have natural cohomological interpretations, in terms of spaces of differentials on the variable curve. Our point is that the extension class too has a natural cohomological interpretation.
This is a little easier to state in terms of the duals: a point of $\SM_g$ is a super Riemann surface, which determines a Riemann surface $C$ and a square root $T_{C,-} = (T_C)^{1/2}$ of its tangent bundle. The fiber at that point of the  dual $T_{ {\SM_g}  ,+}^{\vee}$ is the space $H^0(C, K_C^2)$ of quadratic differentials on $C$, while the fiber of $T_{ {\SM_g}  ,-}^{\vee}$ is the space $H^0(C, K_C^{3/2})$ of sesquilinear differentials on $C$, and the fiber of  $S^2 T_{ {\SM_g}  ,-}^{\vee}$ is $S^2  H^0(C, K_C^{3/2})$.
The latter can be interpreted as the symmetric part (with respect to the involution that fixes the diagonal) of
$ \otimes^2 H^0(C, K_C^{3/2}) = H^0(C \times C, K_C^{3/2} \boxtimes  K_C^{3/2})$. We show that there is a natural line bundle $\OO(3,3,1)$ on $C \times C$ that fits into a short exact sequence:
\[
0 \to \OO(3,3,0) \to \OO(3,3,1) \stackrel {\Res}{\to}  K_C^2 \to 0,
\]
where $\OO(3,3,0) $ is the invertible sheaf on  $C \times C$ of sections of the line bundle 
$K_C^{3/2} \boxtimes  K_C^{3/2}$,  $\OO(3,3,1) $ is the invertible sheaf on  $C \times C$ of sections of the same line bundle $K_C^{3/2} \boxtimes  K_C^{3/2}$ that are allowed to have a first order pole along the diagonal, and $\Res$ is the restriction to the diagonal. These three sheaves have no higher cohomology, so the short exact sequence of sheaves gives a short exact sequence of spaces of global sections:
\[
0 \to H^0(  C \times C , K_C^{3/2} \boxtimes  K_C^{3/2} ) \to H^0( C \times C  ,\OO(3,3,1)) \stackrel {\Res}{\to}  H^0( C, K_C^2) \to 0,
\]

The involution of $C \times C$ acts on everything, and the invariant part gives the extension whose class is the obstruction class $\omega_{\SM_g}$. There is an analogous algebro-geometric interpretation of the  Atiyah class $\alpha_{\M_g}$ of the moduli space of ordinary Riemann surfaces. The proof is harder in this case, so we give it first, in section \ref{oyf}, and then give the parallel but easier proof  for the obstruction class in section \ref{sfg}. It is not clear whether these fit together to give an explicit description of the super 
 Atiyah class $\alpha_{\SM_g}$ of the moduli space of super Riemann surfaces.

In section \ref{4} we use this algebro-geometric interpretation of  the obstruction class of $\SM_g$, and its variant for super Riemann surfaces with a puncture,  to give a new proof of the non-projectedness of the moduli space $\SM_{g,1}$ of super Riemann surfaces with a puncture, for genus $g \geq 2$.

\def\X{{\mathcal X}}
\subsection{Orbifolds and Stacks} \label{vyhdfghvb}

We generally assume here a familiarity with the basics of super manifolds and super Riemann surfaces.   The background needed for our
purposes is explained in \cite{Not projected}, and in other references cited there.  In this paper, a super manifold is assumed to be defined over a field of characteristic
zero, which we will eventually take to be the complex numbers. 

One detail should perhaps be explained here. (See section 3.2.1 of \cite{Not projected} for a fuller explanation, and \cite{ER,Ra} for the relevant facts
about $\S\M_g$.)  A generic Riemann surface $\Sigma$ of genus $g\geq 3$ has no automorphisms. Because automorphisms
can occur, the moduli ``space'' of Riemann surfaces is best understood as an orbifold or a stack.  However, these  issues are more prominent
for super Riemann surfaces, because every super Riemann surface is infinitesimally close to a super Riemann surface with a $\Z_2$ group
of automorphisms.  Indeed, every split super Riemann surface $S=\Pi T^{1/2}C = S(C,T^{1/2})$ (where $C$ is an ordinary Riemann surface) has such an automorphism group, acting as $\{\pm 1\}$ on the fibers of the fibration $S(C,T^{1/2}) \to C$.  

Actually,
a spin bundle of $C$, understood as a line bundle $K^{1/2}\to C$ with an isomorphism $K^{1/2}\otimes K^{1/2}\cong K$, has a natural group
$\{\pm 1\}$ of automorphisms.  As a result, the spin moduli space $\S\M_g$ should be understood (even if we ignore the possibility that $\Sigma$
itself might have automorphisms) as an orbifold or stack with $\Z_2$ automorphism group and trivial action of $\Z_2$.  $\S\M_g$ parametrizes a family
$\X_g\to\S_g$
of genus $g$ curves $C$ each of which is endowed with a $K^{1/2}$ up to isomorphism, but there is no line bundle over $\X_g$ whose restriction to
each fiber is isomorphic to $K^{1/2}$.  Accordingly, the odd tangent bundle $T_-\SM_g\to \S\M_g$ of the super moduli space $\SM_g$, whose fiber at a point in 
$\S\M_g$ corresponding to $C$ is supposed to be $H^1(C,T^{1/2})$ where $T^{1/2}\cong K^{-1/2}$, does not exist as an ordinary vector bundle over $\S\M_g$.  It exists as a
$\Z_2$-twisted vector bundle, in other words a vector bundle twisted by a $\Z_2$ gerbe.

The reason that this need not concern us is that what enters the definition of the 
(first) obstruction class to splitting of $\SM_g$  is
not $T_-\SM_g$ but its second exterior power $\wedge^2 T_-\SM_g$.  This exists as an ordinary vector bundle over $\S\M_g$.  The obstruction
class involves an extension by $\wedge^2 T_-SM_g$ of $T_+\SM_g\to \S\M_g$, which is also an ordinary vector bundle (with fiber $H^1(C,T)$) .

Concretely, although there is no line bundle $\L\to \X_g$ that restricts to $K^{1/2}$ (up to isomorphism) on each fiber of $\X_g\to\S\M_g$, there is a line bundle
$\mathcal R \to \X_g\times_{\S\M_g}\X_g$ that restricts  to $K^{1/2}\boxtimes K^{1/2}$ on each fiber of $\X_g\times_{\S\M_g}\X_g\to \S\M_g$. 
(The fibers of $\X_g\times_{\S\M_g}\X_g\to \S\M_g$ are copies of $C\times C$.)
 The reason for this is simply that
the automorphism group $\{\pm 1\}$ of $K^{1/2}$ acts trivially on $K^{1/2}\boxtimes K^{1/2}$, so $\mathcal R$ exists locally up to unique isomorphism
and there is no problem to construct it globally.   It is convenient to simply denote $\mathcal R$
as $K^{1/2}\boxtimes K^{1/2}$, and we will do so in the text.   $\wedge^2 T_-\SM_g$ is the part of $H^2(C\times C,K^{1/2}\boxtimes K^{1/2})$
that is odd under the involution that exchanges the two factors of $C\times C$.

\section{Obstruction and Atiyah classes}\label{2}

\subsection{Obstruction class via differential operators}

\mbox {   }

Start with a supermanifold $S=(M,\OO_S)$, 
where $M$ is a manifold  and the sheaf of rings $\OO_S$ on $M$
is locally isomorphic to an exterior algebra over a vector bundle $V^\vee$ on $M$. 
The vector bundle $V$ can actually be recovered, globally, using 
the maximal nilpotent ideal $J \subset \OO_S$,
which is the sheaf of ideals generated by all elements of odd degree. 
The quotient $\OO_S / J$ is identified with $\OO_M$,
so $M$ is always a sub(super)manifold of $S$.
The dual of the bundle $V$ is recovered globally as the quotient $V^{\vee} := J/J^2$. 
The associated graded sheaf of rings of  $\OO_S$ with respect to $J$ 
is the  exterior algebra on $V^{\vee}$. 
It is the structure sheaf of a supermanifold $S(M,V)$, 
and we say that $S$ is split if it is isomorphic to $S(M,V)$. 

The first obstruction to splitting the supermanifold $S=(M,\OO_S)$,
or to finding a projection $S \to M$, 
is given \cite{ Manin, Not projected}  by a cohomology class
\[
 \omega = \omega _2 \in H^1 (M, (T_{+} \otimes \exterior{2}    {T_-}  ^{\vee}))
=\Ext^1(\exterior{2}   {T_-}   ,  T_{+} ),
\]
where $T_{\pm}$ denote the even and odd tangent bundles of $S$. (These are vector bundles on $M$, in fact $T_{+}$ is just the tangent bundle of $M$, and $T_{-}$ is the above $V$.) 
One way to see this is to note that the parity-respecting homomorphisms $\OO_M \to \OO_S/J^3$ form a torsor under the derivations of $\OO_M$ with values in $J^2/J^3 = \exterior{2}    {T_-}  ^{\vee} $.

Any such class $\omega$ can be interpreted as the extension class ${[} D_{\omega} {]} $  
of an extension of vector bundles on $M$:
\begin{equation} \label{D2}
0 \to  T_{+} \to D_{\omega} \to \exterior{2}   {T_-}    \to 0.
\end{equation}

In particular, this applies to  our first obstruction class $\omega$. 
We claim that the extension $D_{\omega}$ corresponding to  $\omega$ can be realized as 
a certain sheaf of differential operators along $S$. 
Let ${\D}_S$ denote the sheaf of differential operators on $S$,
and ${{\D}_S}_{|M}$ its restriction to $M$. 
Technically, both can be viewed as sheaves on $M$: 
${\D}_S$ is a sheaf of $\OO_S$ modules, while 
${{\D}_S}_{|M}$ is obtained by tensoring over $\OO_S$ with $\OO_M=\OO_S/J$,
i.e. dividing by $J$, so it is a sheaf of $\OO_M$ modules.
But it is more suggestive to say that ${\D}_S$ is a sheaf on $S$
while ${{\D}_S}_{|M}$ is a sheaf on $M$.
We can always identify the sheaf $\D_M$ of differential operators on $M$ as a subsheaf of ${{\D}_S}_{|M}$: 
if $A$ is a differential operator on $M$, 
its action on a function $f$ on $S$ is defined to be $A(f_{|M})$.
When $S=S(M,V)$ is split, we can also identify $\exterior{\bullet}   {V}$ as a subsheaf of  ${{\D}_S}_{|M}$,
acting by contractions. Putting these together, we can identify:
\begin{equation}\label{Scjghi}
{{\D}_{S(M,V)}} _{|M} =  \exterior{\bullet}   {V} \otimes _{\OO_M}  \D_M. 
\end{equation}

Let ${\D}^i_S$ denote the subsheaf of ${\D}_S$ consisting of differential operators of order $\leq i$ on $S$,
and let ${D}^i={D}^i_S$ denote its restriction to $M$, 
the sheaf on $M$ of differential operators of order $\leq i$ along $S$,
a subsheaf of ${{\D}_S}_{|M}$.
(The order of a differential operator is the maximal number of differentiations involved, 
with respect to both the even and the odd variables.)
Let ${D}_{-}^i={D}_{S,-}^i$ denote  
the subsheaf of differential operators of order $\leq i$ on $M$ along  $S$ 
whose symbol, or $i$-th order term, is purely odd, 
i.e. it is given in terms of local coordinates $x,\theta$ as a sum of terms of the form 
$f(x)\partial / \partial \theta_{k_1} \cdots \partial / \partial \theta_{k_i}$, 
with no $\partial / \partial {x_j}$ involved. 
Note that this is not the restriction to $M$ of any reasonable sheaf ${\D}_{S,-}^i$ on $S$: 
if we did not insist on restricting to $M$, we would allow more general expressions of the form
$f(x,\theta)\partial / \partial \theta_{k_1} \cdots \partial / \partial \theta_{k_i}$. 
When we change the choice of the local coordinates $x,\theta$, 
such terms can go to terms whose symbols involve the forbidden
$ \partial / \partial x$, but always with a coefficient in the nilpotent ideal $J$. 
The effect of restricting to $M$ is to make the definition of ${D}_{-}^i$ independent of 
the coordinates.
(We check the independence explicitly in \eqref{yfouyfo} below.)
The precise claim is that our extension $D_{\omega}$ can be identified as the sheaf
\begin{equation} \label{fuytfyrdyvdr}
\bar{D} := {D}_{-}^2/{D}_{-}^1
\end{equation}
 on $M$.
In local coordinates,  $  {D}_{-}^2$ is generated by terms written schematically as $1, \partial / \partial x, \partial / \partial \theta,    \partial ^2/ \partial \theta^2$. Similarly, $  {D}_{-}^1$ is generated by terms written schematically as $1, \partial / \partial  \theta$, so the quotient $\bar{D}$ is generated by terms written  as
$ \partial / \partial x,   \partial ^2/ \partial \theta^2$ and it defines an extension as in (\ref{D2}).

We could prove this claim directly by tracing through Manin's construction  \cite{Manin} of the class $\omega$. 
Instead, we will start with the description of $\omega$ in  \cite{Not projected}  as the obstruction to splitting the second order neighborhood of $M$ in $S$.

Recall that a supermanifold $S=(M,\OO_S)$ admits a natural increasing filtration
\[
S_{red} = S^{(0)} \subset S^{(1)} \subset \dots  \subset S^{(i-1)} \subset S^{(i)} \subset \dots \subset S^{(n)}  =S.
\] 
The $S^{(i)}$ are defined in terms of the
nilpotent ideal sheaf $J\subset \OO_S$:
\begin{equation}\label{Si}
S^{(i)} := (M, \OO_{S^{(i)}} =\OO_S/J^{i+1}).
\end{equation}
These $S^{(i)}$ are locally ringed subspaces of $S$, 
though they are not supermanifolds, except for the extremes $i=0,n$.  
(They are superanalogs of everywhere non-reduced schemes in ordinary algebraic geometry. ) 
Nevertheless, their sheaves of differential operators are well defined, and satisfy:
\begin{equation} \label{oiut}
{D}^i_{S^{(i)} }= {D}^i_S.
\end{equation}

By a {\em splitting} of the supermanifold $S$ we mean 
an isomorphism from the split supermanifold $S(M,V)$ to $S$
that induces the identity on both the underlying reduced space $M$ and the odd tangent bundle $V$.
The family of all splittings of $S$ is parametrized by 
$$\Sp(S) := \Isom_{M,V}(S(M,V),S).$$
This is a torsor over the group $\Sp(S(M,V))$.
A bit more generally, we have the notion of a splitting of  the superspace $S^{(i)}$, i.e. 
an isomorphism from $S(M,V)^{(i)}$ to $S^{(i)}$
that induces the identity on both  $M$ and $V$, and
the parameter space  $\Sp(S^{(i)})$ of all  such splittings, which is 
a torsor over the group \linebreak $\Sp(S(M,V)^{(i)})$.
Working locally over $M$, we get sheaves of groups 
\linebreak $\uSp(S(M,V)^{(i)})$ and torsors $\uSp(S^{(i)})$ over them. 
In particular for $i=2$, the sheaf of groups 
$\uSp(S(M,V)^{(2)})$ can be identified with the vector bundle
$T_{+} \otimes \exterior{2}    {T_-}  ^{\vee}$, while the torsor over it 
 $\uSp(S^{(2)})$ becomes an affine bundle modeled on 
$T_{+} \otimes \exterior{2}    {T_-}  ^{\vee}$,
the structure group being the group of translations.
The first obstruction $\omega_2$ to the splitting of $S$ was identified in \cite{Not projected} with the class of this affine bundle, or equivalently with the extension class of the sequence of vector bundles:
\begin{equation}\label{rdge}
0 \to T_{+} \otimes \exterior{2}    {T_-}  ^{\vee} \to X \stackrel {\pi} {\to} \OO_M \to 0
\end{equation}
from which $\uSp(S^{(2)})$ is recovered as $\pi^{-1}(1)$.

We need to connect extensions \eqref{D2} and \eqref{rdge}. Very generally, giving an extension
\[
0 \to A \otimes C \to X \to B \to 0
\]
is equivalent to  giving an extension
\[
0 \to A \to X' \to B \otimes C^{\vee} \to 0.
\]
Of course, both types of extension are given by elements of 
$\Ext^1(B,A \otimes C)$. Explicitly, we can go back and forth: $X'$ is recovered as
\[
X' := (X \otimes  C^{\vee} ) /(A \otimes \End_0(C)),
\]
where $\End_0$ denotes traceless endomorphisms; and conversely:
\[
X := \ker (X' \otimes C \to B \otimes \End_0(C)).
\]
We apply this with 
$A=T_{+}   $,
$B= \OO_M  $,
$C=  \exterior{2}   {T_{-} } ^{\vee}$,
$X'=\bar{D}$ as in \eqref{fuytfyrdyvdr}, 
and the $X$  in \eqref{rdge}. So we need to identify $X$ with
\[
 \ker (\ \bar{D} \otimes \exterior{2}   {T_{-}} ^{\vee} \to  \End_0(\exterior{2}   {T_{-} } ^{\vee}) \ ).
\]
This boils down to an appropriate map from $\uSp(S^{(2)})$ to 
$$\bar{D} \otimes \exterior{2} {T_{-} } ^{\vee} = \Hom( \exterior{2} {T_{-} }, \bar{D} ).$$ 
And indeed, given a local isomorphism $S(M,V)^{(2)} \to S^{(2)} $, 
we use \eqref{oiut} to identify $D^i_{S(M,V)}$ with $D^i_{S}$ for $i=1,2$, 
and get a corresponding identification of $\bar{D}_{S(M,V)}$ with $\bar{D}_{S}$. But by 
\eqref{Scjghi} there is a natural inclusion of $\exterior{2} {T_{-} }$ into ${{D}_{S(M,V)}}  $, 
which gives the required map from $\exterior{2} {T_{-} }$ to $\bar{D}_{S}$.

\subsection{Obstruction class via map spaces }

\mbox {   }

For a supermanifold $S$ and a super (i.e. $\ZZ/2$-graded)   Artinian $\CC$-algebra $B$, let 
\[
S(B):=  \Maps(\Spec(B),S)
\]
be the set of $B$-points of $S$, i.e. the set of all maps from $\Spec(B)$ to $S$.
This is a manifold, i.e. it is finite dimensional, smooth, and purely even.
Indeed, since the reduced space of $\Spec(B)$ is compact (it is in fact just a finite set of points),
this set of maps has a natural structure of a finite dimensional manifold. 
And since maps of supermanifolds, or superschemes, are by definition always even,
there is nothing odd in the structure sheaf of $S(B)$.
Our main example will be the ring $B_i =\CC[\eta_1, \ldots, \eta_i]$, where the $\eta_j$ are independent odd variables. 
Its spectrum $\Spec(\CC[\eta_1, \ldots, \eta_i]) = \AA^{(0|i)}$ is the odd affine space.
 We think of $S(B_i)$ as the space of $(0|i)$-dimensional deformations in $S$  of points of $M=S_{red}$. 
Clearly  $S(B_1)$ can be identified with the total space of  $T_-M$ 
(considered as an ordinary  vector bundle, not a super  vector bundle).

For $S(B_2)$ the situation is more interesting. 
It is a fiber bundle over $M$, whose fibers are isomorphic to vector spaces; 
but it is not a vector bundle,  only an \emph{affine} bundle. 
In more detail: 
Let $\pi: Y \to M$ be  the projection, where 
$Y$ is the total space of the (even)  vector bundle $T_{-} \oplus  T_{-} $.
There is a natural map (of fibre bundles over $M$):  
\linebreak $p: S(B_2) \to Y$.
This $p$ makes $S(B_2)$ into a $\pi^*T_+$-torsor, i.e.
an affine bundle over  $Y$, 
modeled on  the pullback vector bundle $\pi^*T_+$.

This is easy to see using local coordinates 
$(x| \theta) = (x_1, \dots, x_m | \theta_1, \dots,  \theta_n) $ on $S$. 
A map $\Spec(B_2) \to S$ is specified in terms of  {\em even} parameters $x^0, h, v^1,v^2$ by:
\begin{align*}
        x =& x^0 + \eta_1 \eta_2 h   \\
\theta =& \eta_1 v^1 + \eta_2 v^2,
\end{align*}
If we now change coordinates to:
\begin{align*}
   \tilde{x} =& f(x) + \theta_i \theta_j g^{ij}(x)  + \ldots  \\
 \tilde{\theta} =& \theta_i u^i(x)  + \ldots,
\end{align*}
where the $\theta_i$ are the components of $\theta$, we find the transformation formulas:
\begin{align}
 \notag  \tilde{x}^0 &= f(x^0)   \\
 \notag   \tilde{v}^a &= v^a u(x^0), \quad  a=1,2 \\
   \tilde{h} &= h f'(x^0)  + (v^1_i v^2_j - v^2_i v^1_j )  g^{ij}(x^0)  \label{h}, 
\end{align}
where the $v^a_i$ are the components of $v^a$. 
Since $f',u$ are the transition functions for $T_{\pm}$ respectively, 
we see that: $x^0$ describes a point of $M$; 
$x^0$ together with the two $v^a$  describe a point of  the total space $Y$ of $T_{-} \oplus  T_{-} $ over $M$; 
while $h$ lives in an affine bundle over this which is modeled on $\pi^*T_+$ as claimed.

In general, given a vector bundle $U$ over a space $Y$, the affine bundles on $X$ modeled on $U$ are parametrized by the cohomology group $H^1(Y,U)$. In our case, $Y$ is the  total space of $T_{-} \oplus  T_{-}$ over $M$, and $U$ is the pullback to $Y$ of $T_+$. Therefore, cohomology classes on $Y$ can be replaced by their direct image on $M$ via the projection map $\pi: Y \to M$:
\[
H^i(Y, U)  = H^i(Y, \pi^*T_+)  =  H^i(M, \pi_* \pi^*T_+)  = H^i(M,  \pi_* \OO_Y \otimes     T_+). 
\]
Since $Y$ is a vector bundle over $M$, the direct image sheaf $ \pi_* \OO_Y $ is the tensor algebra over the dual bundle.
So:
\[
H^i(Y, U)   = H^i(M,  \pi_* \OO_Y \otimes     T_+) 
= H^i(M,  \otimes^{\bullet}(T_{-} \oplus  T_{-})^{\vee}  \otimes     T_+).
\] 
In particular, the class of our affine bundle $S(B_2)$ on $Y$ can be pushed forward to a class on $M$, 
and the bilinear dependence of  (\ref{h}) on the $v$'s shows that this extension class appears in $\bullet=2$, 
in fact in the summand $H^1(M, \Hom(   \exterior{2}   {T_-} ,  T_+))$ of the direct image, 
which is the right place for the obstruction class. 
Geometrically, this means that our affine $T_+$-bundle on the  total space 
$Y$ of $T_{-} \oplus  T_{-}$ over $M$ is the pullback via the biliinear map
\[
T_{-} \oplus  T_{-} \to  \exterior{2}   {T_-} 
\]
of an affine $T_+$-bundle on the  total space of $ \exterior{2}   {T_-} $ over $M$. This bundle is still given by the transformation formula (\ref{h}), which is now interpreted as linear in $ \exterior{2}   {T_-} $. But this is immediately recognized as the transformation formula defining the differential operators ring $\bar{D} := {D}_{-}^2/{D}_{-}^1$
 from \eqref{fuytfyrdyvdr} in the previous section. We conclude that the class of the affine bundle involved in $S(B_2)$ and the differential operators extension class $[\bar{D}]$ are actually equal. 
Using the  local coordinates $x, \theta$ on $S$, a point of $S(B_2)$ which deforms a point of $M$ with a given $x^0$ is specified by $v^1, v^2,h$ as above, while $h$ and $v^1 \otimes v^2$ give the components of an element of $\bar{D}$ in $T_+M$ and $\exterior{2}   {T_-}$, respectively, according to \eqref{D2}. We summarize the conclusions of the last two sections in:
\begin{proposition}\label{interpretations of obstruction}
For any supermanifold $S$ with reduced space $M$, the following three classes in $H^1(M, \Hom(   \exterior{2}   {T_-} ,  T_+))$ are equal:
\begin{enumerate}
\item The first obstruction $\omega = \omega_2$ to the splitting of $S$;
\item The extension class $[\bar{D}]$  of the ring of differential operators $\bar{D} := {D}_{-}^2/{D}_{-}^1$;
\item The class of the deformation space $S(B_2)$ as an affine $T_+$-bundle over the total space $Y$ of $T_{-} \oplus  T_{-}$ over $M$. 
\end{enumerate}
\end{proposition}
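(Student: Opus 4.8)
The plan is to verify that the three classes coincide by exhibiting the identifications explicitly and checking compatibility, since almost all the work has already been done in the two preceding sections; the Proposition is essentially a bookkeeping statement collecting those results. First I would recall that the equality of (1) and (2) is the content of the first subsection: the obstruction class $\omega_2$ was identified in \cite{Not projected} with the class of the affine bundle $\uSp(S^{(2)})$, equivalently with the extension class of \eqref{rdge}, and then the general ``transpose'' correspondence between extensions $0\to A\otimes C\to X\to B\to 0$ and $0\to A\to X'\to B\otimes C^\vee\to 0$ (with $A=T_+$, $B=\OO_M$, $C=\exterior{2}{T_-}^\vee$) converts \eqref{rdge} into \eqref{D2} with $X'=\bar D$. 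Concretely one must check that the map $\uSp(S^{(2)})\to \Hom(\exterior{2}{T_-},\bar D)$ built from \eqref{oiut} and the inclusion \eqref{Scjghi} realizes precisely this transpose, i.e. that under $X=\ker(X'\otimes C\to B\otimes\End_0(C))$ the sub-bundle $\uSp(S^{(2)})=\pi^{-1}(1)$ is recovered. This is a local computation in coordinates $x,\theta$: a splitting $S(M,V)^{(2)}\to S^{(2)}$ is the data of the functions $g^{ij}(x)$ in the change of coordinates above, and one reads off that its image in $\bar D$ is the operator $\sum g^{ij}\partial^2/\partial\theta_i\partial\theta_j$, matching the extension \eqref{D2}.

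Second I would treat the equality of (2) and (3). Here the key is the coordinate description of $S(B_2)$: a $B_2$-point deforming $x^0\in M$ is given by the even parameters $v^1,v^2,h$, with transformation law \eqref{h}. The point is that \eqref{h} is \emph{bilinear and alternating} in $(v^1,v^2)$, so it factors through the map $T_-\oplus T_-\to\exterior{2}T_-$; the resulting affine $T_+$-bundle over the total space of $\exterior{2}T_-$ has transition/transformation rule $\tilde h = h f'(x^0) + w^{ij}g^{ij}(x^0)$ where $w^{ij}$ are the Plücker coordinates of a point of $\exterior{2}T_-$. This is literally the cocycle defining the extension $\bar D = D^2_{S,-}/D^1_{S,-}$, because $\partial^2/\partial\theta_i\partial\theta_j$ transforms under a change of coordinates $\tilde\theta = \theta_k u^k(x)$, $\tilde x = f(x)+\theta_i\theta_j g^{ij}(x)$ exactly by picking up the $g^{ij}$-term times $\partial/\partial\tilde x$ (plus terms that land in $D^1_{S,-}$ or in $J$ and hence die upon restriction to $M$); this is the computation labelled \eqref{yfouyfo} in the first subsection. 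So the two affine/extension classes on $M$ agree after push-forward along $\pi\colon Y\to M$ and passage to the summand $H^1(M,\Hom(\exterior{2}T_-,T_+))$ of $H^1(Y,\pi^*T_+)=H^1(M,\otimes^\bullet(T_-\oplus T_-)^\vee\otimes T_+)$, and the bilinearity pins down the bidegree-$2$, antisymmetric component.

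The main obstacle, such as it is, is not any single hard estimate but rather the careful tracking of the several dualizations and push-forwards so that all three classes are genuinely being compared in the \emph{same} group $H^1(M,\Hom(\exterior{2}T_-,T_+))$ with the \emph{same} sign and normalization: one must make sure that (i) the ``forward/backward'' dictionary $X'=(X\otimes C^\vee)/(A\otimes\End_0(C))$, $X=\ker(X'\otimes C\to B\otimes\End_0(C))$ is applied consistently, (ii) the identification $\pi^{-1}(1)\subset X$ matches the fiber over $1\in\OO_M$, and (iii) the coordinate cocycle \eqref{h} is read as an element of the correct $\Ext^1$ rather than its transpose or negative. Once these normalizations are fixed — most efficiently by computing everything in a single local chart $x,\theta$ and comparing the coefficient functions $g^{ij}(x)$ that appear in all three descriptions — the proof is complete, since the three objects are then manifestly defined by the same \v{C}ech $1$-cocycle $\{g^{ij}_{\alpha\beta}\}$ with values in $\Hom(\exterior{2}T_-,T_+)$ on an open cover $\{U_\alpha\}$ of $M$.
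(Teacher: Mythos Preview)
Your proposal is correct and follows essentially the same approach as the paper: the Proposition is indeed a summary of the two preceding subsections, with (1)${}={}$(2) established via the transpose correspondence between the extensions \eqref{rdge} and \eqref{D2} together with the map $\uSp(S^{(2)})\to\Hom(\exterior{2}T_-,\bar D)$ built from \eqref{oiut} and \eqref{Scjghi}, and (2)${}={}$(3) established by recognizing that the transformation law \eqref{h} for $S(B_2)$, being bilinear and alternating in $v^1,v^2$, factors through $\exterior{2}T_-$ and coincides with the cocycle defining $\bar D$. Your emphasis on pinning down the normalizations by computing everything in a single chart is exactly how the paper proceeds.
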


\subsection{Atiyah class}\label{2.3}

For supermanifolds, our Proposition \ref{interpretations of obstruction} gives the equivalence of three objects.
All three objects, and the proposition, have bosonic versions for a complex manifold $M$ with tangent bundle $T=T_M$, where the obstruction class is replaced by the Atiyah class \cite{At,K}. 

We start with the sheaf $\Conn(T)$ whose sections on an open $U \subset M$ consist of all (holomorphic) connections on $T_U$. This is the sheaf of sections of an affine bundle modeled on $\Hom(   \otimes^2 T, T)$. 
It contains the subsheaf $\Conn_{t\negthinspace f}(T)$ of torsion free connections. This in turn is the sheaf of sections of an affine bundle modeled on $\Hom(   \Sym^2 T, T)$. 
The bosonic analogue of the obstruction class $\omega$ is the {\em Atiyah class} of the tangent bundle,
 $\alpha_T \in H^1(M, \Hom(   \Sym^2 T, T))$, which is the obstruction to finding a global section of $\Conn_{t\negthinspace f}(T)$. (Alternatively, this can also be defined as the obstruction to finding a global section of $\Conn(T)$; 
a global connection determines uniquely its torsion-free part, which is a section of $\Conn_{t\negthinspace f}(T)$.
A priori  the obstruction to finding a global section of $\Conn(T)$ lives in $H^1(M, \Hom(   \otimes^2 T, T))$, but it is in fact in  the direct factor $H^1(M, \Hom(   \Sym^2 T, T))$: the piece in $H^1(M, \Hom(\exterior{2} T, T))$
vanishes because the corresponding $\Hom(\exterior{2} T, T)$-torsor is canonically trivialized by sending a connection to its torsion.)

Let $D^{ i}$ be the sheaf of differential operators of order ${\leq i}$ on $M$. 
We have a short exact sequence of locally free sheaves on $M$:
\begin{equation}\label{D2bos}
0 \to T \to D^{ 2} / D^{ 0} \to \Sym^2T \to 0,
\end{equation}
where we identify $ D^{ 1} / D^{ 0} $ with $T$ and $ D^{ 2} / D^{1} $ with $\Sym^2T$. 
We let $[D^2]$ be the extension class of this extension.

Finally, let 
\begin{equation} \label{xwae}
M(A_i) := \Maps(\Spec(A_i),M)
\end{equation}
be the space of $A_i$-valued points of $M$, where $A_i$ is
the (commutative) Artinian ring $\CC[\epsilon_1, \dots, \epsilon_i]/(\epsilon_i^2,\quad i=1,\dots,i)$. So $M(A_0) = M,  M(A_1)= T_M$, and $M(A_2)$ will be our bosonic analogue of $S(B_2)$. It is an affine $T$ bundle over the total space of $T \oplus T$. Proposition \ref{interpretations of obstruction} has a straightforward analogue:

\begin{proposition}\label{interpretations of Atiyah class}
 The following three classes in $H^1(M, \Hom(\Sym^2 T, T))$ are equal, for any complex manifold $M$:
\begin{enumerate}
\item The Atiyah class $\alpha_{T_M}$;
\item The extension class $[D^2]$ of \eqref{D2bos}, given by the sheaf of differential operators $ D^{ 2} / D^{ 0} $ on $M$;
\item The class of the deformation space $M(A_2)$ as an affine $T_+$-bundle over the total space $Y$ of $T_{-} \oplus  T_{-}$ over $M$. 
\end{enumerate}
\end{proposition}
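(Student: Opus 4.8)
The plan is to prove the three equalities by the same three-way comparison used for Proposition \ref{interpretations of obstruction}, under the dictionary that turns the super statement into the bosonic one: the odd coordinates $\theta$ are replaced by even square-zero coordinates, the parameter ring $B_i=\CC[\eta_1,\dots,\eta_i]$ by $A_i$, the exterior power $\exterior{2}T_-$ by $\Sym^2 T$, and $T_+$ by $T=T_M$. Concretely I would (i) identify the Atiyah class $\alpha_{T_M}$ with the extension class $[D^2]$ of \eqref{D2bos} by exhibiting a canonical isomorphism of $\Hom(\Sym^2 T,T)$-torsors between $\Conn_{t\negthinspace f}(T)$ and the sheaf of $\OO_M$-linear splittings of \eqref{D2bos}; and then (ii) identify $[D^2]$ with the class of the affine bundle $M(A_2)$ by a local-coordinate computation reproducing the transition law that defines $D^2/D^0$, exactly parallel to \eqref{h}.

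For step (i): given a torsion-free connection $\nabla$ on $T_U$, the assignment $X\cdot Y\mapsto X\circ Y-\nabla_X Y \bmod D^0_U$ is well defined because torsion-freeness gives $X\circ Y-\nabla_X Y=Y\circ X-\nabla_Y X$ already in $D^2$; it is $\OO_U$-bilinear because the Leibniz terms in $X\circ Y$ and in $\nabla_X Y$ cancel; hence it factors through $\Sym^2 T$, and since $\nabla_X Y\in D^1$ has trivial symbol it is a section of the symbol map $D^2/D^0\to\Sym^2 T$, i.e. a splitting of \eqref{D2bos}. Conversely, a splitting $\sigma$ yields the torsion-free connection $\nabla_X Y:=X\circ Y-\sigma(X\cdot Y)\in D^1/D^0=T$, and the two constructions are mutually inverse and equivariant for the common structure group $\Hom(\Sym^2 T,T)$. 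Since for locally free sheaves the class in $H^1(M,\Hom(\Sym^2 T,T))$ of the torsor of splittings of a short exact sequence is its extension class in $\Ext^1(\Sym^2 T,T)$, and since $\alpha_{T_M}$ is by definition the class of the torsor $\Conn_{t\negthinspace f}(T)$, this gives (1)$=$(2).

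For step (ii): in local coordinates $x=(x_1,\dots,x_m)$ on $M$, an $A_2$-point lying over $x^0\in U$ is specified by even parameters $v^1,v^2,h$ via $x=x^0+\epsilon_1 v^1+\epsilon_2 v^2+\epsilon_1\epsilon_2 h$; under a change $\tilde x=f(x)$, expanding to second order in the $\epsilon$'s gives $\tilde x^0=f(x^0)$, $\tilde v^a=f'(x^0)v^a$ for $a=1,2$, and $\tilde h=f'(x^0)h+f''(x^0)(v^1,v^2)$ with $f''$ the symmetric Hessian. Thus $x^0$ is a point of $M$, $(x^0,v^1,v^2)$ a point of the total space of $T\oplus T$, and $h$ an affine coordinate on a torsor under $\pi^*T$ over it; symmetry of $f''$ shows this torsor is pulled back along $T\oplus T\to\Sym^2 T$ from an affine $T$-bundle over the total space of $\Sym^2 T$ with transition law $\tilde h=f'(x^0)h+f''(x^0)(v^1\cdot v^2)$. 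This is precisely the transition law of the extension $D^2/D^0$ in \eqref{D2bos}, with $v^1\cdot v^2\in\Sym^2 T$ the symbol and $h\in T$ the subbundle coordinate (a second-order operator $\sum a^{ij}\partial_i\partial_j+\sum b^i\partial_i$ has $a\in\Sym^2 T$ transforming tensorially and $b\in T$ shifted by a term bilinear in $a$ and $f''$). Hence the class of $M(A_2)$ equals $[D^2]$, which together with step (i) proves the Proposition; the computation here is just the specialization of \eqref{h} under the super-to-bosonic dictionary, with the Hessian $f''$ in the role of the $g^{ij}$.

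The main obstacle, and really the only point that exceeds bookkeeping, is step (i): one must verify that the identification of $\Conn_{t\negthinspace f}(T)$ with the sheaf of splittings of \eqref{D2bos} is canonical (independent of the coordinates used to write a connection) and that the conventions are pinned down so that one obtains an equality of classes, not merely an equality up to sign. Torsion-freeness enters precisely where, in the super case, one invoked the vanishing of the antisymmetric/torsion component; in the bosonic case this is exactly what confines $\alpha_{T_M}$ to the summand $H^1(M,\Hom(\Sym^2 T,T))$ and makes the comparison with the symbol-symmetric sheaf $D^2/D^0$ possible.
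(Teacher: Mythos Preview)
Your proposal is correct. The paper in fact does not supply a separate proof for this proposition; it is stated as the bosonic analogue of Proposition~\ref{interpretations of obstruction}, whose proof is given in detail, and the reader is implicitly invited to transcribe that argument under exactly the dictionary you describe. Your step~(ii) is the verbatim bosonic version of the paper's local computation \eqref{h}: the Hessian $f''$ plays the role of the $g^{ij}$, and symmetry of $f''$ replaces the antisymmetry that sent the class to $\exterior{2}T_-$ in the super case.

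Your step~(i) differs slightly in emphasis from what the paper does for Proposition~\ref{interpretations of obstruction}. There the paper identifies the obstruction class with the torsor $\uSp(S^{(2)})$ and then passes to the $\bar D$ extension via the general algebraic equivalence between an extension $0\to A\otimes C\to X\to B\to 0$ and $0\to A\to X'\to B\otimes C^\vee\to 0$. You instead give a direct bijection between torsion-free connections and $\OO_M$-linear splittings of \eqref{D2bos}, which is the standard argument (essentially in \cite{At,K}) and is cleaner in the bosonic case where one has the explicit formula $\nabla\mapsto\bigl(X\cdot Y\mapsto X\circ Y-\nabla_XY\bigr)$. Both routes are valid; yours is more self-contained, while the paper's abstract version is what generalizes uniformly to the super setting. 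Your closing remark about pinning down the sign is well taken but not a genuine obstacle: the torsor isomorphism you wrote down is canonical, so the classes agree on the nose.
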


More generally, to a (holomorphic) vector bundle $V$ on a complex manifold $M$ is associated its Atiyah class:
\[
\alpha_V= \alpha_{M,V} \in H^1(M, T^{\vee} \otimes \End(V)  )
\]
which is the class of the sheaf $\Conn(V)$ whose sections on an open $U \subset M$ consist of all (holomorphic) connections on $V_U$. This is the sheaf of sections of an affine bundle modeled on $ T^{\vee} \otimes \End(V) $. In the special case that $V=T$, the notion of torsion free connections reduces the underlying vector bundle from 
$ T^{\vee} \otimes \End(T)  = T^{\vee} \otimes T^{\vee} \otimes T$  to 
$\Hom(   \Sym^2 T, T) =   \Sym^2 T^{\vee} \otimes T.$ Both the general case and the special case $V=T$ are studied in \cite{At} and reviewed succinctly in \cite{K}, where several additional interpretations of the Atiyah classes are given.
One of these involves the  sheaf $ D^{ 1} \otimes_{\OO} V^{\vee}$ of  first order differential operators 
from $ V$ to $\OO$. This fits into the short exact sequence:
\begin{equation}\label{D1bos}
0 \to V^{\vee} \to D^{ 1} \otimes_{\OO} V^{\vee} \to T  \otimes V^{\vee} \to 0.
\end{equation}
In case $V=T$, the symmetric part of this sequence matches sequence \eqref{D2bos}. 
A partial generalization of 
Proposition \ref{interpretations of Atiyah class} is:

\begin{proposition}\label{interpretations of Atiyah V class}
 The following classes in $H^1(M, T^{\vee} \otimes \End(V))$ are equal, for any complex manifold $M$ and holomorphic bundle $V$ on it:
\begin{enumerate}
\item The Atiyah class $\alpha_{M,V}$;
\item The extension class $[D^1 \otimes_{\OO} V^{\vee}]$ of \eqref{D1bos}, given by the sheaf of first order differential operators 
from $ V$ to $\OO$.
\end{enumerate}
\end{proposition}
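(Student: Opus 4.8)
The plan is to establish the equality of the two classes in Proposition~\ref{interpretations of Atiyah V class} by exhibiting a single sheaf-level isomorphism between the ``connection torsor'' and the ``differential operator'' descriptions, compatible with the linear structures, so that their classes in $H^1(M, T^\vee\otimes\End(V))$ coincide. First I would recall the standard Atiyah sequence. The sheaf of first-order differential operators $D^1(V)$ from $V$ to itself (equivalently, the jet description) sits in
\[
0 \to \End(V) \to D^1(V) \to T \to 0,
\]
and a splitting of this as $\OO_M$-modules (on a piece respecting the symbol) is precisely a connection on $V$; the class of this extension is $\alpha_{M,V}$ by definition (this is Atiyah's original formulation, \cite{At}). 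So the task reduces to identifying the sequence \eqref{D1bos} for $D^1\otimes_\OO V^\vee$ with a twist of this Atiyah sequence, or rather to matching extension classes after the standard ``going back and forth'' manipulation already recorded in the excerpt (the equivalence between an extension $0\to A\otimes C\to X\to B\to 0$ and $0\to A\to X'\to B\otimes C^\vee\to 0$, applied here with $A=\OO$, $B=T$, $C=V^\vee$, and $X'=D^1\otimes_\OO V^\vee$).

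The key steps, in order, would be: (1) Unwind the definition of $D^1\otimes_\OO V^\vee$, the sheaf of first-order differential operators from $V$ to $\OO_M$. Locally, such an operator sends a section $s$ of $V$ to $\sum_j a_j \partial_j(\langle \lambda, s\rangle) + \langle\mu, s\rangle$ for local cotangent-valued and $V^\vee$-valued data; the symbol map to $T\otimes V^\vee$ picks out the $a_j\,\partial_j$ part tensored against $\lambda$, and the kernel is the zeroth-order part $V^\vee$. This gives \eqref{D1bos}. (2) Verify the transition-function computation: changing a local frame of $V$ and local coordinates on $M$ transforms the first-order part of such an operator, and the failure of the symbol-lifting to be frame-independent is governed by a \v{C}ech $1$-cocycle valued in $T^\vee\otimes\End(V)$; this is the hands-on content of the equality with $\alpha_{M,V}$ (cf.\ the completely parallel argument leading to Proposition~\ref{interpretations of obstruction}, where the transformation formula \eqref{h} was matched to $\bar D$). (3) Compare with Atiyah's torsor $\Conn(V)$: a local connection gives, for each local frame, a lift of symbols, hence a local splitting of \eqref{D1bos}; conversely a local splitting is a connection. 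The difference of two local connections is an element of $T^\vee\otimes\End(V)$, matching the affine-bundle structure, so the two torsors (sheaf of local connections, and sheaf of local splittings of \eqref{D1bos}) are canonically isomorphic as torsors under $T^\vee\otimes\End(V)$, whence their classes agree.

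The main obstacle I anticipate is purely bookkeeping rather than conceptual: keeping straight the several natural identifications of $D^1\otimes_\OO V^\vee$ --- as first-order operators $V\to\OO$, as $D^1$ tensored with $V^\vee$, and (via the going-back-and-forth formula) as a twist of the honest Atiyah algebroid $D^1(V)$ --- and checking that the symbol maps in all these presentations are the \emph{same} map, so that no sign or transpose sneaks in when one dualizes $V^\vee$ against $\End(V)=V\otimes V^\vee$. Concretely, I would fix one presentation (first-order operators $V\to\OO$), compute one explicit transition cocycle, and then observe that it literally equals the cocycle computed from local connections on $V$; the cleanest route is to note that tensoring the tautological inclusion $V^\vee\hookrightarrow D^1\otimes_\OO V^\vee$ with $V$ produces $\End(V)\hookrightarrow (D^1\otimes_\OO V^\vee)\otimes V$, whose symbol quotient is $T\otimes\End(V)$, and the resulting sequence is visibly Atiyah's $0\to\End(V)\to D^1(V)\to T\to 0$ up to canonical identification --- at which point part (2) and the reference \cite{At,K} finish the proof. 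Since Proposition~\ref{interpretations of Atiyah V class} only asserts the equality of items (1) and (2) (and not a third ``deformation space'' item, because for a general bundle $V$ there is no analogue of $M(A_2)$), this is all that is required.
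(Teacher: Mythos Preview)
Your proposal is correct, but it does considerably more work than the paper does. The paper's entire proof of Proposition~\ref{interpretations of Atiyah V class} is a one-line citation: ``See \cite{K}, (1.1.4).'' In other words, the authors regard this as a standard fact and defer to Kapranov's exposition rather than reproving it.

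What you have written is essentially a reconstruction of the argument behind that citation: identify the Atiyah algebroid sequence $0\to\End(V)\to D^1(V)\to T\to 0$, recognize that \eqref{D1bos} is a twist of it via the ``going back and forth'' equivalence of extensions, and check that local splittings on either side are connections. This is the correct content, and your caution about bookkeeping (symbol maps, duals, signs) is well placed but not a genuine obstacle. So your approach is not different from the paper's in spirit --- it simply unpacks what the paper leaves to the reference. If anything, your sketch is more useful to a reader who does not have \cite{K} at hand.
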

\begin{proof}  See \cite{K}, (1.1.4).
\end{proof} 

\subsection{Super Atiyah class}\label{cghjjlkh}

So far, the analogy between our bosonic Proposition \ref{interpretations of Atiyah class} 
and the supermanifold version Proposition \ref{interpretations of obstruction} 
may seem somewhat  mysterious. 
We wil now see that the Atiyah class of $M$ and the obstruction class of the supermanifold $S$
are two (of the three) components of a larger object, the super Atiyah class of $S$ .

The definition of the Atiyah class, and the analogue of Proposition \ref{interpretations of Atiyah class},
extend immediately to the super world.
Let $S$ be a supermanifold with tangent bundle $T=T_S$. 
The (super) Atiyah class $\alpha_V$ of a  (super) vector bundle $V$ on $S$ is defined exactly as before: 
it is the class, in $H^1(S, \ T^{\vee} \otimes \End(V) \  ) $, of the sheaf $\Conn(V)$ 
whose sections on an open $U \subset M$ consist of all connections on $V_U$. 
As in the bosonic case, this is the sheaf of sections of an affine bundle modeled on $ T^{\vee} \otimes \End(V) $.
And again, in the special case that $V=T_S$, 
the notion of torsion free connections reduces the  vector bundle underlying  $\Conn(T)$  from 
$ T^{\vee} \otimes \End(T)  = T^{\vee} \otimes T^{\vee} \otimes T$  to 
$\Hom(   \Sym^2 T, T) =   \Sym^2 T^{\vee} \otimes T.$ 
(Here $T$ is a graded vector bundle and the symmetric product  is in the graded sense.) 

Similarly, we define the sheaf $\D^{ i}$ of differential operators of order ${\leq i}$ on $S$,
and  the extension class $[\D^2]$ of the  extension corresponding to \eqref{D2bos}.

Finally, let $A$ be a graded Artinian algebra. 
We want to define the supermanifold $\wt{S}(A)$ of $A$-valued points of $S$.
It should parametrize superfamilies over $A$ of points of $S$. 
The first guess might  be the space $S(A) := \Maps(\Spec(A),S)$ of $A$-valued points of $S$.
This is wrong: it is a set, and has a natural structure of a manifold, 
but we want a supermanifold that keeps track of both even and odd aspects of $S$. 
The problem is that maps of supermanifolds are by definition always even,
so $S(A) = \Maps(\Spec(A),S)$ gives only the reduced space of the desired $\wt{S}(A)$. 
Instead, we need to consider what amounts to 
internal-$\Hom$ (or internal-$\Maps$) in the category of supermanifolds.
Namely, $\wt{S}(A)$ is defined to be the supermanifold $\underline{\Maps}(\Spec(A),S)$ representing 
the {\em functor} of  $A$-valued points of $S$. 
The defining property is that for every supermanifold $X$ there should be a natural identification:
\begin{equation} \label{xsdfg}
\Maps(X \times \Spec A, S) = \Maps(X,  \underline{\Maps}(\Spec(A),S)).
\end{equation}
This property is strong enough that any two candidates for $\underline{\Maps}(\Spec(A),S)$ are naturally identified. 
In particular, it gives a natural way to build $\wt{S}(A)=\underline{\Maps}(\Spec(A),S)$ by glueing $\wt{S_1}(A)$ and $\wt{S_2}(A)$ 
whenever $S$ is the union of open subsets $S_1$ and $S_2$,
by identifying the restrictions to $\S_1 \cap S_2$.
So it suffices to construct $\wt{S}(A)$ when $S$ is affine (or Stein). 
When our $S$ happens to be a super vector space $W$, we can take 
$\underline{\Maps}(\Spec(A),W)$ to be $A \otimes W$,
the tensor product taken as Z/2-graded vector spaces. 
In general, the affine $S$ is defined in some super vector space $W$ by an ideal $I$. 
We then take $\underline{\Maps}(\Spec(A),S)$ to be the subspace of 
$\underline{\Maps}(\Spec(A),W)$
defined by the ideal:
$\wt{I} := (f \otimes r | f \in A^*, r \in I),$
with $A^*$ the super vector space of linear functions on $A$.

We are interested in $\wt{S}(A_i)$, the supermanifold of $A_i$-valued points of $S$, 
where as before  $A_i$ is the (purely even) Artinian ring 
$\CC[\epsilon_1, \dots, \epsilon_i]/(\epsilon_i^2,\quad i=1,\dots,i)$. 
So $\wt{S}(A_0) = S$,  and we see from the construction above that 
$\wt{S}(A_1)= T_S$, the total (super)space of the tangent bundle of $S$. 
Finally, $\wt{S}(A_2)$ will be our super version of $M(A_2)$. 
It is an affine $T_S$ bundle over the total (super)space of $T_S \oplus T_S$. 
The extension of Proposition \ref{interpretations of Atiyah class} to supermanifolds is trivial:

\begin{proposition}\label{interpretations of super Atiyah class}
 The following three classes in $H^1(S, \Hom(\Sym^2 T_S, T_S))$ are equal, for any complex supermanifold $S$:
\begin{enumerate}
\item The super Atiyah class $\alpha_{T_S}$;
\item The extension class  $[\D^2]$ of the sheaf of differential operators $ \D^{ 2} / \D^{ 0} $ on $S$;
\item The class of the deformation space $\wt{S}(A_2)$ as an affine $T_S$-bundle over the total space $Y$ of $T_{S} \oplus  T_{S}$ over $S$. 
\end{enumerate}
\end{proposition}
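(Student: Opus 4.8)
The plan is to reduce Proposition~\ref{interpretations of super Atiyah class} to its bosonic counterpart, Proposition~\ref{interpretations of Atiyah class}, by observing that every construction entering the statement is defined purely in terms of the structure sheaf $\OO_S$, its sheaf of differential operators, and tensor/symmetric-power operations on coherent sheaves — all of which make sense verbatim over the $\ZZ/2$-graded base ring. Concretely, I would first record that the three objects to be compared are: (i) the affine bundle $\Conn_{t\negthinspace f}(T_S)$ of torsion-free connections on $T_S$, modelled on $\Hom(\Sym^2 T_S, T_S)$ in the graded sense; (ii) the extension $0\to T_S\to \D^2/\D^0\to \Sym^2 T_S\to 0$, where $\D^1/\D^0\cong T_S$ and $\D^2/\D^1\cong\Sym^2 T_S$ (again graded symmetric powers, so the symbol map lands in graded $\Sym^2$); and (iii) the class of $\wt S(A_2)$ as a $\pi^*T_S$-torsor over the total space $Y$ of $T_S\oplus T_S$. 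The claim is that all three live in $H^1(S,\Hom(\Sym^2 T_S,T_S))$ and coincide.

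The key steps, in order. \emph{Step 1:} Verify that the three cohomological homes agree. The obstruction to a global torsion-free connection naturally lives in $H^1(S,\Hom(\otimes^2 T_S,T_S))$, but the torsion map canonically trivialises the $\Hom(\exterior{2}T_S,T_S)$-component, so it descends to $H^1(S,\Hom(\Sym^2 T_S,T_S))$; here $\exterior{2}$ and $\Sym^2$ are the graded-skew and graded-symmetric summands of $\otimes^2 T_S$, and the decomposition $\otimes^2 T_S=\Sym^2 T_S\oplus\exterior{2}T_S$ holds over a characteristic-zero base (we are over $\CC$), so the splitting argument from \S\ref{2.3} applies unchanged. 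For (iii), the transformation formula analogous to \eqref{h} — now with $x$ and $\theta$ both allowed to be even or odd — is bilinear in the two tangent vectors and is visibly graded-symmetric in them, so the affine-bundle class sits in $H^1$ of the graded $\Sym^2$-summand. \emph{Step 2:} Identify (i) with (ii). This is the super analogue of the classical fact that the Atiyah class of $T_M$ equals the class of $\D^2/\D^0$; one checks, in a local chart with coordinates $(x|\theta)$, that a local splitting of $\D^2/\D^0$ — equivalently a choice of $\OO_S$-linear section $\Sym^2 T_S\to\D^2/\D^0$ — is exactly a local torsion-free connection (a second-order operator with prescribed symbol acts as $\nabla$ composed with symbolisation), and the two Čech $1$-cocycles measuring the failure to glue agree. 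The bookkeeping is identical to \cite{At,K} except that one must consistently use the Koszul sign rule; since the target of the symbol map is the graded symmetric power and differentiation in the $\theta$'s is graded-commutative, no genuinely new phenomenon appears. \emph{Step 3:} Identify (ii) with (iii). A point of $\wt S(A_2)$ over a point $p\in S$ is an algebra homomorphism $\OO_{S,p}\to \OO_{S,p}\otimes A_2$ reducing to the identity mod $(\epsilon_1,\epsilon_2)$ — this is the data of two tangent vectors (the $\epsilon_i$-linear parts) plus a "second-order correction" (the $\epsilon_1\epsilon_2$-part), which under a coordinate change transforms precisely by the cocycle of $\D^2/\D^0$; spelling this out in the chart $(x|\theta)$ reproduces the super analogue of \eqref{h} with both types of coordinate, and this is, tautologically, the transition law for $\D^2/\D^1$-lifts. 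So the torsor class of $\wt S(A_2)\to Y$ pushes forward on $M$ — better, is simply defined on $S$ — to $[\D^2]$.

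The main obstacle is bookkeeping rather than conceptual: making sure at every step that the graded symmetric (as opposed to exterior) power is the one that appears, and that the sign conventions in the Koszul rule are handled consistently when one passes from the $\otimes^2$-valued cocycle to its $\Sym^2$-component and when one commutes odd derivations past odd coordinates. In particular, Step~1's assertion that the $\exterior{2}T_S$-part of the connection obstruction is canonically killed by torsion must be checked to survive the grading — it does, because torsion is an honest $\OO_S$-linear map $\exterior{2}T_S\to T_S$ (graded-skew in its arguments) and subtracting it is a well-defined trivialisation of the relevant torsor — but it is the one place where a careless sign would break the argument. Everything else is a routine transcription of Proposition~\ref{interpretations of Atiyah class} and of the standard Atiyah-class identifications, valid because none of those arguments used commutativity of the base in an essential way, only its characteristic.
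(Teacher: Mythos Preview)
Your proposal is correct and follows exactly the approach the paper takes: the paper's entire ``proof'' is the single remark that ``the extension of Proposition~\ref{interpretations of Atiyah class} to supermanifolds is trivial,'' i.e.\ all the constructions (torsion-free connections, the symbol sequence for $\D^2/\D^0$, the deformation space over $A_2$) are formulated purely in terms of the structure sheaf and make sense verbatim in the $\ZZ/2$-graded setting, with $\Sym^2$ understood in the graded sense. Your Steps~1--3 simply spell out this transcription with the appropriate Koszul-sign bookkeeping, which is more detail than the paper gives but is the same argument.
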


More generally, for any vector bundle $V$ on our supermanifold $S$ we have the obvious analogue of 
Proposition  \ref{interpretations of Atiyah V class}. When $V=T$, the extension classes in 
Proposition  \ref{interpretations of Atiyah V class}(2) and 
Proposition \ref{interpretations of super Atiyah class}(2) agree.

The interesting new feature arises when we restrict this super Atiyah class back to the reduced space $M$. 
The decomposition of the tangent bundle:
\[
(T_S)_{|M} = T_+ \oplus T_-
\]
causes the cohomolohy group to decompose as the sum of three pieces:
\begin{align}
&H^1(M, \Hom(\Sym^2 T_S, T_S)) =   \label{pieces}
\\
\notag H^1(M, \Hom( \Sym^2 T_+,T_+) ) & \oplus 
H^1(M,  \Hom(\exterior{2}T_{-},T_+ ) ) \oplus  
H^1(M,   \Hom(T_+ \otimes T_-, T_- )  ). 
\end{align}
The point is that we can identify the three components of the super Atiyah class:

\begin{theorem}\label{components of super Atiyah class}
 The three components of the super Atiyah class $\alpha_{T_S}$ of a supermanifold
$S=(M,\OO_S)$ under the decomposition \eqref{pieces} are:
\begin{enumerate}
\item The Atiyah class $\alpha_{T_M}$ of the reduced manifold $M$;
\item The obstruction class  $\omega(S)$ to splitting $S$;
\item The Atiyah class $\alpha_{T_-}$ of the vector bundle $V=T_-$ on $M$.
\end{enumerate}
\end{theorem}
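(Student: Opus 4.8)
The strategy is to compute the super Atiyah class using its realization as the extension class $[\D^2_S/\D^0_S]$ from Proposition \ref{interpretations of super Atiyah class}(2), restrict the short exact sequence $0\to T_S\to \D^2_S/\D^0_S\to \Sym^2 T_S\to 0$ to $M$, and match the resulting extension pieces against the three sheaves of differential operators that were already shown (in Proposition \ref{interpretations of obstruction} and Proposition \ref{interpretations of Atiyah V class}) to carry the obstruction class and the two bosonic Atiyah classes. Concretely, I would tensor the sequence \eqref{D2bos}-for-$S$ with $\OO_M=\OO_S/J$ over $\OO_S$; since all three terms are locally free, this stays exact, and one gets
\[
0 \to (T_S)_{|M} \to (\D^2_S/\D^0_S)_{|M} \to (\Sym^2 T_S)_{|M} \to 0.
\]
Then $(T_S)_{|M}=T_+\oplus T_-$ and $(\Sym^2 T_S)_{|M}=\Sym^2 T_+ \oplus (T_+\otimes T_-) \oplus \exterior{2}T_-$ (the graded symmetric square), so the extension class lands in the three-summand group \eqref{pieces}, and the task is to show the three matrix entries of the extension are the claimed classes.

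The computation is organized by the parity-and-order bigrading on differential operators. Working in local coordinates $(x|\theta)$, the sheaf $\D^2_S$ restricted to $M$ is spanned over $\OO_M$ by monomials $1$, $\partial_{x_j}$, $\partial_{\theta_k}$, $\partial_{x_i}\partial_{x_j}$, $\partial_{x_i}\partial_{\theta_k}$, $\partial_{\theta_k}\partial_{\theta_l}$. Quotienting by $\D^0$ and then decomposing the symbol (second-order part) by parity:
\begin{enumerate}
\item the purely even, purely bosonic sub-block — operators in $1,\partial_x,\partial_x^2$ — is exactly $D^2_M/D^0_M$, whose extension class is $\alpha_{T_M}$ by Proposition \ref{interpretations of Atiyah class}(2);
\item the block whose symbol is $\partial_\theta \partial_\theta$ — which modulo lower order is $\bar D = D^2_{S,-}/D^1_{S,-}$ — carries the obstruction class $\omega(S)$ by Proposition \ref{interpretations of obstruction}(2), using the identification \eqref{fuytfyrdyvdr};
\item the mixed block, with symbol $\partial_x\partial_\theta$, is a first-order-in-$x$, first-order-in-$\theta$ operator; restricted to $M$ it is precisely the sheaf $D^1_M\otimes_{\OO_M} T_-^\vee$ of first-order differential operators from $T_-$ to $\OO_M$, whose extension class \eqref{D1bos} is $\alpha_{M,T_-}$ by Proposition \ref{interpretations of Atiyah V class}(2).
\end{enumerate}
For each block I would exhibit the identification of sheaves canonically (coordinate-independently), then invoke the cited proposition; the point is that these are the \emph{same} short exact sequences, not merely sequences with the same class. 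The off-diagonal contributions to the extension class — e.g. whether the even second-order block talks to the odd block under a change of $(x|\theta)$ — must be checked to vanish or to be absorbed; this is where one uses that the decomposition $(T_S)_{|M}=T_+\oplus T_-$ is a genuine direct sum of $\OO_M$-modules (the nilpotents having been killed), so $\Hom$ between pieces of different parity-type that would mix them does not appear in \eqref{pieces}.

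The main obstacle I anticipate is bookkeeping the passage from $\D^2_S$ (a sheaf on $S$, i.e. an $\OO_S$-module) to its restriction $(\D^2_S)_{|M}$ and making sure the symbol filtration is compatible with this restriction — in particular that $(\D^2_S/\D^1_S)_{|M} = (\Sym^2 T_S)_{|M}$ and that the mixed and odd sub-blocks of $(\D^2_S)_{|M}$ are coordinate-independent sub/quotient sheaves (the latter is the content of the already-promised check \eqref{yfouyfo}). A secondary subtlety is the graded-symmetry convention: $\Sym^2$ of a graded bundle is $\Sym^2 T_+ \oplus (T_+\otimes T_-)\oplus \exterior 2 T_-$, so the odd directions contribute an \emph{exterior} square — one must confirm this is consistent with the appearance of $\exterior{2}T_-$ in the obstruction group and with the sign conventions in the definition of $\bar D$. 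Once the three identifications of short exact sequences are in place, the theorem follows because extension classes are natural under these identifications and the ambient cohomology group splits as \eqref{pieces} with no cross terms.
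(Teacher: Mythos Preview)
Your proposal is correct and follows essentially the same route as the paper: use the differential-operator realization from Proposition~\ref{interpretations of super Atiyah class}(2), restrict $\D^2_S/\D^0_S$ to $M$, split it into the three parity blocks $\D^2_+$, $\D^2_-$, $\D^2_\pm$, verify via the transformation law (this is exactly the computation \eqref{yfouyfo} you flagged) that these blocks are invariant modulo $J$, and then identify each block's extension class via Propositions~\ref{interpretations of Atiyah class}, \ref{interpretations of obstruction}, and \ref{interpretations of Atiyah V class} respectively. The paper also briefly contemplates the map-space approach via $\wt S(A_2)$ before abandoning it as more cumbersome, which is consistent with your choice.
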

\begin{proof}
We could use part (3) of Proposition \ref{interpretations of super Atiyah class}
to replace $\alpha_{T_S}$ by the map space $\wt{S}(A_2)$.  
We recover $M(A_2)$ as the reduced space of $\wt{S}(A_2)$:
\begin{align*}
\wt{S}(A_2)_{red} &= \Maps(point, \wt{S}(A_2))  \\ 
                           &= \Maps(\Spec(A_2),S) \qquad  &&{\text {(this is a  special case of \eqref{xsdfg})}} \\
                           &= \Maps(\Spec(A_2),M)  \qquad && {\text {(since $A_2$ is even and $M=S_{red}$ )}} \\
                           &= M(A_2)  \qquad  &&{\text {(by the definition \eqref{xwae}).}}
\end{align*}
We recall that  $\wt{S}(A_2)$ is an affine $T_S$ bundle over the total space of $T_S \oplus T_S$. 
When this is restricted to $M$, $T_S$ decomposes as $T_+ \oplus T_-$. 
The affine bundle splits into four pieces (the even ones, i.e. those involving an even number of $T_-$'s).
The two mixed pieces among these coincide, and can be identified with $\alpha_{T_-}$. 
The two other  pieces can  be identified with the map spaces 
$M(A_2) = S(A_2) = \wt{S}(A_2)_{red}$ and $S(B_2)$, respectively. 
The details become somewhat complicated, partly due to the interpretation of the mixed piece,
so we follow a different route.

Instead, we use part (2) of Proposition \ref{interpretations of super Atiyah class}
to replace $\alpha_{T_M}$ by the  extension class  $[\D^2]$ of 
the sheaf of differential operators $ \D^{ 2} / \D^{ 0} $ on $S$ occurring in \eqref{D2bos}.
In this language, the analogue of the decomposition \eqref{pieces} involves the three invariant submodules
$ \D^{ 2}_+, \D^{ 2}_-, \D^{ 2}_{\pm}$ of $\D^{ 2}$:

\begin{align*}
 \D^{ 2}_+ \cong \OO_M< \partial^2/ (\partial x)^2, \partial/\partial x >\\
 \D^{ 2}_- \cong \OO_M< \partial^2/ (\partial \theta)^2, \partial/\partial x > \\
 \D^{ 2}_{\pm} \cong \OO_M< \partial^2/ \partial x \partial \theta, \partial/\partial \theta >.
\end{align*}
Their invariance on $M$, i.e. modulo the nilpotent ideal $J$, under a coordinate change:
\begin{align*}
\wt{x} = \wt{x}(x,\theta) \\
\wt{\theta}  = \wt{\theta}(x,\theta) 
\end{align*}
follows immediately from the fact that the off-diagonal coefficients  in the transformation laws:
\begin{align*}
\partial/\partial x_i = {\partial \wt{x}_{\wt{i}} /\partial x_i } \ {  \partial  /\partial \wt{x}_{\wt{i}}  } 
+ {\partial \wt{\theta}_{\wt{j}} /\partial x_i } \ {  \partial  /\partial \wt{\theta}_{\wt{j}}  } \\
\partial/\partial \theta_j = {\partial \wt{x}_{\wt{i}} /\partial \theta_j } \ {  \partial  /\partial \wt{x}_{\wt{i}}  } 
+  {\partial \wt{\theta}_{\wt{j}}/\partial \theta_j } \ {  \partial  /\partial \wt{\theta}_{\wt{j}} } 
\end{align*}
are odd, hence in $J$. So for example to see the invariance of $\D^{2}_- $, we note that:
\begin{equation} \label{yfouyfo}  
\begin{aligned}
&\frac {\partial^2} {\partial \theta_1 \partial \theta_2} =  \\
& (\frac { \partial \wt{x}_{\wt{i}}} {\partial \theta_1 } \ \frac { \partial} {\partial \wt{x}_{\wt{i}}  } 
+  \frac{\partial \wt{\theta}_{\wt{j}}} {\partial \theta_1 } \ \frac{  \partial} {\partial \wt{\theta}_{\wt{j}} } )
 (\frac{ \partial \wt{x}_{\wt{i'}}} {\partial \theta_2 }  \frac{  \partial} {\partial \wt{x}_{\wt{i'}}  } 
+ \frac {\partial \wt{\theta}_{\wt{j'}}}  {\partial \theta_2 } \ \frac{  \partial} {  \partial \wt{\theta}_{\wt{j'}} }) = \\
&\frac {\partial \wt{\theta}_{\wt{j}}} {\partial \theta_1 } \  \frac{\partial \wt{\theta}_{\wt{j'}}} {\partial \theta_2 } 
\ \frac {\partial^2} {  \partial \wt{\theta}_{\wt{j}}   \partial \wt{\theta}_{\wt{j'}} } +
\frac{\partial \wt{\theta}_{\wt{j}}} {\partial \theta_1 }   \frac{ \partial^2  \wt{x}_{\wt{i'}}} { \partial \wt{\theta}_{\wt{j}} \partial \theta_2}  \frac{\partial } { \partial  \wt{x}_{\wt{i'}}}  + 
\text{6 more terms in }  J. 
\end{aligned}
\end{equation}
It is now clear that the extension in the super version of \eqref{D2bos}, which represents the super Atiyah class, splits as a direct sum of three extensions, each involving one of 
the three invariant subrings 
$ \D^{ 2}_+, \D^{ 2}_-, \D^{ 2}_{\pm}$ of $\D^{ 2}$.
The identification of the corresponding extensions with the three summands in \eqref{pieces}
now follows from Proposition \ref{interpretations of  Atiyah class}, Proposition \ref{interpretations of  obstruction}, 
and Proposition \ref{interpretations of  Atiyah V class} for $T_-$, respectively.

\end{proof}

\section{Obstruction and Atiyah classes for moduli spaces} \label{3}

In this section we specialize the previous results to obtain  concrete cohomological descriptions 
of the obstruction and Atiyah classes of moduli spaces of (super) Riemann surfaces. 
We start in \ref{3,3,1}  by describing a useful class of short exact sequences on the product $C \times C$. 
In the next two sections we use it to describe the obstruction and Atiyah classes of moduli spaces. 
The proof for the Atiyah class is given in detail in section \ref{oyf}.
The proof for the obstruction class is similar though somewhat easier; it is outlined in section \ref{sfg}.
Ideally, there should be an analogous expression for the super Atiyah class as well, 
from which the two previous results should follow. 
For now, this remains open.

Our aproach is as follows. As we saw in section \ref{2}, the first obstruction to the splitting  of  a supermanifold $S=(M,\OO_S)$ is a class
\begin{equation}
\label{omega} 
\omega \in H^1(M, \Hom ( \exterior{2}T_-,T_+)).
\end{equation}
We may interpret any such class as the extension class of a short exact sequence of vector bundles on $M$:
\begin{equation} \label{E}
0 \to T_+ \to E \to \exterior{2}T_- \to 0
\end{equation}
or equivalently of the dual sequence:
\begin{equation}\label{E*}
0 \to \exterior{2}T_{-}^* \to E^* \to T_{+}^* \to  0.
\end{equation}
In  section  \ref{sxdfgh} we give an algebro-geometric interpretation 
of the bundles $E, E^*$ and the sequences \eqref{E},\eqref{E*} 
in the case that $S =\SM_g, M=\sM_g.$
A similar general description as extension class applies to the Atiyah class of a manifold.
We give the corresponding algebro-geometric interpretation 
for the Atiyah class of the moduli space $\M_g$ of Riemann surfaces
in section  \ref{Bosonic detour}.

\subsection{Short exact sequences on $C \times C$}\label{3,3,1} 
\mbox {   }

 We fix a spin curve $(C,T_C^{1/2}) \in \sM_g$. Given integers $a,b,c$, we consider the line bundle $\OO(a,b,c)$ on the surface $C \times C$:
\begin{equation}
\label{abc} 
\OO(a,b,c) := {{p_1}^* {K_C}^{\otimes a/2} } \otimes  { {p_2}^* {K_C}^{\otimes b/2} }    
\otimes   {\OO_{C \times C} (c \Delta)},
\end{equation}
where $p_1, p_2 : C \times C  \to C$ are the projections, and $\Delta$ is the diagonal. Restriction to the diagonal gives our basic short exact sequence:
\begin{equation} \label{SES}
0 \to \OO(a,b,c-1) \to \OO(a,b,c) \stackrel{\Res} {\to}
{( K_C)} ^ {\otimes (a+b-2c)/2} \to 0.
\end{equation}
This is a sequence of coherent sheaves on $C \times C$. The first two sheaves are line bundles. The third is a line bundle on the diagonal, interpreted as a sheaf on  $C \times C$ supported on  the diagonal. 
The map denoted $\Res$ can be interpreted either as a restriction to the diagonal, or as a residue. 

When $a=b$, we are going to decompose sequence \eqref{SES} under the action of an involution.
Let $z$ be a local coordinate on $C$, 
and let $x:=p_1^*(z), y:=p_2^*(z)$ be the corresponding local coordinates on  $C \times C$. 
We will need to identify 
${p_1}^* {K_C}^{\otimes a/2}  \otimes   {p_2}^* {K_C}^{\otimes b/2} $
with
$  {p_2}^* {K_C}^{\otimes b/2}    \otimes   {p_1}^* {K_C}^{\otimes a/2} $. 
This involves a more or less arbitrary choice of sign.
We will stick with the usual rule of signs: 
\begin{equation}\label{ffdncvc}
{dx^{\otimes \frac{a}{2}} dy ^{\otimes \frac{b}{2}}} \mapsto
(-1)^{ab}{dy ^{\otimes \frac{b}{2}} dx^{\otimes \frac{a}{2}} }.
\end{equation}
(Some formulas below would actually be simpler if we omit the 
$(-1)^{ab}$ sign. We keep it to emphasize the fermionic nature of half differentials.)
Having made this choice, the natural involution $(x,y) \to (y,x)$ of $C \times C$ 
then sends $ \OO(a,b,c)  $ to  $ \OO(b,a,c)  $, so it 
acts on $\OO_{C \times C}(a,a,c)$ and on its cohomology; 
we indicate the eigenspaces with $\pm$ superscripts. 
Locally, a section is of the form 
\begin{equation} \label{explicit}
\varphi=f(x,y)\frac{dx^{\otimes \frac{a}{2}} dy ^{\otimes \frac{a}{2}}} {(x-y)^{c} },
\end{equation}
with holomorphic $f$. 
It is even under the involution if the parity of the function $f$ is the same as that of the integer $a-c$,
while if these parities are opposite the section is odd under the involution.
The involution acts on sequence \eqref{SES}$_{aac}$, which breaks as a sum of its even and odd pieces.
Note that the third term $K_C^{a-c} $ has pure sign $(-1)^{a-c}$, since
if $f$ in \eqref{explicit} is antisymmetric then $\Res (\varphi) =0$. So sequence 
\eqref{SES}$_{aac}$ breaks as a sum of its $(-1)^{a-c}$ subsequence:
\begin{equation}\label{abc+}
0 \to \OO(a,a,c-1)^{(-)^{a-c}} \to \OO(a,a,c)^{(-)^{a-c}}   \stackrel{\Res} {\to}   K_C^{a-c} \to 0
\end{equation}
and its {\em trivial}  $(-1)^{a-c-1}$ subsequence:
\begin{equation}\label{abc-}
0 \to \OO(a,a,c-1)^{(-)^{a-c-1}} \to \OO(a,a,c)^{(-)^{a-c-1}}   \stackrel{\Res} {\to}  0  {\to}  0.
\end{equation}

\subsection{A cohomological interpretation of $\mathbf{\omega_{\SM_g  }}$}\label{sxdfgh}
\mbox {   }

Consider the sequence  {\eqref{SES}}$_{3,3,1}$ :
\begin{equation}\label{331}
0 \to \OO(3,3,0) \to \OO(3,3,1) \to  K_C{^2} \to 0.
\end{equation}
The three sheaves occurring here have no higher cohomology,  so the long exact sequence is:
\begin{equation} \label{LES331}
0 \to H^0(  C{\times}C,  \OO(3,3,0) )\to H^0(  C{\times}C,  \OO(3,3,1) )  \to H^0(C, {K_C{^2})}  \to 0,           
\end{equation}
or equivalently:
\[
0 \to H^0(C,  K_C^{3/2}) ^{\otimes 2}     \to H^0(  C{\times}C,  \OO(3,3,1) )  \to H^0(C, {K_C{^2})})  \to 0.
\]
Under the action described in section \ref{3,3,1} of the  involution 
\[
(x,y) \to (y,x)
\]
of  $C \times C$, the sequence splits into a trivial odd part {\eqref{LES331}}$^-$:
\begin{align*}
0 \to & \Sym^2(H^0(C,  K_C^{3/2}))        &   \to  & \Sym^2(H^0(C,  K_C^{3/2}))    & \to &   &         & 0 & \to 0 \\
\intertext{and a non-trivial even part {\eqref{LES331}}$^+$  :} 
0 \to & \exterior{2}(H^0(C,  K_C^{3/2}))  &   \to & H^0(  C{\times}C,  \OO(3,3,1) )^+ & \stackrel{\Res}{\to}&  &H^0(&C,{K_C{^2})})  & \to 0.
\end{align*}
(The somewhat awkward interpretation of the symmetric (respectively antisymmetric) square 
as the odd (respectively even) subspaces under the involution results from our convention \eqref{ffdncvc}. If we had dropped the $(-1)^{ab}$ sign there, the identification here would have flipped.)

The Serre dual (i.e. the dual tensored with the canonical line bundle) of the inclusion
\[
\OO(3,3,0)  \to  \OO(3,3,1)
\]
is the inclusion
\[
\OO(-1,-1,-1)  \to  \OO(-1,-1,0)
\]
which is part of the sequence  {\eqref{SES}}$_{-1,-1,0}$ :
\begin{equation}\label{-1-10}
0 \to \OO(-1,-1,-1) \to \OO(-1,-1,0) \to  K_C{^{-1}} \to 0.
\end{equation}
The long exact sequence is now:
\begin{equation} \label{LES-1-10}
0  \to H^1(C, T_C)  \to H^2(  C{\times}C,  \OO(-1,-1,-1) )\to H^2(  C{\times}C,  \OO(-1,-1,0))    \to 0,
\end{equation}
which is the dual of \eqref{LES331}. This can also be written:
\[
0  \to H^1(C, T_C)  \to H^2(  C{\times}C,  \OO(-1,-1,-1) )\to (H^1(  C,  T_C^{1/2} ))^{\otimes 2}   \to 0.
\]
Under the involution  this again splits into a trivial odd part {\eqref{LES-1-10}}$^-$:
\begin{align*}
0 \to  && 0                  & \to  & \Sym^2((H^1(  C,  T_C^{1/2} ))            & \to & \Sym^2((H^1(  C,  T_C^{1/2} ))  & \to &0 \\
\intertext{and a non-trivial even part {\eqref{LES-1-10}}$^+$  :} 
0  \to && H^1(C, T_C)  &\to & (H^2(  C{\times}C,  \OO(-1,-1,-1) ))^- & \to &\exterior{2}(H^1(  C,  T_C^{1/2} )) &   \to &0,
\end{align*}

\begin{proposition} \label{CxC} 
(1) The fiber at $(C,T_C^{1/2}) \in \sM_g$ of the extension \eqref{E*} expressing the first obstruction $\omega$ to the splitting  of $\SM_g$ is canonically identified with the even part {\eqref{LES331}}$^+$ of the long exact sequence  {\eqref{LES331}} of the restriction-to-the-diagonal short exact sequence   \eqref{331} =  {\eqref{SES}}$_{3,3,1}$.
\newline (2) Dually, the fiber at $(C,T_C^{1/2}) \in \sM_g$ of the extension \eqref{E} given by the first obstruction $\omega$ to the splitting  of $\SM_g$ is canonically identified with the even part {\eqref{LES-1-10}}$^+$ of the long exact sequence  {\eqref{LES-1-10}} of the restriction-to-the-diagonal short exact sequence  \eqref{-1-10} =  {\eqref{SES}}$_{-1,-1,0}$.
\end{proposition}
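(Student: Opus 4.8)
The plan is to identify the obstruction extension $D_\omega$ (equivalently $\bar D = D^2_{-}/D^1_{-}$ from Proposition \ref{interpretations of obstruction}) with the geometric extension {\eqref{LES-1-10}}$^+$ fiberwise over $\sM_g$, and then dualize to get (1). Everything will rest on the deformation-theoretic description of the tangent spaces of $\SM_g$: at a point corresponding to $(C,T_C^{1/2})$ one has $T_{+}=H^1(C,T_C)$ and $T_{-}=H^1(C,T_C^{1/2})$, hence $\exterior{2}T_-=\exterior{2}H^1(C,T_C^{1/2})$, which by the Künneth formula and our sign convention \eqref{ffdncvc} is exactly the involution-even part of $H^2(C\times C, T_C^{1/2}\boxtimes T_C^{1/2})=H^2(C\times C,\OO(-1,-1,0))$. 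So the outer terms of {\eqref{LES-1-10}}$^+$ already match the outer terms of \eqref{E}; the content is that the \emph{extension classes} agree.

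First I would set up the middle term intrinsically. The sheaf $\OO(-1,-1,-1)$ on $C\times C$ is $\Hom$-dual (Serre dual) to $\OO(3,3,1)$, the sheaf of sections of $K_C^{3/2}\boxtimes K_C^{3/2}$ with at most a simple pole on $\Delta$. I would argue that $H^2(C\times C,\OO(-1,-1,-1))$ is the correct "global" incarnation of $\bar D$: a first-order deformation of $(C,T_C^{1/2})$ in the super direction, i.e.\ a class in the middle of \eqref{D2}, is precisely the data of deforming $C$ together with a compatible deformation of the square root, and the failure to split this data off canonically is governed by the same local gluing cocycle that enters \eqref{h}. Concretely, I would use the differential-operator model: by Proposition \ref{interpretations of obstruction}(2), $D_\omega=\bar D$ is built from second-order operators along $\SM_g$ whose symbol is purely odd, i.e.\ operators of the schematic form $h\,\partial/\partial x + g\,\partial^2/\partial\theta^2$ with transition rule \eqref{h}. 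On the moduli space, $\partial/\partial x$ corresponds to a Čech cocycle valued in $T_C=K_C^{-1}$ (giving a class in $H^1(C,T_C)$ after taking $H^1$, or $H^2$ of the product-pulled-back bundle), while $\partial^2/\partial\theta^2$ corresponds to a section of $K_C^{-1/2}\boxtimes K_C^{-1/2}$, i.e.\ $\OO(-1,-1,0)$; the inhomogeneous term $(v^1_iv^2_j-v^2_iv^1_j)g^{ij}$ in \eqref{h} is exactly the obstruction to extending such a symbol across the diagonal, which is the coboundary measured by $\OO(-1,-1,-1)\twoheadrightarrow\OO(-1,-1,0)$ after applying $H^2$. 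Taking the involution-even part throughout (which is legitimate because all three sheaves in \eqref{-1-10} have no higher cohomology, so cohomology is exact and commutes with the $\ZZ_2$-decomposition) yields {\eqref{LES-1-10}}$^+$, and the extension-class bookkeeping gives that $[\bar D]$ equals its class.

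For part (1), I would simply note that \eqref{E*} is the dual of \eqref{E}, that {\eqref{LES331}}$^+$ is (by Serre duality on $C\times C$, using $K_{C\times C}=K_C\boxtimes K_C=\OO(2,2,0)$ and $\OO(a,b,c)^\vee\otimes K_{C\times C}=\OO(2-a,2-b,-c)$) the Serre dual of {\eqref{LES-1-10}}$^+$, and that Serre duality is compatible with the involution and its eigenspace decomposition; hence (1) follows formally from (2).

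\emph{Main obstacle.} The genuine difficulty is the global step: matching the \emph{extension class} of $\bar D$ on $\sM_g$ with that of {\eqref{LES-1-10}}$^+$, not merely the outer terms. This requires checking that the "universal" description of $\bar D$ via second-order differential operators along $\SM_g$ — which a priori lives over $\sM_g$ — is genuinely computed fiberwise by the operators $\partial^2/\partial\theta^2$ and $\partial/\partial x$ on the varying curve $C$, with the inhomogeneous transition \eqref{h} matching the Čech coboundary of the diagonal-pole sequence \eqref{-1-10}. In other words, one must verify that deformation theory of the pair $(C,T_C^{1/2})$ together with its second-order thickening is faithfully encoded by $H^\bullet(C\times C,\OO(-1,-1,\ast))$, and that the super structure constants of $\SM_g$ assemble into precisely the residue/restriction map $\Res$. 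As the authors indicate, this is harder for the Atiyah class of $\M_g$ (done first in section \ref{oyf}) and the $\SM_g$ statement is "similar though somewhat easier"; I would expect to follow that template, transporting the argument through the involution-even parts.
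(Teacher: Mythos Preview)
Your outline correctly identifies the outer terms via K\"unneth and Serre duality, and the reduction of (1) to (2) by duality is fine. But the heart of the matter---matching the \emph{extension class} of $\bar D$ with that of \eqref{LES-1-10}$^+$---is asserted rather than proved. The sentence ``the inhomogeneous term $(v^1_iv^2_j-v^2_iv^1_j)g^{ij}$ in \eqref{h} is exactly the obstruction to extending such a symbol across the diagonal, which is the coboundary measured by $\OO(-1,-1,-1)\twoheadrightarrow\OO(-1,-1,0)$'' is the entire proposition in disguise: the functions $g^{ij}$ in \eqref{h} are components of a coordinate change on an abstract supermanifold, and nothing so far connects them to the geometry of $C\times C$ or to the diagonal. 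Bridging that gap requires an explicit identification of second-order deformations of the super Riemann surface with classes on $C\times C$, which you have not supplied. (A small side issue: the sheaves in \eqref{-1-10} certainly do have higher cohomology---the whole sequence \eqref{LES-1-10} lives in $H^2$ and $H^1$---though the $\ZZ_2$-splitting is fine anyway in characteristic zero.)

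The paper's proof takes a quite different, analytic route. Rather than matching \v Cech cocycles, it uses Proposition~\ref{interpretations of obstruction}(3), the deformation-space model $S(B_2)$, realized concretely via Dolbeault cohomology: a deformation of a split super Riemann surface over $\C[\eta_1,\eta_2]$ is specified by $(0,1)$-forms $\chi^1,\chi^2\in\mathcal A^{0,1}(T_C^{1/2})$ and $h^{12}\in\mathcal A^{0,1}(T_C)$, subject to an explicit gauge equivalence (the super analogue of \eqref{infinitesimaltransfo}). One then writes down a pairing
\[
\langle\varphi,h\rangle=\int_{C\times C}\varphi\,(\chi^1\boxtimes\chi^2)\;-\;2\pi i\int_C\Res(\varphi)\,h^{12}
\]
between $\varphi\in H^0(C\times C,\OO(3,3,1))^+$ and the triple $(\chi^1,\chi^2,h^{12})$, and checks gauge invariance by integrating by parts, using the distributional identity $\partial_{\bar u}(1/u)=\pi\delta(u)$ to pick up exactly the diagonal residue term that cancels the variation of the second integral. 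This is where the diagonal in $C\times C$ genuinely enters, and it is this cancellation that identifies the two extensions. Your \v Cech-style approach might be made to work, but you would need to produce the analogous explicit cocycle computation; as written, the proposal stops precisely where the content begins.
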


Our claim that these identifications are ``canonical'' means that they remain valid in families. 
In particular, the sequence \eqref{D2} on $\sM_g$, expressing the first obstruction class \eqref{omega} to the projectedness and splitting of $\SM_g$, 
can be constructed as follows. We start with the universal families 
$\pi:  \sM_{g,1} \to \sM_g$
and \break
$\pi\pi: \sM_{g,1} \times_{\sM_{g}}   \sM_{g,1} \to \sM_g$ 
over $\sM_{g}$, whose typical fibers are $C$ and  $C \times C$, respectively. 
Let   $T_{\pi}, K_{\pi}$ denote the relative tangent and cotangent bundles,
and  let $p_1,p_2$ be the two projections of 
$\sM_{g,1} \times_{\sM_{g}}   \sM_{g,1}$ to $\sM_{g,1}$.
On the  total space 
$\sM_{g,1} \times_{\sM_{g}}   \sM_{g,1} $ 
there are short exact sequences analogous to   \eqref{331} =  {\eqref{SES}}$_{3,3,1}$ 
and \eqref{-1-10} =  {\eqref{SES}}$_{-1,-1,0}$: 
\begin{equation}\label{331universal}
0 \to \wt{\OO}(3,3,0) \to \wt{\OO}(3,3,1) \to K_ {\pi}^2       \to 0
\end{equation}
and
\begin{equation}\label{*331universal}
0 \to \wt{\OO}(-1,-1,-1) \to \wt{\OO}(-1,-1,0) \to T_ {\pi}       \to 0,
\end{equation}
where we set:
\[
\wt{\OO}_{\sM_{g,1} \times_{\sM_{g}}   \sM_{g,1}}(a,b,c) := p_1^*(K_{\pi})^{a/2} \otimes p_2^*(K_{\pi})^{b/2} \otimes \OO_{\sM_{g,1} \times_{\sM_{g}}   \sM_{g,1}}( c \Delta),
\]
with $\Delta$ now denoting the diagonal copy of
$\sM_{g,1}$
in 
$\sM_{g,1} \times_{\sM_{g}}   \sM_{g,1}$.

On each  $C \times C$ fiber, these restrict to the previous sequences \eqref{331},\eqref{-1-10}.
The direct images on $\sM_g$ give the dual sequences:
\begin{equation}\label{331onMg}
0 \to \pi\pi_*\wt{\OO}(3,3,0) \to \pi\pi_*\wt{\OO}(3,3,1) \to T^*_{\sM_g}       \to 0
\end{equation}
and
\begin{equation}\label{*331onMg}
0  \to T_{\sM_g}  \to R^2\pi\pi_*\wt{\OO}(-1,-1,-1) \to R^2\pi\pi_*\wt{\OO}(-1,-1,0) \to  0,
\end{equation}
so we conclude:
\begin{theorem}\label{obstruction for SM_g}
The obstruction class of the moduli space $\sM_g$ is given by the even (=antisymmetric!) part of the extension class of sequence \eqref{331onMg}, or, dually, \eqref{*331onMg}.
\end{theorem}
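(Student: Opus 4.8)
The plan is to read Theorem~\ref{obstruction for SM_g} as the relative (``in families over $\sM_g$'') version of Proposition~\ref{CxC}, and to deduce it by pushing the pointwise identifications of that proposition down the universal curve. Over $\sM_g$ one has the universal spin curve $\pi\colon\sM_{g,1}\to\sM_g$ and its fibered square $\pi\pi\colon\sM_{g,1}\times_{\sM_g}\sM_{g,1}\to\sM_g$, with fibers the surfaces $C\times C$. Even though there is no line bundle $K_\pi^{1/2}$ on $\sM_{g,1}$, the bundle $K_\pi^{1/2}\boxtimes K_\pi^{1/2}$ does exist globally on the fibered square --- this is exactly the point of Section~\ref{vyhdfghvb} --- so the twisted sheaves $\wt{\OO}(3,3,1)$, $\wt{\OO}(-1,-1,-1)$ and the short exact sequences \eqref{331universal}, \eqref{*331universal} are well defined there and restrict on each fiber to \eqref{331}, \eqref{-1-10}. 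The first technical step is cohomology-and-base-change: on each fiber $C\times C$, Künneth together with $H^0(C,K_C^{-1/2})=0$ and $H^1(C,K_C^{3/2})=0$ (for $g\ge 2$) shows that $\OO(3,3,0),\OO(3,3,1)$ have cohomology only in degree $0$ and $\OO(-1,-1,0),\OO(-1,-1,-1)$ only in degree $2$; hence $\pi\pi_*\wt{\OO}(3,3,\bullet)$ and $R^2\pi\pi_*\wt{\OO}(-1,-1,\bullet)$ are locally free, commute with base change, and the long exact sequences of $R^{\bullet}\pi\pi_*$ applied to \eqref{331universal}, \eqref{*331universal} collapse to the short exact sequences \eqref{331onMg}, \eqref{*331onMg} --- using $\pi_*K_\pi^{2}\cong T^*_{\sM_g}$ and $R^1\pi_*T_\pi\cong T_{\sM_g}$ (Kodaira--Spencer) for their diagonal third terms. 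Relative Serre duality for $\pi\pi$, whose relative dualizing sheaf is $\wt{\OO}(2,2,0)$, exhibits \eqref{*331onMg} as the dual of \eqref{331onMg}.

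Granting this, the theorem asserts that the involution-even summand of \eqref{*331onMg} is the obstruction extension \eqref{E} (and dually for \eqref{331onMg} and \eqref{E*}). Fiberwise this is precisely Proposition~\ref{CxC}; the clause there that the identifications are \emph{canonical}, i.e.\ hold in families, is what upgrades it to an equivalence of extensions of vector bundles over all of $\sM_g$, hence to an equality of classes in $H^1(\sM_g,\Hom(\exterior{2}T_-,T_+))$. The parenthetical ``even (=antisymmetric!)'' is then bookkeeping with the sign convention \eqref{ffdncvc}: here $a=b$ is odd, so $(-1)^{ab}=-1$ and the involution-invariant ($+1$) part of $H^0(C,K_C^{3/2})^{\otimes 2}$ is $\exterior{2}H^0(C,K_C^{3/2})=(\exterior{2}T_-)^\vee$ rather than the symmetric square; thus the even part of the sequence is the one carrying the $\exterior{2}T_-$ quotient, and it is this summand that computes $\omega$.

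The real content --- and where I expect the only genuine difficulty --- is Proposition~\ref{CxC} itself, i.e.\ matching the intrinsically defined class $\omega$ with the extension on $C\times C$. I would attack it through Proposition~\ref{interpretations of obstruction}, which lets one replace $\omega$ by the extension class $[\bar{D}]$ of the differential-operator sheaf $\bar{D}={D}_-^2/{D}_-^1$ of \eqref{fuytfyrdyvdr} (or by the class of the second-order odd deformation space $S(B_2)$ with its affine structure \eqref{h}). In that language the diagonal $\Delta\subset C\times C$ is what encodes ``two odd directions colliding'': passing from $\OO(3,3,0)$ to $\OO(3,3,1)$, i.e.\ allowing a simple pole along $\Delta$ in $K_C^{3/2}\boxtimes K_C^{3/2}$, corresponds to the passage from $D^1_-$ to $D^2_-$, and on global sections $\Res$ becomes the quotient $\bar{D}^\vee\to T_+^\vee$ dual to the inclusion $T_+\hookrightarrow\bar{D}$, the pole along $\Delta$ recording the quadratic-in-$\theta$ coefficient ``$g^{ij}$'' of the coordinate change in \eqref{h}. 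Concretely, Proposition~\ref{CxC} comes down to identifying $\bar{D}^\vee$ at $(C,T_C^{1/2})$ with $H^0(C\times C,\OO(3,3,1))^+$ and verifying that the two maps of \eqref{E*} coincide with those of the even part of \eqref{LES331}. The bosonic Atiyah-class computation of Section~\ref{oyf} is the model for this dictionary, and the obstruction case is its (easier) odd counterpart, with essentially only the signs of \eqref{ffdncvc} requiring care.
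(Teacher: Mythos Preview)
Your reduction of the theorem to Proposition~\ref{CxC} is exactly the paper's argument: the phrase ``so we conclude'' preceding the theorem means precisely that once the fiberwise identification of Proposition~\ref{CxC} is known to be canonical (i.e.\ valid in families), pushing the universal sequences \eqref{331universal}, \eqref{*331universal} down via $\pi\pi$ gives the claimed extensions on $\sM_g$. Your base-change and relative Serre duality remarks are the correct justification for why those pushforward sequences stay short exact and are dual to each other, and your reading of ``even $=$ antisymmetric'' via the sign convention \eqref{ffdncvc} is right.

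Where you diverge from the paper is in the proof of Proposition~\ref{CxC} itself. You propose to argue algebraically through the differential-operator model $\bar D=D^2_-/D^1_-$ of Proposition~\ref{interpretations of obstruction}(2), identifying $\bar D^\vee$ directly with $H^0(C\times C,\OO(3,3,1))^+$ by reading the simple pole along $\Delta$ as the second-order odd symbol. The paper instead uses the deformation-space model of Proposition~\ref{interpretations of obstruction}(3): it writes a deformation of the split $S$ over $\C[\eta_1,\eta_2]$ in Dolbeault form as a triple $(\chi^1,\chi^2,h^{12})$ subject to the gauge law \eqref{delbox}, defines the explicit pairing
\[
\langle\varphi,h\rangle=\int_{C\times C}\varphi\,(\chi^1\boxtimes\chi^2)-2\pi i\int_C\Res(\varphi)\,h^{12},
\]
and verifies gauge invariance by a short integration by parts using $\partial_{\bar x}(x-y)^{-1}=\pi\delta^2(x-y)$. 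This is the odd companion of the $\OO(2*,2*,2)$ computation in Section~\ref{oyf} you cite, and is technically simpler there because the pole along $\Delta$ is simple rather than double.

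Your algebraic route is not wrong in spirit, but the sketch is not yet a proof: the assertion that ``allowing a simple pole along $\Delta$ corresponds to the passage from $D^1_-$ to $D^2_-$'' is precisely what has to be established, and you have not said how to build the isomorphism $\bar D^\vee\cong H^0(\OO(3,3,1))^+$ compatibly with both the sub and quotient of the extension. The paper's pairing argument buys you exactly this compatibility as the statement of gauge invariance; if you want to avoid Dolbeault representatives and distributions, you would need to supply that identification by other means (for instance via jets along $\Delta$), which is more work than your last paragraph suggests.
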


\subsection{Bosonic version: an interpretation of $\alpha_{\M_g}$} \label{Bosonic detour}

\mbox {   }

Before we prove Proposition \ref{CxC}, we pause to discuss the analogous result in the bosonic world. There is a large literature devoted to describing higher order neighborhoods of points in various moduli spaces and the functions and differential operators on them   \cite{BS}-\cite{BB}.
The result we have in mind describes the Atiyah class $\alpha(\M_g)$, or any of the equivalent objects appearing in Proposition \ref{interpretations of Atiyah class}, for the moduli space $\M_g$ of complex structures on a Riemann surface $C$. We do this in terms of the sheaves $\OO_{C \times C}(a,b,c)$ occurring in the extensions \eqref{SES} on $C \times C$. We were unable to find this result in the literature, so we thought it was worth discussing here.{\footnote{ A related result appears in \cite{BS}, where  the authors calculate the Atiyah class of the {\em canonical bundle} of the moduli space $\M_g$ of complex structures on a Riemann surface $C$.
This is determined by the Atiyah class $\alpha(\M_g)$ of the full {\em tangent bundle} of $\M_g$, which is what we find here, but it does not fully determine $\alpha(\M_g)$.
Beilinson has explained  that nevertheless, our result about  $\alpha(\M_g)$ can  be deduced from the details of the proof of the main theorem of \cite{BS}. In the proof, in section 2.3.3 of \cite{BS}, Beilinson and Schechtman construct a canonical isomorphism of certain extensions, and this canonical isomorphism can be used to derive our description of $\alpha(\M_g)$.  }}

 Our usual notation $\OO_{C \times C}(a,b,c)$ for the line bundle
\[
\OO_{C \times C}(a,b,c) := p_1^*(K_C)^{a/2} \otimes p_2^*(K_C)^{b/2} \otimes \OO_{C \times C}( c \Delta),
\]
with $\Delta$ the diagonal, seems inappropriate for the bosonic case,  where there are no natural spin structures. So we use the alternative notation
\[
\OO_{C \times C}(a*,b*,c) := \OO_{C \times C}(2a,2b,c) = p_1^*(K_C)^{a} \otimes p_2^*(K_C)^{b} \otimes \OO_{C \times C}( c \Delta). 
\]
These fit into the restriction-to-$\Delta$ (or: residue) short exact sequence:
\begin{equation}\label{a*b*c}
0 \to \OO(a*,b*,c-1) \to \OO(a*,b*,c)  \stackrel{\Res} {\to}   K_C^{a+b-c} \to 0.
\end{equation}
As previously, let $z$ be a local coordinate on $C$, 
and let $x:=p_1^*(z), y:=p_2^*(z)$ be corresponding local coordinates on  $C \times C$. 
The involution $(x,y) \to (y,x)$ of $C \times C$ 
now sends $ \OO(a*,b*,c)  $ to  $ \OO(b*,a*,c)  $ naturally, 
without the ambiguity we encountered in equation \eqref{ffdncvc}.
In particular, it acts on $\OO_{C \times C}(a*,a*,c)$ and on its cohomology; 
we indicate the eigenspaces with $\pm$ superscripts. 
Locally, a section is of the form: 
\begin{equation}\label{sde}
\varphi = f(x,y)\frac{dx^{\otimes a} dy ^{\otimes a} }{(x-y)^{c} },
\end{equation}
 with holomorphic $f$. 
It is even under the involution if the parity of the function $f$ is the same as that of the integer $c$,
while if these parities are opposite the section is odd under the involution.
The involution acts on sequence \eqref{a*b*c}, which breaks as a sum of its even and odd pieces.
Now  the third term $K_C^{2a-c} $ has pure sign $(-1)^{c}$. So sequence 
\eqref{a*b*c} breaks as a sum of its $(-1)^{c}$ subsequence:
\begin{equation}\label{a*b*c+}
0 \to \OO(a*,a*,c-1)^{(-)^c} \to \OO(a*,a*,c)^{(-)^c}   \stackrel{\Res} {\to}   K_C^{2a-c} \to 0
\end{equation}
and its trivial  $(-1)^{c-1}$ subsequence:
\begin{equation}\label{a*b*c-}
0 \to \OO(a*,a*,c-1)^{(-)^{c-1}} \to \OO(a*,a*,c)^{(-)^{c-1}}   \stackrel{\Res} {\to}  0  {\to}  0.
\end{equation}

The case relevant to us is $a=b=c=2$:
\begin{equation}\label{442}
0 \to \OO(2*,2*,1) \to \OO(2*,2*,2)  \stackrel{\Res} {\to}   K_C^2 \to 0.
\end{equation}
For $g>1$, these sheaves have no higher cohomology, so the long exact sequence is
\[
0 \to H^0(\OO(2*,2*,1)) \to H^0(\OO(2*,2*,2)) \to   H^0(K_C^2) \to 0,
\]
which decomposes into an uninteresting odd part: 
\[
H^0(\OO(2*,2*,1)) ^-\cong H^0(\OO(2*,2*,2))^-,
\]
and an even part:
\begin{equation}\label{H442}
0 \to  \Sym^2 H^0(K^2) \to H^0(\OO(2*,2*,2))^+ \to H^0(K_C^2) \to 0.
\end{equation}
Here the identification $ \Sym^2 H^0(K^2) \cong H^0(\OO(2*,2*,1))^+ $ comes from 
sequence \eqref{a*b*c-} for $a=b=2, c=1$, i.e. it is
the even part of the long exact sequence
\[
0 \to H^0(\OO(2*,2*,0)) \to H^0(\OO(2*,2*,1)) \to H^0(K_C^3) \to 0,
\]
coming from the short exact
\[
0 \to \OO(2*,2*,0) \to \OO(2*,2*,1) \to K_C^3 \to 0,
\]
where we recall that $H^0(K_C^3) $ is odd under the involution.
To dualize the sequence \eqref{442}, we take the Serre dual of the first map (between the line bundles) and find the new cokernel:
\begin{equation}\label{*442}
0 \to \OO(-1*,-1*,-2) \to \OO(-1*,-1*,-1) \to T_C \to 0.
\end{equation}
The even part of the long exact sequence is now the dual of \eqref{H442}:
\begin{equation}\label{H*442}
0 \to H^1(T_C) \to H^2(\OO(-1*,-1*,-2))^+ \to \Sym^2 H^1 (T_C) \to 0.
\end{equation}
Our claim is that  the extension \eqref{H*442} can be naturally identified with the fiber at (the isomorphism class of) $C$ of the extension \eqref{D2bos} on the moduli space:

\mbox{  }

\begin{proposition}\label{CxCbosonic} \mbox{   }
\begin{enumerate}
\item The fiber at $C \in \M_g$ of the extension \eqref{D2bos} expressing the Atiyah class $\alpha$  of the moduli space $\M_g$ is canonically identified with the even part {\eqref{H*442}} of the long exact sequence  of the restriction-to-the-diagonal short exact sequence   \eqref{*442}.
\item Dually, the fiber at $C \in \M_g$ of  the extension 
expressing the Atiyah class $\alpha$  of the moduli space $\M_g$ is canonically identified with 
the even part {\eqref{H442}} of
the long exact sequence   of the restriction-to-the-diagonal short exact sequence  \eqref{442}. 
\end{enumerate}
\end{proposition}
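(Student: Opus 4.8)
The plan is to produce, over all of $\M_g$ (so that the identification is automatically ``canonical,'' i.e.\ holds in families), an isomorphism between the extension \eqref{D2bos} and the push‑forward to $\M_g$ of the universal version of the exact sequence \eqref{*442}, compatible with the Kodaira--Spencer identifications of the outer terms; restricting to the fiber over $[C]$ then gives statement (1), and (2) follows by dualizing. For the latter reduction: the extension in (2) is by construction the $\OO_{\M_g}$‑dual of \eqref{D2bos} (compare \eqref{D1bos} and Proposition \ref{interpretations of Atiyah V class} with $V=T$), while \eqref{H*442} is, as recalled in the text, the Serre dual of \eqref{H442}; since for an extension of locally free sheaves the fiber of the dual equals the dual of the fiber, (1) and (2) are equivalent, and I will aim at (1). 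By Proposition \ref{interpretations of Atiyah class}(2), $\alpha_{\M_g}$ is represented by \eqref{D2bos}, namely $0\to T_{\M_g}\to D^2/D^0\to\Sym^2 T_{\M_g}\to 0$, with $D^2/D^0$ the sheaf of order $\le 2$ differential operators on $\M_g$ modulo scalars.

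The first step is the family set‑up. Let $\pi\colon\M_{g,1}\to\M_g$ be the universal curve and $\pi\pi\colon\M_{g,1}\times_{\M_g}\M_{g,1}\to\M_g$ its fiber square, with relative diagonal $\Delta$ (normal bundle $T_\pi$), projections $p_1,p_2$, and $\wt{\OO}(a*,b*,c):=p_1^*K_\pi^{\,a}\otimes p_2^*K_\pi^{\,b}\otimes\OO(c\Delta)$. The fiberwise sequences \eqref{442} and \eqref{*442} globalize to short exact sequences on $\M_{g,1}\times_{\M_g}\M_{g,1}$, and for $g\ge 2$ all sheaves involved have vanishing higher cohomology on each fiber $C\times C$, so the relevant direct images are locally free and compatible with base change. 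Pushing forward the universal \eqref{*442} and taking the part even under the fiberwise involution $p_1\leftrightarrow p_2$ yields a short exact sequence of bundles on $\M_g$
\[
0\to T_{\M_g}\to\bigl(R^2\pi\pi_*\wt{\OO}(-1*,-1*,-2)\bigr)^{+}\to\Sym^2 T_{\M_g}\to 0
\]
whose fiber over $[C]$ is the even sequence \eqref{H*442}; here the outer terms are identified with $T_{\M_g}$ and $\Sym^2 T_{\M_g}$ via Kodaira--Spencer ($R^1\pi_*T_\pi=T_{\M_g}$, using $H^0(C,T_C)=0$), the $\Sym^2$‑term exactly as in the text's discussion of \eqref{H442}. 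The proposition is thereby reduced to identifying the middle bundle of this sequence with $D^2/D^0$, compatibly with these outer identifications.

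The heart of the argument is this middle identification, which I would carry out through the deformation‑space description of $\alpha_{\M_g}$, Proposition \ref{interpretations of Atiyah class}(3). A local section of $(D^2/D^0)^\vee$ is a $2$‑jet at $[C]$ of a function on $\M_g$, equivalently a local functional on second‑order deformations of $C$; presented by \v{C}ech data on a cover $\{U_i\}$ of $C$, such a deformation (a family over $A_2$) is controlled by two cocycles $\mu_1,\mu_2\in Z^1(\{U_i\},T_C)$ (its first‑order parts) together with a $1$‑cochain $\psi$ whose coboundary is a fixed universal quadratic expression $\{\mu_{1,jk}\,\partial_z\mu_{2,ij}+\cdots\}$ in the $\mu$'s. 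On $C\times C$ the product cochain $\{\mu_{1,ij}(x)\,\mu_{2,k\ell}(y)\}$ on $\{U_i\times U_k\}$ represents $[\mu_1]\otimes[\mu_2]\in H^1(C,T_C)^{\otimes 2}=H^2(C\times C,T_C\boxtimes T_C)$, and the content of the comparison is that once this class is allowed its double pole along $\Delta$ --- i.e.\ lifted through $\wt{\OO}(-1*,-1*,-2)\hookrightarrow\wt{\OO}(-1*,-1*,0)$ --- it records exactly the datum of $\psi$ up to coboundaries, i.e.\ the affine structure of the space of second‑order lifts. Dually, on the $H^0$‑side of \eqref{442}, a $2$‑jet of a function $F$ on $\M_g$ determines on each fiber a symmetric section $f(x,y)\,dx^{\otimes 2}dy^{\otimes 2}/(x-y)^{2}$ of $\wt{\OO}(2*,2*,2)$ whose residue along $\Delta$ is the linear part of $F$ (a quadratic differential, via Kodaira--Spencer) and whose restriction to $\Delta$ records the Hessian of $F$. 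One then checks that these assignments are mutually inverse $\OO_{\M_g}$‑linear maps, are involution‑equivariant (landing in the even part), and are compatible with the outer identifications above; the five lemma then promotes this to an isomorphism of extensions, hence an equality of extension classes, which is (1), and (2) follows by the Serre duality already noted.

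I expect the main obstacle to be precisely this middle comparison: one must verify that the correspondence between the quadratic part of the second‑order deformation data on $\M_g$ and the residue‑and‑pole structure along the diagonal of $C\times C$ is independent of all choices (the \v{C}ech cover, local coordinates and trivializations, and the family representing a given $A_2$‑point), is genuinely $\OO_{\M_g}$‑linear and not merely $\CC$‑linear, and matches on the nose --- with the correct signs --- the classical Kodaira--Spencer identifications of the two outer terms. This is the sort of bookkeeping carried out by the ``trace complex'' calculation of \cite{BS}, and an alternative to a direct verification would be to deduce the statement from the canonical isomorphism of extensions constructed there (section 2.3.3 of \cite{BS}). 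Throughout, one keeps track of the $\Aut(C)$‑action so that everything descends to the orbifold $\M_g$; this is harmless, since all bundles and sequences involved are canonically $\Aut(C)$‑equivariant.
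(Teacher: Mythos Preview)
Your plan is reasonable and would likely succeed, but it proceeds quite differently from the paper's proof.

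The paper works entirely in Dolbeault cohomology rather than \v{C}ech. It describes a second-order deformation of $C$ over $A_2=\C[\epsilon_1,\epsilon_2]/(\epsilon_1^2,\epsilon_2^2)$ by $(0,1)$-forms $h^1,h^2,h^{12}\in\mathcal A^{0,1}(TC)$ deforming $\bar\partial$, and works out the (linearized) gauge transformation law explicitly: $h^i\mapsto h^i+\bar\partial\alpha^i$, $h^{12}\mapsto h^{12}+\bar\partial\alpha^{12}+[h^1,\alpha^2]+[h^2,\alpha^1]$. The identification of extensions is then established by writing down a concrete pairing between $\varphi\in H^0(\OO(2*,2*,2))^+$ and the deformation data:
\[
\langle\varphi,h\rangle=\int_{C\times C}\varphi\,(h^1\boxtimes h^2)\;-\;\pi i\int_C\Res(\varphi)\,h^{12},
\]
where the first integral is defined as a principal value (since $\varphi$ has a double pole on the diagonal). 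The technical core is then a direct computation, using the distributional identity $\partial_{\bar x}\bigl((x-y)^{-2}\bigr)=-\pi\,\partial_y\delta^2(x-y)$, showing that the gauge variation of the first integral exactly cancels that of the second. This simultaneously proves well-definedness and identifies the nontrivial part of the extension with the nontrivial part of the gauge transformation law, which is precisely the middle comparison you flag as the main obstacle.

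So where you propose to match the \v{C}ech $1$-cochain $\psi$ recording the second-order data against the lift through $\OO(-1*,-1*,-2)\hookrightarrow\OO(-1*,-1*,0)$, the paper instead matches the Dolbeault gauge ambiguity of $h^{12}$ against the pole structure of $\varphi$ via integration by parts. Your route is more algebraic and closer in spirit to \cite{BS} (which you correctly cite as an alternative), and has the advantage of being manifestly holomorphic and working uniformly in families; the paper's route is analytic but has the advantage that the key step is a short and completely explicit calculation, with no bookkeeping of covers, coordinate changes, or $\OO_{\M_g}$-linearity to verify. Your proposal leaves that middle comparison as an acknowledged obstacle; the paper's approach resolves it in a page of distribution theory.
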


Again, our claim that these identifications are "canonical" means that they remain valid in families. 
In particular, the sequence \eqref{D2bos} on $\M_g$, expressing the Atiyah class of $\M_g$, 
can be constructed as follows. We start with the universal families 
$\pi:  \M_{g,1} \to \M_g$
and
$\pi\pi: \M_{g,1} \times_{\M_{g}}   \M_{g,1} \to \M_g$ 
over $\M_{g}$ whose typical fibers are $C$ and  $C \times C$, respectively. 
On the  total space 
$\M_{g,1} \times_{\M_{g}}   \M_{g,1} $ 
there are short exact sequences analogous to  \eqref{442} and \eqref{*442}:
\begin{equation}\label{442universal}
0 \to \wt{\OO}(2*,2*,1) \to \wt{\OO}(2*,2*,2) \to K_ {\pi}^2       \to 0
\end{equation}
and
\begin{equation}\label{*442universal}
0 \to \wt{\OO}(-1*,-1*,-2) \to \wt{\OO}(-1*,-1*,-1) \to T_ {\pi}       \to 0,
\end{equation}
where with the obvious modifications of the notation,
\[
\wt{\OO}_{C \times C}(a*,b*,c) := p_1^*(K_{\pi})^{a} \otimes p_2^*(K_{\pi})^{b} \otimes \OO_{\M_{g,1} \times_{\M_{g}}   \M_{g,1}}( c \Delta),
\]
and $T_{\pi}, K_{\pi}$ denote the relative tangent and cotangent bundles.
On each  $C \times C$ fiber, these restrict to the previous sequences \eqref{442},\eqref{*442}.
The direct images on $\M_g$ give:
\begin{equation}\label{442onMg}
0 \to \pi\pi_*\wt{\OO}(2*,2*,1) \to \pi\pi_*\wt{\OO}(2*,2*,2) \to T^*_{\M_g}       \to 0
\end{equation}
and
\begin{equation}\label{*442onMg}
0  \to T_{\M_g}  \to R^2\pi\pi_*\wt{\OO}(-1*,-1*,-2) \to R^2\pi\pi_*\wt{\OO}(-1*,-1*,-1) \to  0,
\end{equation}
so we conclude:
\begin{theorem}\label{Atiyah for M_g}
The Atiyah class of the moduli space $\M_g$ is given by the even (=symmetric) part of the extension class of sequence \eqref{442onMg}, or, dually, \eqref{*442onMg}.
\end{theorem}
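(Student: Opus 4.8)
\textbf{Proof plan for Theorem \ref{Atiyah for M_g}.}
The statement follows from Proposition \ref{CxCbosonic} together with the ``canonicity in families'' assertion, so the plan is to reduce everything to a fiberwise identification and then argue that it globalizes. By Proposition \ref{interpretations of Atiyah class}, the Atiyah class $\alpha(\M_g)$ is the extension class of \eqref{D2bos}, i.e. of $0\to T\to D^2/D^0\to \Sym^2 T\to 0$ on $\M_g$. Since the sheaves $\wt{\OO}(2*,2*,j)$ and $\wt{\OO}(-1*,-1*,-j)$ on $\M_{g,1}\times_{\M_g}\M_{g,1}$ have no higher direct images except in the expected degree (for $g>1$, by Serre duality and vanishing on each $C\times C$ fiber, as already used for \eqref{H442} and \eqref{H*442}), the sequences \eqref{442universal}, \eqref{*442universal} push forward to the four-term-truncated sequences \eqref{442onMg}, \eqref{*442onMg} on $\M_g$; base change gives that their fibers at $C$ are exactly \eqref{H442}, \eqref{H*442}. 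So the content of the theorem is that the extension class of \eqref{442onMg} (its even/symmetric part) \emph{equals} the extension class of \eqref{D2bos}, and this is precisely what Proposition \ref{CxCbosonic} asserts fiberwise; the word ``canonical'' there is what upgrades the fiberwise isomorphism of extensions to an isomorphism of short exact sequences of sheaves on $\M_g$, hence equality of classes in $H^1(\M_g,\Hom(\Sym^2 T,T))$.

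Concretely I would proceed as follows. First, identify the terms: $T^*_{\M_g}=\pi\pi_*K_\pi^2$ (quadratic differentials, via Kodaira--Spencer / Serre duality), $\Sym^2 T_{\M_g}$ dually, and check that $\pi\pi_*\wt{\OO}(2*,2*,1)\cong \Sym^2(\pi_*K_\pi^2)$ using sequence \eqref{a*b*c-} in the universal setting (the even part of $0\to\wt{\OO}(2*,2*,0)\to\wt{\OO}(2*,2*,1)\to K_\pi^3\to 0$, noting $K_\pi^3$ contributes to the odd part). Second, produce the comparison map: a section of $D^2/D^0$ on $\M_g$ is (the class of) a second-order differential operator along $\M_g$; at $C$ this is a second-order jet of functions on moduli, which by Kodaira--Spencer theory is encoded by deformation data of $C$, and such data is exactly a section of $\wt{\OO}(2*,2*,2)$ on $C\times C$ modulo the diagonal ambiguity — this is the substance of the proof in section \ref{oyf} to which the excerpt defers. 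Third, having the map of short exact sequences, conclude equality of extension classes; the ``symmetric = even'' bookkeeping is handled by the involution analysis of section \ref{3,3,1} (here with no sign ambiguity, so even $\leftrightarrow$ symmetric as stated).

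The main obstacle — and the reason the excerpt says ``the proof is harder in this case'' — is constructing the canonical identification at the level of \eqref{D2bos} versus \eqref{442onMg}, i.e. Proposition \ref{CxCbosonic}(1)/(2), rather than merely matching dimensions or abstract extension data. One must exhibit, functorially in $C$, the isomorphism between the second-order-neighborhood data of $[C]\in\M_g$ and the space $H^0(C\times C,\OO(2*,2*,2))^+$, respecting the subobject $T_{\M_g}$ (resp. the quotient $\Sym^2 T$) and commuting with restriction to the diagonal. This is exactly where the Beilinson--Schechtman construction enters: as the footnote records, their section 2.3.3 builds a canonical isomorphism of the relevant extensions, and our description of $\alpha(\M_g)$ is extracted from it. So the realistic route is: state and prove Proposition \ref{CxCbosonic} by producing that canonical isomorphism (following or invoking \cite{BS}), verify it is compatible with the universal family so that it globalizes, and then Theorem \ref{Atiyah for M_g} is immediate. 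The parallel statement for $\omega_{\SM_g}$ (Theorem \ref{obstruction for SM_g}) then follows by the same argument, only easier, because the relevant differential operators $D^2_-/D^1_-$ are more rigid and the sequences are \eqref{SES}$_{3,3,1}$, \eqref{SES}$_{-1,-1,0}$.
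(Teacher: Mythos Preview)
Your high-level architecture is correct: Theorem \ref{Atiyah for M_g} is exactly Proposition \ref{CxCbosonic} globalized over $\M_g$, and the cohomological base-change/vanishing checks you sketch for identifying the outer terms of \eqref{442onMg}, \eqref{*442onMg} are the right ones. Where your plan diverges from the paper is in how Proposition \ref{CxCbosonic} itself is established.

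You propose to build a direct isomorphism $D^2/D^0 \cong$ (the extension coming from $\OO(2*,2*,2)$) by interpreting a second-order operator as ``deformation data of $C$'' and then asserting that ``such data is exactly a section of $\wt{\OO}(2*,2*,2)$''; failing that, you suggest invoking \cite{BS}. The paper does neither. It uses part (3) of Proposition \ref{interpretations of Atiyah class}, not part (2): it represents points of $M(A_2)$ by explicit Dolbeault data $(h^1,h^2,h^{12})$, $h^\bullet\in\mathcal A^{0,1}(TC)$, works out the gauge transformation law \eqref{infinitesimaltransfo} for this triple, and then defines an explicit \emph{pairing}
\[
\langle\varphi,h\rangle \;=\; \int_{C\times C}\varphi\,(h^1\boxtimes h^2)\;-\;\pi i\int_C \Res(\varphi)\,h^{12}
\]
between $\varphi\in H^0(\OO(2*,2*,2))^+$ and such triples. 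The first integral has a genuine $1/(x-y)^2$ singularity along the diagonal and is defined as a principal value (equivalently, via $\d u/u^2=-\partial(1/u)$ as a distribution). Gauge invariance is then proved by an integration-by-parts computation using $\partial_{\bar x}(x-y)^{-2}=-\tfrac{\pi}{2}(\partial_x-\partial_y)\delta^2(x-y)$, which shows that $\delta I_1$ exactly cancels $\delta I_2$. This pairing exhibits $H^0(\OO(2*,2*,2))^+$ as the dual of the affine-bundle extension, hence identifies the two extension classes.

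So your line ``such data is exactly a section of $\wt{\OO}(2*,2*,2)$ on $C\times C$'' is not what is happening: the deformation data lives on the Dolbeault side and is \emph{paired against} holomorphic sections of $\OO(2*,2*,2)$; there is no direct map from one to the other, and the nontrivial content is precisely that the principal-value regularization of $I_1$ and the diagonal term $I_2$ conspire to make the pairing gauge invariant. Invoking \cite{BS} would indeed work (as the footnote says), but that is a different argument from the one the paper gives, and it would not yield the explicit integral formula that the paper's method produces --- which is what makes the parallel proof for $\omega_{\SM_g}$ in section \ref{sfg} so transparent (there the pole is simple, the integral is absolutely convergent, and the same gauge-invariance computation goes through with \eqref{distrib} in place of \eqref{asd}).
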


Note that all of this is in perfect analogy with the picture for the obstruction class in the previous section. 
The only change there is that we replace the central sheaf $\OO_{C \times C}(2*,2*,2)=\OO_{C \times C}(4,4,2)$ by $\OO_{C \times C}(3,3,1)$. (And the part even under the involution is symmetric here, antisymmetric there.)

\subsection{Proof for the Atiyah class} \label{oyf}

\subsubsection{Deformations} \label{tsdrfv1}

Let $C$ be a Riemann surface, or more generally a complex manifold. 
The complex structure of  $C$  is given in terms of the almost complex structure operator 
\[
J: T_{\RR}C  \to T_{\RR}C , \quad J^2=-1
\]  
on the real tangent bundle, or equivalently in terms of the decomposition
\[
 T_{\RR}^*C \otimes {\CC}  =  T^{1,0}C \oplus T^{0,1}C.
\]
Equivalently, it is specified by a $C^{\infty}$ section of $ {\mathcal{A}}^{0,1}(TC)$, which depends on $\epsilon$ and vanishes for $\epsilon =0$. Indeed,
let $\pi^{1,0}, \pi^{0,1},i^{1,0}, i^{0,1}$ be the projections and inclusions of the summands.
For a  deformation $C_{\epsilon}$ depending on a  parameter $\epsilon$, we have a family of decompositions
\[
 T_{\RR}^*C \otimes {\CC}  =  T_{\epsilon}^{1,0}C \oplus T_{\epsilon}^{0,1}C.
\]
For small $\epsilon$,  the composition $a_{\epsilon}:=\pi_{\epsilon}^{0,1} \circ i^{0,1}: T^{0,1}C \to  T_{\epsilon}^{0,1}C$ is an isomorphism, so the complex structure of $C_{\epsilon}$ determines a linear map
\[
h_{\epsilon} := (a_{\epsilon})^{-1} \circ \pi_{\epsilon}^{0,1} \circ i^{1,0}: T^{1,0} \to T^{0,1}
\]
and is detremined by it. This map can be interpreted as a $(0,1)$-form valued section of $TC$, or a $C^{\infty}$ section of $ {\mathcal{A}}^{0,1}(TC)$, which depends on $\epsilon$ and vanishes for $\epsilon =0$.  
(As in the rest of this paper, we will usually abbreviate the holomorphic tangent and cotangent bundles $T_{1,0}C$, $T^{1,0}C$ to just $TC$,  $T^*C$.) 
We write the map $h_{\epsilon}$ in local holomorphic coordinate(s) $z$ on $C$ as 
$h_{\epsilon} =  h(z,\epsilon) d\bar{z}\frac{\partial}   {\partial{z}}.$
It specifies an explicit deformation of the $\bar{\partial}$ operator 
$\bar{\partial} = d\bar{z}\frac{\partial}   {\partial\bar{z}}$:
\begin{equation}\label{A2}
\bar{\partial}_{\epsilon} := 
\bar{\partial} + h_{\epsilon} =
\bar{\partial} + ( h(z,\epsilon)  )  d\bar{z} \frac{\partial}{\partial{z}}
=  d\bar{z}  ( \frac{\partial}{\partial{\bar{z}}}     + ( h(z,\epsilon)  ) \frac{\partial}{\partial{z}}),
\quad \quad h \in  {\mathcal{A}}^{0,1}(TC).
\end{equation}
One convenient interpretation of the 'small $\epsilon$' assumption above is that  $\epsilon$ and $h_{\epsilon}$ are nilpotent elements of some (local) Artinian algebra $A$.
When we work over the Artinian ring $A_1 = \CC[\epsilon]/(\epsilon^2)$, the $C^{\infty}$ function $h(z,\epsilon)$ becomes linear in $\epsilon$:
\begin{equation}\label{A1}
\bar{\partial}_{\epsilon} 
=  \bar{\partial} +  ( \epsilon h(z)  ) d\bar{z}  \frac{\partial}{\partial{z}},
\quad \quad h \in  {\mathcal{A}}^{0,1}(TC).
\end{equation}
When we work over the Artinian algebra $A_2 = \CC[\epsilon_1, \epsilon_2]/(\epsilon_1^2, \epsilon_2^2)$ the deformation becomes:
\begin{equation} \label{DDC}
\bar{\partial}_{\epsilon}  =\bar{\partial} + (\epsilon_1 h^1(z)  + \epsilon_2 h^2(z)  + \epsilon_1 \epsilon_2 \h(z)    )d\bar{z} \frac{\partial}{\partial{z}}, \quad \quad h^1,h^2,h^{12} \in  {\mathcal{A}}^{0,1}(TC).
\end{equation}
Such a deformation is trivial if it comes from the gauge action of a global vector field $w \in {\mathcal{A}}^{0,0}(TC),$ which acts as:
\[
\bar{\partial}_{\epsilon}  \to e^{-w} \bar{\partial}_{\epsilon}   e^w.
\]
Over the Artinian algebra $A_2 = \CC[\epsilon_1, \epsilon_2]/(\epsilon_1^2, \epsilon_2^2)$, $w$ becomes:
\begin{equation} \label{gfuctv}
w= \epsilon_1 {\alpha}^1 + \epsilon_2  {\alpha}^2  + \epsilon_1 \epsilon_2  {\alpha}^{12} =
(\epsilon_1 {\alpha}^1(z)  + \epsilon_2  {\alpha}^2(z)  + \epsilon_1 \epsilon_2  {\alpha}^{12}(z)    ) \frac{\partial}{\partial{z}},
\end{equation}
so
\[
e^w = 1 + \epsilon_1 {\alpha}^1 + \epsilon_2  {\alpha}^2  + 
\epsilon_1 \epsilon_2 ( {\alpha}^{12} + \frac{1}{2} ({\alpha}^1 {\alpha}^2  +  {\alpha}^2 {\alpha}^1))
\]
and the action on $\bar{\partial}_{\epsilon}$ is given by the somewhat complicated transformation formulas:
\begin{equation}\label{transfo}
  \begin{aligned}
h^i & \mapsto h^i  + \bar{\partial} \alpha^i, \quad i=1,2 \\
h^{12} & \mapsto h^{12} + \bar{\partial}    {\alpha}^{12} + [h^1,\alpha^2] +  [h^2,\alpha^1]  - \frac{1}{2} ( [\alpha^1 , \bar{\partial}  {\alpha}^{2}] +  [\alpha^2,  \bar{\partial}  {\alpha}^{1} ] ),
  \end{aligned}
\end{equation}
as can be seen from the calculation:
\[
\begin{aligned}
\bar{\partial}_{\epsilon}  & \mapsto && e^{-w} \bar{\partial}_{\epsilon}   e^w  \\
&= && (1 - \epsilon_1 {\alpha}^1 - \epsilon_2  {\alpha}^2  - 
\epsilon_1 \epsilon_2 ( {\alpha}^{12} - \frac{1}{2} ({\alpha}^1 {\alpha}^2  +  {\alpha}^2 {\alpha}^1))) \\
&&&(\bar{\partial} + (\epsilon_1 h^1  + \epsilon_2 h^2  + \epsilon_1 \epsilon_2 \h))\\
&&&(1 + \epsilon_1 {\alpha}^1 + \epsilon_2  {\alpha}^2  + 
\epsilon_1 \epsilon_2 ( {\alpha}^{12} + \frac{1}{2} ({\alpha}^1 {\alpha}^2  +  {\alpha}^2 {\alpha}^1))) \\
&=&& \bar{\partial} +\\
& &&\epsilon_1 ( h^1  +  \bar{\partial}(\alpha_1) )+ \\
&& &\epsilon_2  (h^2  +  \bar{\partial}(\alpha_2) )+ \\
&&& \epsilon_1 \epsilon_2 (\h + [h^1,\alpha^2] +  [h^2,\alpha^1] + \bar{\partial}    {\alpha}^{12} +  \\  
&&& \frac{1}{2}  \bar{\partial}    ({\alpha}^1 {\alpha}^2  +  {\alpha}^2 {\alpha}^1))  +  
({\alpha}^1 {\alpha}^2  +  {\alpha}^2 {\alpha}^1)  \bar{\partial}   \\
&&&-(\alpha^1   \bar{\partial}  {\alpha}^{2} + \alpha^2   \bar{\partial}  {\alpha}^{1} ) - 
 ({\alpha}^1 {\alpha}^2  +  {\alpha}^2 {\alpha}^1)  \bar{\partial}    )  \\
&=&& \bar{\partial}_{\epsilon} + \epsilon_1   \bar{\partial}(\alpha_1) + \epsilon_2  \bar{\partial}(\alpha_2) + \\
&&& \epsilon_1 \epsilon_2 ( \bar{\partial}    {\alpha}^{12} + [h^1,\alpha^2] +  [h^2,\alpha^1]  - \\
&&&\frac{1}{2} ( [\alpha^1 , \bar{\partial}  {\alpha}^{2}] +  [\alpha^2,  \bar{\partial}  {\alpha}^{1} ] )).
\end{aligned}
\]

It is of course much easier to work infinitesimally, i.e. to switch from the Lie group to its Lie algebra. In our case this simply means that we drop terms quadratic in the $\alpha$'s, so the transformation formulas \eqref{transfo} are replaced by the infinitesimal transformation formulas :
\begin{equation}\label{infinitesimaltransfo}
  \begin{aligned}
h^i & \mapsto h^i  + \bar{\partial} \alpha^i, \quad i=1,2 \\
h^{12} & \mapsto h^{12} + \bar{\partial}    {\alpha}^{12} + [h^1,\alpha^2] +  [h^2,\alpha^1].
  \end{aligned}
\end{equation}

\subsubsection{The pairing} \label{tsdrfv2}

To prove our Proposition, we exhibit a natural pairing between elements $\varphi$ of the central  cohomology group 
$ H^0(\OO(2*,2*,2))^+$ in \eqref{H442}, which is dual to the central  cohomology group in extension  \eqref{H*442}, and elements of the extension \eqref{D2bos}, which according to Proposition \ref{interpretations of Atiyah class} we describe in terms of pairs 
\[
h^1 \otimes h^2, h^{12},
\] 
where  $h^1, h^2, h^{12}  \in  {\mathcal{A}}^{0,1}(TC)$ are the parameters \eqref{DDC} on the deformation space $M(A_2)$, i.e. they are defined only modulo the infinitesimal transformation  rule \eqref{infinitesimaltransfo}. If $\varphi$ is in the kernel of the Residue map, i.e. $\varphi \in \Sym^2 H^0(K_C)$, it is clear that we can pair it with the cohomology class of $h^1 \otimes h^2$. Similarly, it is clear how to pair $\Res (\varphi)$ with the cohomology class of $h^{12}$. The subtle point is to check that the non-trivial extension   \eqref{H*442} exactly matches the non-trivial transformation rule \eqref{infinitesimaltransfo}.

\def\t{\widetilde}

We define the pairing to be
\begin{equation} \label{pairing}
<\varphi,h>  := \int_{C \times C} \varphi (h^1 \boxtimes h^2)  - \pi i \int_{C} \Res (\varphi) h^{12} =: I_1 - I_2.
\end{equation}
The definition of $I_1$  requires some care, since the form $\varphi$ has a (second order) pole along the diagonal in $C \times C$. 

Let $u$ be a local parameter that vanishes along the diagonal $C\subset C\times C$.  
We are trying to integrate over $C\times C$ a form $\Theta$ that has a singularity near $u=0$ that looks
like $\d u\d\bar u/u^2$.  An integral with such a singularity is not absolutely convergent, so {\it a priori} it is not well-defined.
 However, because
the average of the function $1/u^2$ over a circle $|u|=\varepsilon$ (for small positive $\varepsilon$) vanishes,
the integral can be defined by restricting to $|u|\geq \varepsilon$ and then taking the limit $\varepsilon\to 0$.  Accordingly, the integral $I_1$
involves what is known in distribution theory as a principal value distribution.

A coordinate free way to define the integral $I_1$ is as follows.  Let $\Theta=\varphi(h^1\boxtimes h^2)$ be the form that we are trying
to integrate; it has the singular behavior of $\d u\d\bar u/u^2$ near the diagonal.  We make the decomposition
\begin{equation}\label{tehb}\Theta=\d\Lambda+\Xi,\end{equation}
where the form $\Xi$ is everywhere smooth and $\Lambda$ behaves near the diagonal as $\d\bar u/u$.  There is no obstruction in making such
a decomposition.  It is unique up to
\begin{equation}\label{mexo}\Lambda\to \Lambda+\lambda, ~~\Xi\to \Xi-\d\lambda,\end{equation}
where the form $\lambda$ is smooth.  Given such a decomposition, we define $I_1$ as
\begin{equation}\label{exo}I_1=\int_{C\times C}\Xi. \end{equation}
In other words, formally we take $\int_{C\times C}\d\Lambda=0$.  This definition of $I_1$ is well-defined, since $\int_{C\times C}\Xi$
is certainly invariant under $\Xi\to \Xi-\d\lambda$, where $\lambda$ is smooth.  The definition of $I_1$ given in (\ref{exo}) also agrees with the
principal value procedure mentioned in the last paragraph.  The reason for this is that if, using some arbitrary Riemannian metric on $C$, we define 
 $(C\times C)_\varepsilon$ ito be the submanifold of $C\times C$
consisting of points that are a distance $\geq \varepsilon$ from the diagonal, and if $\Lambda$ behaves near the diagonal as  $\d\bar u/u$, then
$\lim_{\varepsilon\to 0}\int_{(C\times C)_\varepsilon}\d\Lambda=0$.  So the principal value procedure would indeed instruct us, after making the decomposition
$\Theta=\d\Lambda+\Xi$, to drop the $\d\Lambda$ term, as we have done in the definition (\ref{exo}).

For future reference, we will describe this situation a little more thoroughly.  The essence of the matter is that on the complex $u$-plane $\C_u$,
the $(1,0)$-form $\d u/u^2$ makes sense as a distribution on $(0,1)$ forms with compact support.  (Equivalently, $1/u^2$ makes sense as a distribution
on smooth two-forms of compact support, or  $\d u\,\d\bar u/u^2$
makes sense as a distribution on smooth functions with compact support.)  For any $(0,1)$-form
$\chi$ with compact support, we define  the pairing of the distribution $\d u/u^2$ with $\chi$ by using the fact that $\d u/u^2=-\partial(1/u)$ where as usual
$\partial=\d u \,\partial_u$.  So formally
\begin{equation}\label{lonely} \int_{\C_u}\frac{\d u}{u^2} \chi=\int_{\C_u}\frac{1}{u}\partial\chi. \end{equation}
The integral on the right hand side is absolutely convergent, and we take this formula as the definition of the left hand side, or in other words
the definition of the pairing of the distribution $\d u/u^2$ with the $(0,1)$-form $\chi$.  
In general, the derivative of a distribution is defined by integration by parts in this fashion.  For example, if $u=u_1+iu_2$, the delta function distribution
$\delta(u)$ is defined by saying that for any smooth function $f$,
\begin{equation}\label{mexxo}\int \d u_1\d u_2 \delta(u)f(u)=f(0). \end{equation}
We note for later reference that this formula has the inconvenient-looking consequence
\begin{equation}\label{exxo}\int \d u \d \bar u \,\delta(u) f(u)=-2if(0). \end{equation}
Defining the derivative $\partial_u\delta(u)$ by integration by parts, we have for any smooth $f$
\begin{equation}\label{lexxo}\int \d u_1\d u_2 \partial_u\delta(u)f(u)=-\int\d u_1\d u_2 \delta(u)\partial_uf(u)=-\partial_uf(0). \end{equation}
Now consider the classical formula (for example, see p. 62 of \cite{FJ})
\begin{equation}\label{exxop}\frac{\partial}{\partial \bar u}\frac{1}{u}=\pi \delta(u). \end{equation}
Defining the derivative of $1/u$ (understood as a distribution on smooth two-forms) by integration by parts,
the precise meaning of this formula is that for any smooth $f$ of compact support,
\begin{equation}\label{pexxo}\int \d u_1\d u_2 \frac{1}{u}\frac{\partial f}{\partial \bar u} =-\pi f(0). \end{equation}
Differentiating the formula (\ref{exxop}) with respect to $u$, we find that
\begin{equation}\label{wexo}\frac{\partial}{\partial\bar u}\frac{1}{u^2}=-\pi\partial_u\delta(u). \end{equation}

  In the last paragraph, we have considered an isolated $\d u/u^2$ singularity at $u=0$,  but the same applies in general to a $\d u/u^2$ singularity
  along a divisor in a complex manifold.  In our application, the complex manifold is $C\times C$ and the divisor is the diagonal.  
  
\subsubsection{Gauge invariance} \label{tsdrfv3}

To show gauge invariance of our pairing, we must show that when the $h$'s vary according to \eqref{infinitesimaltransfo}, although the integrals $I_1$ and $I_2$ change,  their difference remains invariant. Because $I_1$ is bilinear in $h_1$ and $h_2$, it suffices to consider the case that they are each supported in small open sets in $C$.
Because (as the following derivation will show) gauge-dependence of $I_1$ comes only from the behavior along the diagonal, we can assume
that $h_1$ and $h_2$ are supported in the same coordinate chart $O\subset C$.  We write $z$ for a local parameter in this chart and $x,y$ for
its pullbacks to the two factors of $C\times C$.

By linearity, it suffices to check gauge invariance when
 only one of the $\alpha$'s is non-zero. We take $\alpha^1 \neq 0, \quad \alpha^2 = \alpha^{12}=0$. (The case of non-zero $\alpha^2$ is identical, and the case of non-zero $\alpha^{12}$ is trivial.) So we have:
\begin{equation}\label{ppol}
\delta h^1 = \bar{\partial}\alpha^1, \quad \delta h^2 =0, \quad  \delta h^{12} = [h^2,\alpha^1]
,\end{equation}

We write
\begin{equation}\label{bogo} 
\alpha^1=a(x,\bar x)\partial_x ,  ~~ h^2=\d\bar y \,h(y,\bar y)\partial_y,
\end{equation}
where the notation $a(x,\bar x)$ and $h(y,\bar y)$ is just meant to remind us that these functions are $C^\infty$  but neither
holomorphic nor antiholomorphic.  We also recall from \eqref{sde} that in the same local coordinates,
\begin{equation}
\varphi = f(x,y)\frac{dx^{\otimes 2} dy ^{\otimes 2} }{(x-y)^{2} },
\end{equation}
with holomorphic $f$.

We want to analyze
$$ \delta I_1=\int_{C\times C} \varphi(\bar\partial\alpha^1\boxtimes h^2) $$
by integration by  parts.   
Here,  $\bar\partial\alpha^1=\d\bar x( \partial a(x, \bar x)/\partial \bar x)\partial_x$,
and in detail our integral is in the chosen coordinates
\begin{equation} \label{zext} 
\delta I_1=\int_{C\times C}\d x\d\bar x \d y\d \bar y \, \frac{f(x,y)}{(x-y)^2} \frac{\partial}{\partial \bar x} a(x,\bar x)
\cdot h(y,\bar y).
\end{equation}
  After a few steps, we will arrive at a simple answer that
does not depend on the chosen system of coordinates.

Upon integrating by parts, we will have to calculate 
$$\frac{\partial}{\partial\bar x} \left(\frac{f(x,y)}{(x-y)^2}\right). $$
Because of the pole at $x=y$, this derivative must be understood in a distributional sense.
For a single pole, equation \eqref{exxop} gives:
\begin{equation}\label{distrib}\frac{\partial}{\partial\bar x}\frac{1}{x-y}=\pi \delta^2(x-y). \end{equation}
Differentiating this formula with respect to $y$ (or using equation \eqref{wexo}), we have:
$$\frac{\partial}{\partial\bar x}\frac{1}{(x-y)^2} 
= \pi \frac{\partial}{\partial y} \delta^2(x-y)
=-\frac{\pi}{2} (\frac{\partial}{\partial x} -\frac{\partial}{\partial y}) \delta^2(x-y). $$
Since $f$ is holomorphic, we find:
\begin{equation} \label{asd}
\frac{\partial}{\partial\bar x} \frac{f(x,y)}{(x-y)^2}=-\frac{\pi}{2}  f(x,y)  (\frac{\partial}{\partial x} -\frac{\partial}{\partial y}) \delta^2(x-y). 
\end{equation} 
A fuller explanation of the meaning of these manipulations was given
 in section \ref{tsdrfv2}.

Now we are ready to integrate by parts in the formula  (\ref{zext}):
\begin{align}
\notag \delta I_1 &= 
\frac{\pi}{2} \int_{C\times C}\d x\d \bar x\d y\d\bar y 
\left(\left(\frac{\partial}{\partial x} -\frac{\partial}{\partial y}\right) \delta^2(x-y)\right)
f(x,y)a(x,\bar x) h(y,\bar y) \\
\notag &=  \frac{\pi}{2} \int_{C\times C}\d x\d \bar x\d y\d\bar y 
\left(\left
(\frac{\partial}{\partial x} -\frac{\partial}{\partial y}\right)
\left( \delta^2(x-y)
f(x,y) \right) \right)
a(x,\bar x) h(y,\bar y) \\
\notag &=  -\frac{\pi}{2} \int_{C\times C}\d x\d \bar x\d y\d\bar y 
 \ \delta^2(x-y) f(x,y)
\left(\frac{\partial a(x,\bar x)} {\partial x}   h(y,\bar y)
-a(x,\bar x) \frac{\partial h(y,\bar y) }{\partial y}\right) \\
  \label{belmo} &=  \pi i \int_{C}\d y\d\bar y \  f(y,y)
\left(\frac{\partial a(y,\bar y)} {\partial y}   h(y,\bar y)
-a(y,\bar y) \frac{\partial h(y,\bar y) }{\partial y}\right),
\end{align}
 where:
\begin{itemize}
\item the first formula comes by  integration by parts in (\ref{zext})  with respect to $\bar x$, using \eqref{asd};
\item the second step uses the fact that   
$ \delta^2(x-y) (\frac{\partial}{\partial x} -\frac{\partial}{\partial y}) f(x,y) =0$, since $f$ is symmetric;
\item the third step is integration by parts with respect to $x$ and $y$, and
\item in the fourth step the delta function integration amounts to substituting $y$ for $x$ 
and adjusting the coefficient by a factor of $-2i$ as in \eqref{exxo}.
\end{itemize}

At this point, we can restate our result in an invariant language.  We have $\Res(\varphi)=(\d y)^2f(y,y)$, 
and the two terms in eqn. (\ref{belmo}) 
%with opposite sign 
add up to a Lie bracket $[h^2,\alpha_1]$.  So
finally
$$
\begin{aligned}
\delta I_1= \pi i & \int_C \Res(\varphi)[h^2,\alpha_1]\\
= \pi i &\int_C \Res(\varphi) \delta \h \\
=\delta & I_2.
\end{aligned}
$$

\subsection{Proof for the obstruction  class} \label{sfg}

In this section, we will perform an analysis for the obstruction class $\omega_{\SM_g}$ to splitting of supermoduli space $\SM_g$ that is analogous to the analysis
of the Atiyah class  of  $\M_g$ that we have done in section \ref{oyf}.    The required steps are quite similar.  In section 
\ref{tsdrfw1}, we describe deformations
of a split super Riemann surface $S$ over the $\Z_2$-graded Artin ring $\C[\eta_1,\eta_2]$, with odd parameters $\eta_1,\eta_2$.
In section \ref{tsdrfw2}, we describe the pairing of such a deformation with an element of the extension that we claim is associated to $\omega_{\SM_g}$ and show
its gauge-invariance.  Though the logic in defining the pairing and proving its gauge invariance are the same as in the bosonic case, technically the details are simpler, since
the form we have to integrate has only a simple pole on the diagonal and the integral is absolutely convergent.
On the other hand, the description of deformations of a super Riemann surface involves a few details that are probably less familiar than their
bosonic analogs.

\subsubsection{Deformations} \label{tsdrfw1}

\def\D{{\mathcal D}}
\def\S{{\mathcal S}}
\def\h{\widehat} 
\def\O{{\mathcal O}}
We will describe deformation theory of  a super Riemann surface from a smooth point of view.
To do this, we will first  embed the sheaf of  (holomorphic) functions on a super Riemann surface $S$ as a subsheaf
of the sheaf of functions on a smooth supermanifold $\h S$. In the last sentence, the word ``holomorphic'' is in parentheses because
it is actually redundant: a super Riemann surface $S$ is best defined as a purely holomorphic object, in terms of a certain sheaf of
$\Z_2$-graded holomorphic functions (see for example \cite{Not projected} for this definition) over its reduced space $C=S_\red$, which is an ordinary
Riemann surface.  So there is no notion of functions
on $S$ that are not holomorphic.  We will introduce $\h S$ precisely in order to have such a notion.

There are several approaches to embedding the sheaf of holomorphic functions on $S$ in the sheaf of functions on a smooth supermanifold $\h S$, 
but for us the most economical approach will be most convenient.\footnote{\label{someauthors} Some authors instead take $\h S$ to be a real supermanifold of dimension
$2|2$, with local anti holomorphic coordinates $\bar z|\bar \theta$ that are complex conjugates of  local holomorphic coordinates $ z|\theta$.
 Then there is a canonical choice of $\h S$, but  on the other hand introducing $\bar\theta$ is extraneous for our purposes
 and the $\h S$ constructed this way
 is not useful in most applications in string theory.  The object $\h S$ that we describe in the text can be understood as the worldsheet of a heterotic
 string.}  We 
work in the context of what have been called cs supermanifolds in \cite{DEF}, p. 95 (see also section 5 of \cite{WittenNotes} for a more leisurely account). 
A brief definition is as follows.  A supermanifold is a locally ringed space $S=(M,\OO_S)$ consisting of a manifold $M$ and a sheaf of $\ZZ/2$-graded algebras $\OO_S$ on it which is locally isomorphic to the sheaf of sections of the exterior algebra of a vector bundle $V\to M$. 
If $M$ is a real manifold, $V$ is a real vector bundle, and by ``sections'' we mean smooth sections, then $S$ is a real supermanifold.  If $M$ is a complex manifold,
$V$ is a holomorphic vector bundle, and ``sections'' are holomorphic sections, then $S$ is a complex supermanifold.  If $M$ is real, but $V$ is a complex vector bundle,
and ``sections'' are complex-valued smooth sections, then $S$ is a cs supermanifold.

  Like a real supermanifold, a cs supermanifold $S$ is determined
up to isomorphism by its reduced space $M$ and the normal bundle to $M$ in $S$, which is also called the odd tangent bundle $T_-M$.   This can be proved by the same arguments used for real supermanifolds. In the case of a real
supermanifold, $T_-M=\Pi V$ where $V\to M$ is a real vector bundle  (and $\Pi V$ is $V$ with parity reversed, in other words, with the fibers  taken to be purely odd); 
in the case of a cs supermanifold, $T_-M=\Pi V$ where now $V\to M$ is a complex vector bundle.  From the fact that $S$ is determined up to isomorphism
by the pair $M, V$, it follows that any family of cs super manifolds is locally constant, up to isomorphism.  
There is no notion of a function on a cs supermanifold being real (except after reducing modulo the odd variables),
just as there is no notion of a real section of a complex vector bundle.
Functions on a cs manifold are the analog of complex-valued smooth functions on an ordinary manifold.   
A real supermanifold is the same thing as a cs supermanifold
in which one is given a notion of what functions are real. 
  
A split complex supermanifold $S=(M,\bigwedge^{\bullet}V)$ determines the cs supermanifold
\begin{equation}\label{orelf}
\h S := (M_{\RR},{\mathcal C}^{\infty}(M_{\RR}) \otimes_{\OO_{M}} \OO_S) =  
(M_{\RR},{\mathcal C}^{\infty}(M_{\RR}) \otimes_{\OO_{M}}(\bigwedge^{\bullet}V))  =
(M_{\RR},{\mathcal C}^{\infty}(\bigwedge^{\bullet}V)), 
\end{equation}
where $M_{\RR}$ is the real smooth manifold underlying the complex $M$. This still works when $S$ is projected, but does not extend naturally to the general nonsplit case
(in which $S$ depends on some odd parameters), since in general $\OO_S$ need not be a sheaf of $\OO_M$ modules.

For our application,
we take $\h S$ to be a smooth cs supermanifold of dimension $2|1$, whose reduced space is an ordinary Riemann surface $C$.
   We describe $S$ locally by even coordinates
$\t z$ and $z$ and an odd coordinate $\theta$.  We assume\footnote{This assumption is stronger than necessary: in what follows, it would make sense to   assume
merely that $\t z$ is sufficiently close to the complex conjugate of $z$ in the sense that 
 $\t z$ and $z$ reduce to  local complex coordinates in oppositely oriented complex structures on $\hat S_\red$.
 However, since we will work over an Artin ring, and make only nilpotent deformations away from the familiar situation in which $\t z$ is
the complex conjugate of $z$, the condition stated in the text is natural.}  that modulo the odd variables, $\t z$ is the complex conjugate of $z$
and that $z$ and $\t z$ are local holomorphic and antiholomorphic coordinates on the reduced space $S_\red$.  We assume that $\h S$
is endowed with a sheaf of holomorphic functions, which are the functions annihilated by an operator that we can write
\begin{equation}\label{melob}\t\partial=\d\t z \frac{\partial}{\partial\t z}. \end{equation}
Thus locally holomorphic functions are functions $f(z|\theta)$. 
The reduced space of $\h S$ is assumed to be that of the purely holomorphic super Riemann surface $S$ that we started with (namely
the Riemann surface $C$), and the sheaf of
holomorphic functions on $\h S$, understood as a sheaf on $C$, is assumed to coincide with the sheaf of holomorphic functions on $S$.
Thus one can think of $\h S$ as a smooth supermanifold that maps to the complex supermanifold $S$.
 Similarly, $\h S$ is endowed with a sheaf of antiholomorphic
functions which locally are  functions of $\t z$, in other words the functions annihilated by $\partial_z$ and $\partial_\theta$.
(These vector fields generate the holomorphic tangent bundle of $\h S$, which we will denote as  $TS$, since it corresponds in an obvious sense to
the tangent bundle of the super Riemann surface $S$.  A section of $TS\to \h S$ is holomorphic if  it can be written locally as $\nu \partial_\theta +w\partial_z$
where $\nu$ and $w$ are holomorphic functions on $\h S$. )
Thus one can also think of $\h S$ as a smooth supermanifold that maps to a complex manifold $\t{C}$ (which is the complex conjugate of  $S_\red=C$ if $z$ reduces
modulo nilpotents to the complex conjugate of $\t z$). Putting these together, we get an embedding of  $\h S$  into 
$S \times \t{C}$.
The choice of the embedding is not canonical (unless $S$ is split),
but varying the embedding (with small parameters) affects neither the holomorphic
nor the antiholomorphic structure of $\h S$, which are pulled back from $S$ and $\t{C}$, respectively.
Deformations of the holomorphic and antiholomorphic structures of $\h S$ are therefore
simply deformations of the holomorphic structures of  $S$ and $\t{C}$.

\def\K{{\mathcal K}}
   The subtlety in this construction is that unless $S$ is split, there is no canonical construction of
$\h S$.   
(Our ultimate application will involve the case that $S$ is split, in which case, as in eqn. (\ref{orelf}),  there is a canonical $\h S$.)
There is no obstruction to deforming $\h S$ to compensate for small
deformations of $S$ (or of $\t C$), so a suitable $\h S$ always exists.   Any choice of $\h S$ provides a framework for studying deformations of $S$
from a smooth point of view.
Moreover, since a family of cs supermanifolds is locally constant up to isomorphism, when we vary $\t C\times S$ over an Artin ring
(or even when we make small deformations in the complex topology) we can consider $\h S$ to be constant while only its  holomorphic and
antiholomorphic structures are modified.   This is the analog of the statement that in deformation theory of an ordinary Riemann surface $C$, we can consider
$C$ to be fixed as a smooth two-manifold, while only its holomorphic and antiholomorphic structures vary.  So it is the reason that the consideration of $\h S$
gives a framework for studying the deformations of $S$ from a smooth point of view.

What we have described so far is a cs supermanifold $\h S$ 
 that has a holomorphic structure in which it is a  complex supermanifold of dimension $1|1$, and an antiholomorphic structure in which it is
 a complex supermanifold of dimension $1|0$.  (Moreover, these structures are complex conjugate if one reduces modulo the odd variables.)  $\h S$ is a smooth model of a super 
Riemann surface if in addition $TS$ is endowed with a rank $0|1$ holomorphic  subbundle  $\D$ that is everywhere unintegrable.  An easy lemma shows that if so, then locally
one can choose local holomorphic coordinates $z|\theta $ on $\h S$ -- called local superconformal coordinates -- such that  $\D$ is generated by
\begin{equation}\label{welob} D_\theta=\partial_\theta+\theta\partial_z. \end{equation}
The sheaf of  holomorphic sections of $TS$ has a subsheaf $\S$ consisting of ``superconformal vector fields.'' These are holomorphic vector fields
that preserve the subbundle $\D\subset TS$.  Concretely, in local superconformal coordinates $z|\theta$, the general form of 
a superconformal vector field $u$ is
\begin{equation}\label{below}u= \nu(z)\left(\partial_\theta-\theta\partial_z\right)+ w(z)\partial_z+\frac{1}{2}\partial_z w(z)\theta\partial_\theta,\end{equation}
where $\nu(z)$ and $w(z)$ are functions only of $z$ and not $\theta$.  If $S$ is split (but not otherwise), the decomposition of $u$ as the sum of
an ``even'' part involving $w$ and an ``odd'' part proportional to $\nu$ is valid globally. Locally, a  smooth section of $\S\to \h S$ is given again by the formula (\ref{below}),
except that now $\nu$ and $w$ depend on both $z $ and $\t z$ (but not on $\theta$), as in eqn. (\ref{melf}) below.

Now let us recall some facts about the purely holomorphic theory of a super Riemann surface $S$.
The first order deformations of $S$ as a complex supermanifold are parametrized by $H^1(S,TS)$, but its first order deformations
as a super Riemann surface are parametrized by $H^1(S,\S)$.  The basis for these statements is standard:  $TS$ (or $\S$) is the sheaf of
infinitesimal automorphisms of $S$ as a complex supermanifold (or as a super Riemann surface), so $H^1(S,TS)$ (or $H^1(S,\S)$) parametrizes
its first order deformations as a complex supermanifold (or as a super Riemann surface).    In all these statements, $S$ is understood as a purely
holomorphic object, and $H^1(S,TS)$ and $H^1(S,\S)$ can be defined, for example, as Cech cohomology groups. 

An immediate consequence of introducing $\h S$ is that there is a Dolbeault model for $H^1(S,TS)$ and $H^1(S,\S)$:  they can be computed
as the Dolbeault cohomology groups $H^1(\h S,TS)$ and $H^1(\h S,\S)$.   (This can be proved by slightly adapting any standard proof of the relation of Cech and Dolbeault cohomology.)
Thus, an element of $H^1(\h S,TS)$ is represented by a $(0,1)$-form $\lambda$ on $\h S$ valued in $TS$.  Locally, after trivializing $TS$ with the basis $\partial_\theta,\partial_z$,
we can write such a form as
\begin{equation}\label{mendo}\lambda=\chi\partial_\theta +h\partial_z,\end{equation}
where $\chi$ and $h$ are smooth 
%functions 
(0,1)-forms on $\h S$.
(One cannot make such a decomposition of $\lambda$ globally unless $S$ is split.)
The deformation by $\lambda$ is subject to the usual equivalence relation of Dolbeault cohomology
\begin{equation}\label{endo}\lambda\to \lambda+\t\partial u, \end{equation}
where $u$ is a smooth section of $TS\to \h S$.  The advantage of the Dolbeault description is that it is straighforward to describe deformations of higher order.    
To go beyond a first order deformation, we perturb the operator $\t\partial$ to 
\begin{equation}\label{mody} \t\partial'=\t\partial+\chi \frac{\partial}{\partial\theta}+h\frac{\partial}{\partial z},
\end{equation} 
and we do not work just to first order in $\chi$ and $h$.  After the perturbation, a holomorphic function is a smooth function on $\h S$ that is annihilated by
$\t\partial'$.
For example, to construct deformations over a $\Z_2$-graded Artin ring $B$, we simply take $\chi$ and $h$ to have coefficients in $B$.
The gauge equivalence relation for such deformations is simply
\begin{equation}\label{bundle}\t\partial' \to e^{-u}\t\partial e^u, \end{equation}
where $u$ is a section of $TS$ with coefficients in $B$.  Here we take $\chi, h$, and $u$ to all vanish if reduced modulo the ideal of nilpotent elements of $B$.

Similarly, $H^1(S,\S)$, which parametrizes the first order deformations of $S$ as a super Riemann surface, can be computed as the Dolbeault cohomology group
$H^1(\h S,\S)$.   The procedure is the same as before, except that  $\lambda$ must now be a $(0,1)$-form valued in $\S$:
\begin{equation}\label{pendo}\lambda=\chi(z,\t z)\left(\partial_\theta-\theta\partial_z\right)+\left(h(z,\t z)\partial_z 
+\frac{1}{2}\partial_z h(z,\t z)\theta\partial_\theta\right) .\end{equation}
The equivalence relation can still be written as in (\ref{endo}), but now $u$ is a smooth section of $\S$:
\begin{equation}\label{melf}u=\nu(z,\t z)\left(\partial_\theta-\theta\partial_z\right)+w(z,\t z)\partial_z+\frac{1}{2}\partial_z w(z,\t z)\theta\partial_\theta. \end{equation}
To go beyond a first order deformation, we perturb the operator $\t\partial$ to
\begin{equation}\label{ody} \t\partial'=\t\partial+\chi(z,\t z)\left(\partial_\theta-\theta\partial_z\right)
+\left(h(z,\t z)\partial_z+\frac{1}{2}\partial_z h\theta\partial_\theta\right).\end{equation}
The equivalence relation takes the same form as in (\ref{bundle}) except that now, of course, $u$ is a smooth section of $\S$ and thus has the local
form (\ref{melf}).

The procedure of the last paragraph can be used, in particular,  to describe deformations of the super Riemann surface $S$ parametrized by
any $\Z_2$-graded Artin ring $B$.  For our purposes, we want to take $B$ to be $\C[\eta_1,\eta_2]$, where $\eta_1$ and $\eta_2$ are odd parameters.  
We also assume that prior to the deformation, $S=\Pi T^{1/2}C$ is split.  (This is the case we need to investigate the obstruction class $\omega_{\SM_g}$.)
Then the most general 
nilpotent
deformation (\ref{ody}) takes the form
\begin{equation}\label{blocy}\t\partial'=\t\partial+\sum_{i=1,2}\eta_i\chi^i(z,\t z)\left(\partial_\theta-\theta\partial_z\right)
+\eta_1\eta_2\left(h^{12}(z,\t z)\partial_z+\frac{1}{2}\partial_z h^{12}(z,\t z)\theta\partial_\theta\right).\end{equation}
Thus, the deformation involves the two $(0,1)$-forms $\chi^i$, $i=1,2$ on $C$ valued in $T^{1/2}C$, and a $(0,1)$-form $h^{12}$ on $C$ valued in $TC$.
The gauge parameter $u$ has a similar expansion:
\begin{equation}\label{locy} u =\sum_{i=1,2} \eta_i \nu^i(z,\t z)\left(\partial_\theta-\theta\partial_z\right) +\eta_1\eta_2 \left(w(z,\t z)\partial_z +\frac{1}{2}
\partial_z w(z,\t z)\theta\partial_\theta\right). \end{equation}
The equivalence relation (\ref{bundle}) can easily be worked out explicitly.  As in the bosonic case, the most useful version of the
equivalence relation is the linearized version, in which we work to first order in the vector field $u$.
This comes out to be \begin{align}\label{delbox} \chi^i& \to \chi^i+\t\partial \nu^i~~i=1,2\cr
                                              h^{12} & \to h^{12}+\t\partial w +\nu^1\chi^2-\nu^2\chi^1. \end{align}
The analogy with the corresponding bosonic formula (\ref{infinitesimaltransfo}) is hopefully clear.

\subsubsection{The pairing and its gauge invariance} \label{tsdrfw2}

We are in a situation very similar to what we encountered in section \ref{tsdrfv2}.  
The claim of  Proposition \ref{CxC} is that the extension expressing the first
obstruction $\omega$ to splitting of $\SM_g$ is associated to the global sections on $C \times C$ of 
$\OO(3,3,1)= K^{3/2}\boxtimes K^{3/2}\otimes\O(\Delta)$.
We view a section $\varphi$ of $\OO(3,3,1)=K^{3/2}\boxtimes K^{3/2}\otimes \O(\Delta)$ as a section of $K^{3/2}\boxtimes K^{3/2}$ that has
a pole along the diagonal $C\subset C\times C$.  The residue of this pole is a section $\Res(\varphi)$ of $K^2\to C$.

To establish the proposition, we need to show that there is a natural pairing between 
such a $\varphi$  and the triple
$h=(\chi^1,\chi^2, h^{12})$ that describe a deformation of $S$ over the ring $\C[\eta_1,\eta_2]$.
Clearly, if $\Res(\varphi)=0$,  so that $\varphi$ is a holomorphic section of $K^{3/2}\boxtimes K^{3/2}\to C\times C$, then $\varphi$ can be paired with
$\chi^1\boxtimes\chi^2\in H^2(C\times C,T^{1/2}\boxtimes T^{1/2})$.  Equally clearly, $\Res(\varphi)\in H^0(C, K^2)$ can be paired with $h^{12}\in H^1(C, T)$.  

The more subtle fact is that these pairings can be combined to give a pairing of $\varphi$ with the triple $h=(\chi^1,\chi^2, h^{12})$ that is invariant under (\ref{delbox}).  
The definition is precisely analogous to eqn. (\ref{pairing}):
\begin{equation}\label{fermipairing}\langle \varphi,h\rangle:= \int_{C\times C}\varphi(\chi^1\boxtimes\chi^2) - 2\pi i \int_C\Res(\varphi)h^{12}:=I_1- I_2. \end{equation}

Because $\varphi$ now has only a simple pole along the diagonal, there is not much to say about the well-definedness of this formula: the integral $I_1$
is absolutely convergent, because the form $ \d u\d\bar u/u$ is integrable near $u=0$.  The proof of gauge-invariance is similar to what
we explained in section \ref{tsdrfv3}, but much simpler, again because $\varphi$ has only a simple pole.  By linearity, it suffices to verify separately the 
invariance of the pairing under the transformations generated by the parameters $\nu^1$, $\nu^2$, and $w$ in (\ref{delbox}).  Invariance under the even
parameter $w$ is a
triviality, and the odd parameters $\nu^1$ and $\nu^2$ enter symmetrically (they are exchanged by the exchange of the two factors of $C\times C$),
so it suffices to verify that the pairing is invariant under the transformation generated by $\nu^1$.

We have
\begin{equation}\label{dolf}\delta I_1=\int_{C\times C}\varphi(\t\partial \nu^1,\chi^2). \end{equation}
Integrating by parts as we did in \eqref{belmo} but using  (\ref{distrib}) instead of (\ref{asd}), we get
\begin{equation}\label{olf}\delta I_1=2\pi i\int_C\Res(\varphi) \nu^1\chi^2. \end{equation}
In this formula, the product $\nu^1\chi^2$ is a $(0,1)$-form valued in $T$, which can be naturally paired with $\Res(\varphi)\in H^0(C,K^2)$.
In (\ref{fermipairing}), this variation of $I_1$ cancels the variation of $I_2$ that comes from the $\nu^1\chi^2$ term in the transformation of
$h^{12}$.

\subsection{Variant for punctured SRSs} \label{puncture}

\mbox {   }

Proposition \ref{CxC} has an analogue giving an extension class interpretation 
of the first obstruction to the splitting of the moduli space $\SM_{g,1}$ of marked super Riemann surfaces. 
We fix the point   $(S,p)$ of $\SM_{g,1}$, where $S$ is a split super Riemann surface:
$S=S(C,T_C^{1/2})$, with puncture $p \in C$.
Consider the sheaves (line bundles, actually):
\begin{equation}\label{abcp} 
\OO(a(p),b(p),c) := {{p_1}^* {K_C}^{\otimes a/2} (p)} \otimes  { {p_2}^* {K_C}^{\otimes b/2}(p) }    
\otimes   {\OO_{C \times C} (c \Delta)},
\end{equation}
and especially $\OO(3(p),3(p),1)$. As before, the restriction to the diagonal gives a short exact sequence:
\begin{equation}\label{331p}
0 \to \OO(3(p),3(p),0) \to \OO(3(p),3(p),1) \to  K_C{^2}(2p) \to 0.
\end{equation}
This may be the most obvious guess for the punctured analogue of Proposition \ref{CxC},
but it turns out to give the wrong answer. Instead, we have to introduce the subsheaves
\begin{equation}\label{abcIp} 
I_p(a(p),b(p),c):=I_p \otimes  \OO(a(p),b(p),c),
\end{equation}
where $I_p$ is the ideal sheaf of the point $(p,p)$ in $C \times C$.  Now the 
restriction to the diagonal gives a short exact sequence:
\begin{equation}\label{331Ip}
0 \to I_p(3(p),3(p),0) \to I_p(3(p),3(p),1) \to  K_C{^2}(p) \to 0
\end{equation}
where each term is a subsheaf of the corresponding term in \eqref{331p}. The long exact sequence of \eqref{331Ip} gives:
\[
0 \to H^0(  C{\times}C,  I_p(3(p),3(p),0) )\to H^0(  C{\times}C,  I_p(3(p),3(p),1) )  \to H^0(C, {K_C{^2}(p))}  \to 0,
\]
and the even part under the involution  $(x,y) \to (y,x)$ gives:
\begin{equation}\label{CxC with puncture} 
0 \to \bigwedge^2 H^0(K_{C}^{3/2}(p))
\to H^0(C{\times}C, \mbox{  }  I_p(3(p),3(p),1) \mbox{ } )^{+}
\to H^0(K_{C}^{2}(p))
\to 0.
\end {equation}
As previously, we are giving only the fiber  of this SES at the point   $(C,T_C^{1/2},p)$ of $\SM_{g,1}$. 
Everything globalizes naturally to a SES of vector bundles on $\SM_{g,1}$, 
obtained by replacing each occurrance of $H^0$ by $\pi_*$, where
$\pi: \SM_{g,1} \to \SM_{g}$ is the natural projection (= the universal curve).
\begin{proposition}\label{poiuy}
The extension class of SES \eqref{CxC with puncture} is the first obstruction to the splitting of  the moduli space  $\SM_{g,1}$ of punctured SRS.
\end{proposition}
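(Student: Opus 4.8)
The plan is to reproduce, \emph{mutatis mutandis}, the argument of section \ref{sfg}, the only new ingredients being the twists by the puncture $p$ and by the ideal sheaf $I_p$ on $C\times C$. First I would record the tangent bundles of $\SM_{g,1}$ at a split point $(S,p)=(S(C,T_C^{1/2}),p)$: one has $T_+\SM_{g,1}=H^1(C,T_C(-p))$ and $T_-\SM_{g,1}=H^1(C,T_C^{1/2}(-p))$ (the latter with the usual parity reversal), from the long exact sequences of $0\to T_C(-p)\to T_C\to T_C|_p\to 0$ and of its $T_C^{1/2}$-analogue together with $H^0(C,T_C)=H^0(C,T_C^{1/2})=0$. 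By Serre duality this gives $T_+^\vee=H^0(C,K_C^2(p))$ and $T_-^\vee=H^0(C,K_C^{3/2}(p))$. With the sign convention \eqref{ffdncvc} (here $a=b=3$), the even part of $H^0(C\times C,K_C^{3/2}(p)\boxtimes K_C^{3/2}(p))$ is $\wedge^2 H^0(K_C^{3/2}(p))=\wedge^2 T_-^\vee$; since an antisymmetric section automatically vanishes at the diagonal point $(p,p)$, this even part already lies in, hence equals, $H^0(C\times C,I_p(3(p),3(p),0))^+$, so the ideal twist leaves the sub-bundle of \eqref{CxC with puncture} unchanged. On the quotient side, restricting $I_p$ to the diagonal gives $\OO_C(-p)$, so the residue in \eqref{331Ip} lands in $K_C^2(2p)\otimes\OO_C(-p)=K_C^2(p)=T_+^\vee$. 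This is exactly where \eqref{331Ip} differs from the naive \eqref{331p}, whose quotient is the too-large $K_C^2(2p)$ (Serre-dual to $H^1(T_C(-2p))\ne T_+\SM_{g,1}$), and it is why \eqref{331p} gives the wrong answer. Thus \eqref{CxC with puncture} is an extension of $T_+^\vee$ by $\wedge^2 T_-^\vee$, i.e. of the shape \eqref{E*} for $S=\SM_{g,1}$, so its class lies in the correct group $\Ext^1(\wedge^2 T_-,T_+)$.

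Next I would adapt the smooth deformation model of section \ref{tsdrfw1}. A deformation of the split punctured super Riemann surface over $\CC[\eta_1,\eta_2]$ that preserves the NS puncture is described, as in \eqref{blocy}, by a triple $h=(\chi^1,\chi^2,h^{12})$, but now the $\chi^i$ (valued in $T_C^{1/2}$) and $h^{12}$ (valued in $T_C$) are required to vanish at $p$; equivalently one uses the honest Dolbeault resolutions of $T_C^{1/2}(-p)$ and $T_C(-p)$, so that $[\chi^i]\in H^1(C,T_C^{1/2}(-p))=T_-$ and $[h^{12}]\in H^1(C,T_C(-p))=T_+$. The linearized gauge action is again \eqref{delbox}, namely $\chi^i\mapsto\chi^i+\bar\partial\nu^i$ and $h^{12}\mapsto h^{12}+\bar\partial w+\nu^1\chi^2-\nu^2\chi^1$, but with gauge parameters $\nu^i$, $w$ also vanishing at $p$. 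By Proposition \ref{interpretations of obstruction}, the class of $(\chi^1\otimes\chi^2,h^{12})$ under \eqref{delbox} is precisely $\omega_{\SM_{g,1}}$, viewed as the extension class of the bundle $E$ dual to \eqref{E*}.

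Then, exactly as in \eqref{fermipairing}, I would define for $\varphi\in H^0(C\times C,I_p(3(p),3(p),1))^+$ the pairing
\[ \langle\varphi,h\rangle:=\int_{C\times C}\varphi(\chi^1\boxtimes\chi^2)-2\pi i\int_C\Res(\varphi)\,h^{12}=:I_1-I_2. \]
Both integrals converge: $\varphi$ has at worst a simple pole along the diagonal, so $\varphi(\chi^1\boxtimes\chi^2)\sim \d u\,\d\bar u/u$ there, and near $p$ the first-order zeros of $\chi^i$ (resp. of $h^{12}$) absorb the first-order poles of $\varphi$ (resp. of $\Res(\varphi)\in H^0(K_C^2(p))$) — indeed $\Res(\varphi)\cdot h^{12}$ is a genuine smooth $(1,1)$-form on $C$ because $h^{12}$ is $T_C(-p)$-valued. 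Gauge invariance is checked as in section \ref{tsdrfv3}: by linearity and by the symmetry exchanging $\nu^1\leftrightarrow\nu^2$ it suffices to take $\nu^2=w=0$, $\nu^1\ne 0$, so $\delta\chi^1=\bar\partial\nu^1$ and $\delta h^{12}=\nu^1\chi^2$; then $\delta I_1=\int_{C\times C}\varphi(\bar\partial\nu^1\boxtimes\chi^2)$, which upon integration by parts and use of $\partial_{\bar x}(x-y)^{-1}=\pi\delta^2(x-y)$ (eqn. \eqref{distrib}) collapses onto the diagonal to give $\delta I_1=2\pi i\int_C\Res(\varphi)\,\nu^1\chi^2=\delta I_2$; invariance under $w$ is immediate since $\Res(\varphi)$ is holomorphic and $w$ is $T_C(-p)$-valued, whence $\int_C\Res(\varphi)\bar\partial w=\int_C\bar\partial(\Res(\varphi)\cdot w)=0$. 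So $\langle\varphi,\,\cdot\,\rangle$ descends to a pairing of $H^0(C\times C,I_p(3(p),3(p),1))^+=E^*$ with $E$; on the $\Res=0$ locus it is the Serre pairing $\wedge^2 H^0(K_C^{3/2}(p))\times\wedge^2 H^1(T_C^{1/2}(-p))$ and on residues the Serre pairing $H^0(K_C^2(p))\times H^1(T_C(-p))$, and these together with gauge invariance force the pairing to be perfect and compatible with the extension maps, thereby identifying the extension class of \eqref{CxC with puncture} with $\omega_{\SM_{g,1}}$. Globalization over $\SM_{g,1}$ proceeds as in section \ref{sxdfgh}, replacing $H^0$ by pushforward along the universal curve.

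The computations themselves are, as in the unpunctured case, lighter than their bosonic counterparts, since $\varphi$ has only a simple pole. The main obstacle is the bookkeeping at the puncture: establishing rigorously that the correct smooth model for deformations of the \emph{pair} $(S,p)$ is the one using forms that vanish at $p$ — so that one genuinely computes $H^1$ of the $(-p)$-twisted sheaves and so that the transformation law \eqref{delbox} genuinely represents $\omega_{\SM_{g,1}}$ — and confirming that this is what forces the ideal-sheaf twist $I_p$ rather than the naive \eqref{331p}. Both points are made transparent by the residue identity $K_C^2(2p)\otimes\OO_C(-p)=K_C^2(p)$ and by the automatic vanishing of antisymmetric sections at $(p,p)$, but turning these observations into a clean matching of the two extensions is the step requiring the most care.
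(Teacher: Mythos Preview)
Your proposal is correct and follows essentially the same route as the paper's own proof: adapt the deformation model of section~\ref{tsdrfw1} by requiring the Dolbeault data $(\chi^i,h^{12})$ and gauge parameters $(\nu^i,w)$ to vanish at $p$, define the same pairing \eqref{fermipairing}, and observe that the extra poles of $\varphi$ along $C\times p$ and $p\times C$ are absorbed by the zeros of the $\chi^i$, so that the gauge-invariance computation is identical to the unpunctured case. Your explicit remarks on why the $I_p$-twist is forced (the residue must land in $K_C^2(p)$ rather than $K_C^2(2p)$, matching $T_+^\vee\SM_{g,1}$) and on the automatic vanishing at $(p,p)$ of sections antisymmetric in $x,y$ are correct and make explicit points the paper states more tersely.
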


The proof is of course parallel to the proof of Proposition \ref{CxC} given in the previous section.
Instead of a split super Riemann surface $S$ with underlying Riemann surface $C$, 
our basic object is now the punctured super Riemann surface: $(S,p)$.
Its sheaf of infinitesimal automorphisms as a punctured supermanifold is\footnote{$p$ is not a divisor in $S$ but a subvariety of codimension $1|1$,
so the sheaf of functions on $S$ that vanish at $p$ is not locally free, and either is $T_S(-p)$.  In the super Riemann surface case that we come to momentarily,
both $\S$ and $\S(-p)$ can be given the structure of a locally free sheaf, but we will not make use of this.}  $T_S(-p)$, 
the subsheaf of $T_S$ consisting of vector fields that vanish at $p$.
Likewise, its sheaf of infinitesimal automorphisms as a punctured super Riemann surface is  $\S(-p)$, 
the subsheaf of $\S$ consisting of superconformal vector fields that vanish at $p$.
It follows that the first order deformations of $(S,p)$ as a punctured complex supermanifold 
are parametrized by $H^1(S,TS(-p))$, 
while its first order deformations as a punctured super Riemann surface 
are parametrized by $H^1(S,\S(-p))$. 
As previously, this can be computed as the Dolbeault cohomology group
$H^1(\h S,\S(-p))$.  
The procedure is the same as in \eqref{pendo}, except that 
the  $(0,1)$-form $\lambda$ valued in $\S$: 
\begin{equation}\label{pendop}\lambda=\chi(z,\t z)\left(\partial_\theta-\theta\partial_z\right)+\left(h(z,\t z)\partial_z 
+\frac{1}{2}\partial_z h(z,\t z)\theta\partial_\theta\right) .\end{equation}
must now vanish at the puncture $p$.  Note tht $p$ is given by a condition such as $z|\theta=z_0|\theta_0$, and the condition
for $\lambda$ to vanish at $p$ is concretely 
\begin{equation}\label{holmo}-\chi(z_0,\t z)\theta_0+h(z_0,\t z)=0,~~\chi(z_0,\t z)+\frac{1}{2}\partial_z h(z_0,\t z)\theta_0=0.\end{equation}
The equivalence relation can also still be written as in (\ref{melf}), except that the gauge parameter $u$: 
\begin{equation}\label{melfp}u=\nu(z,\t z)\left(\partial_\theta-\theta\partial_z\right)+w(z,\t z)\partial_z+\frac{1}{2}\partial_z w(z,\t z)\theta\partial_\theta \end{equation}
must now vanish at $p$ as well.
In particular, the deformations parametrized by the $\Z_2$-graded Artinian ring $B=\C[\eta_1,\eta_2]$
are given by the immediate analogue of \eqref{blocy}:
\begin{equation}\label{blocyp}\t\partial'=\t\partial+\sum_{i=1,2}\eta_i\chi^i(z,\t z)\left(\partial_\theta-\theta\partial_z\right)
+\eta_1\eta_2\left(h^{12}(z,\t z)\partial_z+\frac{1}{2}\partial_z h^{12}(z,\t z)\theta\partial_\theta\right),\end{equation}
with coefficients that vanish at $p$. 
Thus, the deformation involves
\begin{equation} \label{btfg}
h^{12} \in {\mathcal{A}}^{0,1}(T_C(-p)), \quad
\chi^i \in {\mathcal{A}}^{0,1}(T_C^{1/2}(-p)), \ i=1,2.
\end{equation}
The gauge parameter $u$ has a similar expansion:
\begin{equation}\label{locyp} u =\sum_{i=1,2} \eta_i \nu^i(z,\t z)\left(\partial_\theta-\theta\partial_z\right) +\eta_1\eta_2 \left(w(z,\t z)\partial_z +\frac{1}{2}
\partial_z w(z,\t z)\theta\partial_\theta\right), \end{equation}
where now $\nu^i,w$ must vanish at $p$.
The (linearized) equivalence relation remains precisely (\ref{delbox}):
\begin{align}\label{delboxp} \chi^i& \to \chi^i+\t\partial \nu^i~~i=1,2\cr
  h^{12} & \to h^{12}+\t\partial w +\nu^1\chi^2-\nu^2\chi^1. \end{align}
  In our application, we are interested in deformations of a {\it split}  punctured super Riemann surface $S$ over the Artin ring
  $B=\C[\eta_1,\eta_2]$, so in contrast to eqn. (\ref{holmo}), we set $\theta_0=0$,  and view $p$ as a point in the reduced space $C=S_\red$.  So
  the $\chi^i$ are $(0,1)$-forms on $C$ valued in $TC^{1/2}(-p)$ and $h^{12}$ is a $(0,1)$-form on $C$ valued in $TC(-p)$. Likewise $\nu^i$ and $w$
  are smooth sections of $TC^{1/2}(-p)$ and of $TC(-p)$.

As in previous sections, we will now describe a pairing between even sections 
 $\varphi$ of $I_p(3(p),3(p),1)$ and the second order deformation data.
We view a section $\varphi$ of 
$\OO(3(p),3(p),1)=K^{3/2}(p)\boxtimes K^{3/2}(p)\otimes \O(\Delta)$ 
as a section of $K^{3/2}\boxtimes K^{3/2}$ that has
a pole along the diagonal $C\subset C\times C$ 
and along the horizontal and vertical curves $C \times p, p \times C$.  
The residue along the diagonal is a section $\Res(\varphi)$ of $K^2(2p)$.

Let $z$ be a local coordinate on $C$ centered at $p$, 
and let $x:=p_1^*(z), y:=p_2^*(z)$ be the corresponding local coordinates on  $C \times C$. 
Locally, our section $\varphi$ is of the form 
\begin{equation} \label{explicitp}
\varphi=f(x,y)\frac{\d x^{\otimes \frac{3}{2}} \d y ^{\otimes \frac{3}{2}}} {xy(x-y) },
\end{equation}
and its residue is:
\[
\Res(\varphi) = \frac{f(z,z)}{z^2} \d z^2.
\]
Note that our $\varphi$ is a section of the subsheaf $I_p(3(p),3(p),1) \subset \OO(3(p),3(p),1)$ if and only if 
$f(0,0)=0$, which is exactly the condition for $\Res(\varphi)$ to land in the subsheaf  
$K^2(p) \subset  K^2(2p)$. 
Note also that the function $f(x,y)$ is even (or odd) if and only if 
$\varphi$  is even (or odd, respectively), with respect to the action of the involution
 $(x,y) \to (y,x)$ of $C \times C$.

To establish the proposition, we need to show that there is a natural pairing between 
such a $\varphi$  and the triple
$h=(\chi_1,\chi_2, h_{12})$, as in  \eqref{btfg}, which describes a deformation of $(S,p)$ over the ring $\C[\eta_1,\eta_2]$.
Clearly, if $\Res(\varphi)=0$,  
so that $\varphi$ is a holomorphic section on $C\times C$ of $K^{3/2}(p)\boxtimes K^{3/2}(p)$, 
then $\varphi$ can be paired with
$\chi^1\boxtimes\chi^2\in H^2(C\times C,T^{1/2}(-p)\boxtimes T^{1/2}(-p))$.  
Equally clearly, $\Res(\varphi)\in H^0(C, K^2(p))$ can be paired with $h^{12}\in H^1(C, T(-p))$.  

As previously, we need to check that these pairings can be combined to give a pairing of $\varphi$ with the triple $h=(\chi^1,\chi^2, h^{12})$ that is invariant under (\ref{delboxp}). 
The definition is identical to equation (\ref{fermipairing}):
\begin{equation}\label{fermipairingp}\langle \varphi,h\rangle:= \int_{C\times C}\varphi(\chi^1\boxtimes\chi^2) +2\pi \int_C\Res(\varphi)h^{12}:=I_1+I_2. \end{equation}
The extra poles of $\varphi$, along the horizontal and vertical copies of $C$, 
are canceled by the extra zeros of $\chi^1,\chi^2$. So our new integral $I_1$ has the same behavior
along the diagonal as the one in 
equation (\ref{fermipairing}), and is therefore absolutely convergent. 
Likewise, the proof of gauge-invariance is identical to the non-punctured case, since the transformation formulas \eqref{delboxp} have the same form as the transformation formulas \eqref{delbox}, 
and are applied to a subset of the parameters $w, \nu^i$ that appeared there.
This completes the proof of Proposition \ref{poiuy}.

\section{Non-projectedness of $\SM_{g,1}$} \label{punctured non-splitness} \label{4}

The main result of \cite{Not projected} was that super moduli space $\SM_g$ 
is non-projected and non-split for $g \geq 5$. 
As explained there, this follows  from an algebraic geometry construction 
(embedding  $\SM_{g,1}$ in  $\SM_{\widetilde{g}}$ for appropriate ${\widetilde{g}}$)
together with:

\begin{theorem}
The first obstruction to the splitting of $\SM_{g,1}$:
\[
\omega \in H^1(\sM_{g,1}, \Hom(\exterior{2}T_-,T_+))
\]
does not vanish for $g \geq 2$ (and even spin), so the supermanifold $\SM_{g,1}$ is non-projected. 
\end{theorem}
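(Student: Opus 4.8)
The plan is to combine Proposition~\ref{poiuy} with a restriction argument. By that proposition, $\omega$ is the extension class of the short exact sequence of vector bundles on $\sM_{g,1}$ whose fibre over $(C,K_C^{1/2},p)$ is \eqref{CxC with puncture},
\[
0 \to \exterior{2} H^0(C,K_C^{3/2}(p)) \to H^0\bigl(C{\times}C,\, I_p(3(p),3(p),1)\bigr)^{+} \xrightarrow{\ \Res\ } H^0(C,K_C^{2}(p)) \to 0,
\]
namely the globalized form of \eqref{E*}; so it suffices to show that this sequence of $\OO_{\sM_{g,1}}$-modules does not split. Since every short exact sequence of vector spaces splits, the fibrewise sequence is automatically split, and the obstruction lies entirely in the variation over the base — it cannot be seen on a single super Riemann surface. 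The strategy is therefore to restrict the family to a well-chosen positive-dimensional base on which the extension becomes computable and to exhibit a nonzero class there.

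A convenient test base is a generic fibre $C$ of the forgetful morphism $\sM_{g,1}\to\sM_g$ over a point $(C,K_C^{1/2})\in\sM_g$ with even spin structure and $g\ge2$; its points are the punctures $p\in C$. Restricting the universal construction of Proposition~\ref{poiuy}, the surface $C\times C$ becomes a fibration over this base; pushing the $I_p$-twisted residue sequence \eqref{331Ip} forward and taking the involution-invariant part produces the restricted extension, whose class is $\omega|_C\in H^1\!\bigl(C,\Hom(\exterior{2}T_-|_C,\,T_+|_C)\bigr)$, so it suffices to prove $\omega|_C\neq0$. I would establish this by a direct computation on the curve $C$: represent $\omega|_C$ by a \v{C}ech cocycle, or, more concretely, use the gauge-invariant pairing \eqref{fermipairingp} — pairing an explicit even section $\varphi$ of $K_C^{3/2}\boxtimes K_C^{3/2}\otimes\OO(\Delta)$ lying in the $I_p$-twisted subsheaf (written in a local normal form near $\Delta$ and near $(p,p)$, as in \eqref{explicitp}) against a triple $(\chi^1,\chi^2,h^{12})$ as in \eqref{btfg} that realizes a suitably chosen nonzero class on the universal curve over $C$ — and read off a nonzero value, which no splitting of the sequence could produce.

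The hard part is precisely this non-vanishing. The delicate point, already visible in the proof of Proposition~\ref{poiuy}, is that the unmodified sequence \eqref{331p} attached to $\OO(3(p),3(p),1)$ gives a \emph{split} extension, and that only the passage to the subsheaves $I_p(3(p),3(p),1)$ produces the correct, non-split one; consequently the computation must keep careful track of the vanishing $f(p,p)=0$ imposed by the ideal sheaf $I_p$ and of the compensating zeros of $\chi^1,\chi^2,h^{12}$ at $p$ forced by \eqref{btfg}. I expect the non-vanishing to reduce, via the residue along the diagonal, to the statement that forming direct images along the universal curve over $C$ and extracting the involution-invariant part does not kill the non-triviality of the sheaf-level extension \eqref{331Ip} — concretely, that a certain quadratic differential with a simple pole at the generic point $p$ is not annihilated by the relevant coboundary map — which should follow from Riemann–Roch and the genericity of $(C,K_C^{1/2},p)$. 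Once $\omega|_C\neq0$ is established for one such fibre, $\omega\neq0$ on all of $\sM_{g,1}$ follows; the hypotheses $g\ge2$ and even spin are used only to guarantee that the cohomology groups occurring in \eqref{CxC with puncture} have their expected ranks and that the universal family over $C$ is generic enough for the computation to be carried out.
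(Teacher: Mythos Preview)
Your overall strategy---restrict to a fibre $C$ of $\pi:\sM_{g,1}\to\sM_g$ and show $\omega|_C\neq 0$---is exactly what the paper does. But after that point your proposal becomes a sketch of intentions rather than a proof: you say you will ``pair an explicit $\varphi$ against a triple $(\chi^1,\chi^2,h^{12})$ and read off a nonzero value, which no splitting of the sequence could produce,'' but the pairing \eqref{fermipairingp} is just the duality between the middle term of the extension and the middle term of its dual. Evaluating it on a single pair of elements cannot detect whether the extension splits; you need a genuinely structural argument to see the class. Your fallback suggestions (a \v{C}ech cocycle, Riemann--Roch, ``genericity'') are not fleshed out enough to count as a mechanism.

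The paper supplies the missing idea, and it is rather specific. One compares the punctured extension \eqref{with p} with the pullback $\pi^*$ of the \emph{unpunctured} extension \eqref{without p} and forms the term-by-term quotient sequence $Q$. A short sheaf computation on $C\times C$ (the Lemma in the proof) identifies $Q$ with the odd relative cotangent sequence of $\pi$ tensored with the line $T^*_{\pi,-}$. Restricted to $C$, that cotangent sequence is the pushforward of the residue sequence \eqref{SES}$_{0,3,1}$, whose extension class is the identity in $\End(H^0(K_C^{3/2}))$---hence nonzero. A diagram chase then shows that any splitting of \eqref{with p} would, after restriction, factor through $Q_1=K_{C,p}^{1/2}\otimes H^0(K_C^{3/2})$; here the hypothesis of \emph{even} spin enters in the precise form $H^0(C,T_C^{1/2})=0$ for generic $(C,T_C^{1/2})$, forcing that map to vanish, so the splitting would descend to a splitting of $Q$, a contradiction. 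Note that this is a sharper use of ``even spin'' than merely fixing ranks, as you suggested.

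One side remark: your claim that the unmodified sequence \eqref{331p} ``gives a split extension'' is not what the paper says and is not the reason it is discarded. The issue is simply that its quotient is $K_C^2(2p)$ rather than $K_C^2(p)$, so it does not model the correct cotangent space $T_+^*\SM_{g,1}$ at all; passing to the $I_p$-twist fixes the target, not the splitting behaviour.
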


Here and in the rest of this section, $T_{\pm}$ refer to  $T_{\pm}\SM_{g,1}$, which  are vector bundles on the reduced space $\sM_{g,1}$. 
In \cite{Not projected} we gave a proof of this result which relied on somewhat delicate properties of supermanifolds and maps between them. We reprove it here using the deformation theory we have developed, especially the explicit cohomological interpretation of the obstruction class.

\begin{proof}[proof]

We fix  a spin curve $(C,T_C^{1/2}) \in \sM_g^+$ which we identify as  a fiber of $\pi: \SM_{g,1}  \to \SM_g$, and choose a point $p \in C \subset \sM_{g,1}^+$. 
We have interpreted 
\[
\omega \in H^1(  \sM_{g,1}^+   , \Hom(\exterior{2}T_-,T_+))
\]
as the extension class of a short exact sequence of vector bundles, and have obtained a canonical identification \eqref{CxC with puncture} of the fiber at  $(C,T_C^{1/2},p)$  of this SES with:
\begin{equation}\label{with p}
0 \to \bigwedge^2 H^0(K_{C}^{3/2}(p))
\to H^0(C{\times}C, \mbox{  }  I_p(3(p),3(p),1) \mbox{ } )^{+}
\to H^0(K_{C}^{2}(p))
\to 0.
\end {equation}
On the other hand, we also have the pullback to $\sM_{g,1}$ of the first obstruction to the splitting of $\SM_{g}$. In Proposition {\ref{CxC}} we have obtained a canonical identification of the fiber at $(C,p)$ of the corresponding SES with our SES {\eqref{LES331}}$^+$:
\begin {equation}\label{without p}
0 \to \bigwedge^2 H^0(K_{C}^{3/2})
\to H^0(C{\times}C, \mbox{  } \OO(3,3,1) \mbox{ } )^{+}
\to H^0(K_{C}^{2})
\to 0.
\end {equation}
The pullback $\pi^*$ maps sequence \eqref{without p} into sequence \eqref{with p} , so we form the term by term quotient sequence $Q$:
\[
0 \to Q_1  \to Q_2 \to Q_3 \to 0.
\]
on the other hand, consider the odd cotangent sequence of $\pi$:
\begin{equation}\label{rel odd cotan}
0 \to \pi^* T_{-}^* \SM_g \to 
T_{-}^*\SM_{g,1}       \to 
T_{-}^*\pi  \to 0,
\end{equation}
whose fiber at $(C,T_C^{1/2},p)$ is:
\begin{equation}\label{rel odd cotan explicit}
0 \to       H^0(K_{C}^{3/2})     \to 
H^0(K_{C}^{3/2}(p))      \to 
 K^{1/2}_{C,p}       \to 0.
\end {equation}

\begin{lemma}
The quotient sequence $Q$ is canonically identified with the odd cotangent sequence \eqref{rel odd cotan} of $\pi$, tensored with the odd relative cotangent line bundle $T_{\pi,-}^*={T^*_{\SM_{g,1}  / \SM_g ,-}} $, whose fiber at $(C,T_C^{1/2},p)$ is \[
 (T_{\pi,-}^*) _{  (C,T_C^{1/2},p)    }  \quad =  \quad (T_{(C,T_C^{1/2},p),-}^*\SM_{g,1})/( T_{(C,T_C^{1/2},-)}^*\SM_{g}) \quad =  \quad      K_{C,p}^{1/2}.
\]
\end{lemma}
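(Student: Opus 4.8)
The plan is to identify the three terms $Q_1,Q_2,Q_3$ of the quotient sequence, and the two maps between them, with the three terms of \eqref{rel odd cotan} tensored with $T_{\pi,-}^{*}$ (fibrewise, with \eqref{rel odd cotan explicit} tensored with $K_{C,p}^{1/2}$). The two outer terms are the easy ones. Write $W:=H^0(C,K_C^{3/2})\subset W':=H^0(C,K_C^{3/2}(p))$; by \eqref{rel odd cotan explicit} the quotient is $W'/W\cong K_{C,p}^{1/2}$, surjectivity coming from $H^1(C,K_C^{3/2})=0$ for $g\geq 2$. Applying $\exterior{2}$ to $0\to W\to W'\to K_{C,p}^{1/2}\to 0$ yields the canonical exact sequence $0\to\exterior{2}W\to\exterior{2}W'\to W\otimes K_{C,p}^{1/2}\to 0$, so $Q_1=\exterior{2}W'/\exterior{2}W\cong W\otimes K_{C,p}^{1/2}=\pi^{*}T_-^{*}\SM_g\otimes T_{\pi,-}^{*}$. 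At the other end, $0\to K_C^2\to K_C^2(p)\to K_C^2(p)|_p\to 0$ together with $H^1(C,K_C^2)=0$ gives $Q_3=H^0(K_C^2(p))/H^0(K_C^2)\cong K_C^2(p)|_p=K_{C,p}^2\otimes T_{C,p}=K_{C,p}=K_{C,p}^{1/2}\otimes K_{C,p}^{1/2}=T_-^{*}\pi\otimes T_{\pi,-}^{*}$; the involution acts trivially here (cf. \eqref{abc+}), as it must, since the residue along $\Delta$ lands in a space of pure sign.

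The real content is the middle term. Since the three sheaves in \eqref{331} have no higher cohomology, $Q_2=(H^0\mathcal F)^{+}$ for the quotient sheaf $\mathcal F:=I_p(3(p),3(p),1)/\OO(3,3,1)$ on $C\times C$, and likewise $Q_1=(H^0\mathcal G)^{+}$ for $\mathcal G:=I_p(3(p),3(p),0)/\OO(3,3,0)$. The key claim to establish is that $\mathcal F$ (and $\mathcal G$) is supported on $(C\times p)\cup(p\times C)$ and splits canonically as $\mathcal F=\mathcal F_1\oplus\mathcal F_2$, with $\mathcal F_i$ a line bundle on the $i$-th axis and $\mathcal F_1,\mathcal F_2$ exchanged by the involution. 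This is a local statement at the single point $(p,p)$: writing $\mathfrak a$ for the invertible ideal sheaf of the divisor $(C\times p)+(p\times C)$ — locally $(xy)$ in coordinates $x=p_1^{*}z$, $y=p_2^{*}z$ — one has $\OO(3,3,c)=\OO(3(p),3(p),c)\otimes\mathfrak a$ and $I_p(3(p),3(p),c)=\OO(3(p),3(p),c)\otimes I_p$, so $\mathcal F=\OO(3(p),3(p),1)\otimes(I_p/\mathfrak a)$, and the Koszul presentation of $I_p$ gives $(x,y)/(xy)\cong\OO/(y)\oplus\OO/(x)$, the two summands being exactly the parts supported on the two axes. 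Away from $(p,p)$ the axes are disjoint and the splitting is obvious, so these patch to the global splitting.

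It then remains to identify $\mathcal F_1$ as a line bundle on $C\times p\cong C$. Restricting $\OO(3(p),3(p),1)=K_C^{3/2}(p)\boxtimes K_C^{3/2}(p)\otimes\OO(\Delta)$ to $C\times p$ — using $K_C^{3/2}(p)|_p=K_{C,p}^{1/2}$ on the second factor and $\OO(\Delta)|_{C\times p}=\OO_C(p)$ — gives $K_C^{3/2}(2p)\otimes K_{C,p}^{1/2}$, and the transition function of the torsion sheaf $I_p/\mathfrak a$ along $C\times p$ contributes a further twist by $\OO_C(-p)$, leaving $\mathcal F_1\cong K_C^{3/2}(p)\otimes K_{C,p}^{1/2}$; hence $Q_2\cong H^0(\mathcal F_1)=W'\otimes K_{C,p}^{1/2}=T_-^{*}\SM_{g,1}\otimes T_{\pi,-}^{*}$. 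The same computation with $c=0$ produces $\mathcal G_1\cong K_C^{3/2}\otimes K_{C,p}^{1/2}$ and recovers $Q_1$; an Euler-characteristic count, e.g. $\chi(\mathcal G)=\chi(I_p(3(p),3(p),0))-\chi(\OO(3,3,0))=4(g-1)=2\chi(K_C^{3/2}\otimes K_{C,p}^{1/2})$, is a convenient check that the degrees, hence the line bundles, are correct.

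Finally the maps: the inclusion $\mathcal G\hookrightarrow\mathcal F$ induced by $c=0\to c=1$ is, on the axis $C\times p$, the natural $K_C^{3/2}\hookrightarrow K_C^{3/2}(p)$ tensored with the constant line $K_{C,p}^{1/2}$, so $Q_1\to Q_2$ becomes $(W\hookrightarrow W')\otimes T_{\pi,-}^{*}$; and the residue map $\mathcal F\to K_C^2(p)/K_C^2$ restricts on $\mathcal F_1$ to evaluation at $p$, $K_C^{3/2}(p)\otimes K_{C,p}^{1/2}\to K_{C,p}^{1/2}\otimes K_{C,p}^{1/2}$, so $Q_2\to Q_3$ becomes $(W'\to K_{C,p}^{1/2})\otimes T_{\pi,-}^{*}$ — exactly \eqref{rel odd cotan explicit} tensored with $T_{\pi,-}^{*}$. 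Canonicity in families then follows because every ingredient — the relative sheaves $\wt\OO(3(p),3(p),c)$, the relative diagonal, restriction to the two relative axes, the Koszul splitting, the residues, $\exterior{2}$ — is defined over $\sM_{g,1}$ and commutes with base change thanks to the vanishing of the relevant higher direct images, which is the same vanishing already used in Section \ref{3}. I expect the one genuine obstacle to lie entirely in the middle steps: correctly pinning down $\mathcal F_1$ and $\mathcal F_2$ as line bundles on the axes, keeping track of the $\OO_C(\pm p)$ twists coming from $\OO(\Delta)$ and from $I_p/\mathfrak a$, and verifying against the sign convention \eqref{ffdncvc} that the involution-invariant subsheaf really is the single summand $\mathcal F_1$ — bookkeeping rather than anything conceptual.
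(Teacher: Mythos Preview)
Your proof is correct and follows essentially the same route as the paper's. The paper organizes the argument as a $3\times 3$ diagram of sheaves on $C\times C$ whose columns are ``restriction to the horizontal curve $C\times p$'' sequences and whose rows are restriction-to-diagonal sequences; the bottom row is then read off as $K_C^{3/2}\otimes K_{C,p}^{1/2}\to K_C^{3/2}(p)\otimes K_{C,p}^{1/2}\to K_{C,p}^{1/2}\otimes K_{C,p}^{1/2}$, and the local coordinate description that follows is meant to justify these identifications. Your computation of the quotient sheaf via $I_p/\mathfrak a\cong \OO/(y)\oplus\OO/(x)$ and the resulting $\OO_C(-p)$ twist is exactly what underlies the paper's bottom row, made explicit; the paper's phrase ``restriction to $C\times p$'' is really shorthand for passing to the invariant part of a sheaf supported on both axes, which your Koszul splitting spells out.
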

\begin{proof}
We are trying to identify the bottom row of the array:

\begin{tikzpicture}[auto]
  \node (A) {$\bigwedge^2 H^0(K_{C}^{3/2})$};
  \node (B) [below of=A] [node distance=2cm]{$\bigwedge^2 H^0(K_{C}^{3/2}(p))$};
  \node (C) [below of=B] [node distance=2cm]{$Q_1$};
  \node (A1) [right of=A][node distance=5cm] {$H^0(C{\times}C, \mbox{  } \OO(3,3,1) \mbox{ } )^{-}$};
  \node (B1) [right of=B] [node distance=5cm]{$H^0(C{\times}C, \mbox{  }  I_p(3(p),3(p),1) \mbox{ } )^{-}$};
  \node (C1) [right of=C] [node distance=5cm] {$Q_2$};
  \node (A2) [right of=A1][node distance=5cm] {$H^0(K_{C}^{2})$};
  \node (B2) [right of=B1] [node distance=5cm]{$H^0(K_{C}^{2}(p))$};
  \node (C2) [right of=C1] [node distance=5cm] {$Q_3$};
  \draw[->] (A) to node {} (B);
  \draw[->] (B) to node {} (C);
  \draw[->] (A1) to node {} (B1);
  \draw[->] (B1) to node {} (C1);
  \draw[->] (A2) to node {} (B2);
  \draw[->] (B2) to node {} (C2);
  \draw[->] (A) to node [swap] {} (A1);
  \draw[->] (B) to node {} (B1);
  \draw[->] (C) to node {} (C1);
  \draw[->] (A1) to node [swap] {} (A2);
  \draw[->] (B1) to node {} (B2);
  \draw[->] (C1) to node {} (C2);
\end{tikzpicture}

in which the top two rows are sequences \eqref{without p} and \eqref{with p} respectively. Each of the nine entries here represents a vector bundle over $\sM_g^+$, but for simplicity we will focus on its fiber at the point $(C,T_C^{1/2},p)$, which is a vector space (depending of course on $(C,T_C^{1/2},p)$).  The point is that we have already identified each of the top six of these vector spaces, as well as the maps between them, as global sections  of an appropriate sheaf over $C \times C$ and the induced maps between them. The corresponding array of sheaves on  $C \times C$ is:

\begin{tikzpicture}[auto]
  \node (A) {$\OO(3,3,0)^{-}$};
  \node (B) [below of=A] [node distance=2cm]{$I_p(3(p),3(p),0)^{-}$};
  \node (C) [below of=B] [node distance=2cm]{$K_C^{3/2} \otimes K_{C,p}^{1/2} $};
  \node (A1) [right of=A][node distance=5cm] {$\OO(3,3,1)^{-}$};
  \node (B1) [right of=B] [node distance=5cm]{$I_p(3(p),3(p),1)^{-}$};
  \node (C1) [right of=C] [node distance=5cm] {$K_C^{3/2}(p) \otimes K_{C,p}^{1/2} $};
  \node (A2) [right of=A1][node distance=5cm] {$K_{C}^2$};
  \node (B2) [right of=B1] [node distance=5cm]{$K_{C}^2(p)$};
  \node (C2) [right of=C1] [node distance=5cm] {$K_{C,p}^{1/2}  \otimes  K_{C,p}^{1/2} $};
  \draw[->] (A) to node {} (B);
  \draw[->] (B) to node {} (C);
  \draw[->] (A1) to node {} (B1);
  \draw[->] (B1) to node {} (C1);
  \draw[->] (A2) to node {} (B2);
  \draw[->] (B2) to node {} (C2);
  \draw[->] (A) to node [swap] {} (A1);
  \draw[->] (B) to node {} (B1);
  \draw[->] (C) to node {} (C1);
  \draw[->] (A1) to node [swap] {} (A2);
  \draw[->] (B1) to node {} (B2);
  \draw[->] (C1) to node {} (C2);
\end{tikzpicture}

Here each row is a restriction of a sheaf on $C \times C$ to the diagonal $\Delta$, while each column is a restriction of a sheaf on $C \times C$ to the horizontal curve $C \times p$. The sequence in our Lemma is now obtained as global sections of the bottom row.
\end{proof}
It may be helpful to describe sections of the top left corner of the last diagram in terms of local coordinates $x,y$ on the two copies of $C$ near $p$. A local section of $\OO(3,3,0)^{}$ can be written as $a(x,y) {dx}^{3/2} \otimes  {dy}^{3/2}$ for some holomorphic function $a$. A local section of $\OO(3,3,1)^{}$ can likewise  be written as $(\frac{a(x,y)}{x-y} )     {dx}^{3/2} \otimes  {dy}^{3/2}$ for some holomorphic function $a$.  A local section of $I_p(3(p),3(p),0)^{}$ can be written as $(\frac{a(x,y)}{x}           - \frac{b(y,x)}{y})  {dx}^{3/2} \otimes  {dy}^{3/2}$ for some holomorphic functions $a,b$. And a  local section of $I_p(3(p),3(p),1)^{}$ can be written as $((\frac{a(x,y)}{x}  - \frac{b(y,x)}{y}) /(x-y))  {dx}^{3/2} \otimes  {dy}^{3/2}$ for some holomorphic functions $a,b$. The superscript $^{+}$ restricts in each case to sections that are  invariant under the involution $(x,y) \to (y,x)$. The five remaining objects in the diagram can be deduced immediately from these local forms and the obvious inclusions between them.

Our Theorem follows immediately from this Lemma: it suffices to show that the restriction 
\[
\omega_{|C} \in H^1(C, \Hom(\exterior{2}T_-,T_+))
\]
does not vanish. So we hold the even spin curve $(C,T_C^{1/2})$ fixed at some generic value for which $H^0(C,T_C^{1/2})=0,$ and vary $p \in C$. The restriction to $C$ of the odd cotangent sequence \eqref{rel odd cotan} of $\pi$ is then the projection $p_{1*}$ of the SES   {\eqref{SES}}$_{0,3,1}$ of sheaves 
on $C\times C$:
\begin{equation}\label{031}
0 \to \OO(0,3,0) \to \OO(0,3,1) \to  K_C{^{1/2}} \to 0.
\end{equation}
Its extension class in $H^1( K^{-1/2}_{C}) \otimes  H^0(K_{C}^{3/2}) = \End( H^0(K_{C}^{3/2})$) is the identity element, so it is non-zero.
This shows that sequence  \eqref{rel odd cotan}, and hence also sequence $Q$, do not split. We deduce the non-splitting of SES \eqref{with p} by a diagram chase:

Any splitting:
\[
 H^0(K_{C}^{2}(p))
\to H^0(C{\times}C, \mbox{  }  I_p(3(p),3(p),1) \mbox{ } )^{+}
\]
of SES \eqref{with p} restricts to a homomorphism
\[
 H^0(K_{C}^{2})
\to H^0(C{\times}C, \mbox{  }  I_p(3(p),3(p),1) \mbox{ } )^{+}.
\]
We compose this with the given map to $Q_2$. The further image in $Q_3$ vanishes, so it factors through a homomorphism
\[
 H^0(K_{C}^{2})
\to Q_1 = K_C^{1/2} \otimes H^0(K_{C}^{3/2}),
\]
and our assumption that $ H^0(K_C^{1/2})=0$ implies this vanishes. We conclude that any splitting of SES \eqref{with p} restricts to a splitting of SES \eqref{without p} and therefore induces a splitting of the quotient sequence $Q$, but we have just seen that this sequence does not split when restricted to generic even $(C,T_C^{1/2})$.

\end{proof}

\section{Acknowledgments}
We are grateful to Beilinson, Pierre Deligne, Dennis Gaitsgory, Ezra Getzler, Sheldon Katz, Dimitry Leites, Yuri Manin, Rafe  Mazzeo, Tony Pantev, and Albert Schwarz for helpful discussions.
RD acknowledges partial support by NSF grants DMS 1304962 and RTG 0636606. 
EW acknowledges partial support by NSF Grant PHY-1314311.


\begin{thebibliography}{99}

\bibitem{Not projected}
R. Donagi and E. Witten, \emph{Supermoduli space is not projected}.
arXiv:1304.7798.


%\bibitem{Revisited}
%E. Witten,  \emph{Superstring perturbation theory revisited}.
%arXiv:1209.5461.

%\bibitem{SMnotes}
%E. Witten,  \emph{Notes On Supermanifolds and Integration },
%arXiv:1209.2199

%\bibitem{SRSnotes}
%E. Witten,  \emph{Notes On Super Riemann Surfaces And Their Moduli},
%arXiv:1209.2459

\bibitem{Manin}
Y.  Manin, \emph{Gauge Field Theory and Complex Geometry}.
Grundlehren 289, Springer (1988).

\bibitem{At}
M. F. Atiyah, \emph{Complex analytic connections in fiber bundles}, 
Trans. AMS, 85 (1957), 181-207 

\bibitem{ER}
J. Ebert and O. Randal-Williams,  ``Stable Cohomology Of The Universal Picard Varieties And The Extended Mapping Class Group,''
arXiv:1012.0901.

\bibitem{Ra}
O. Randal-Williams, ``The Picard Group Of The Moduli Space Of $r$-Spin Riemann Surfaces,'' arXiv:1102.0715.


\bibitem{BS}
A. Beilinson and V. Shechtman, \emph{Determinant Bundles and Virasoro Algebras},
Commun. Math. Phys. 118, 651-�701 (1988)

\bibitem{BG}
A. Beilinson and V. Ginzburg,  \emph{Infinitesimal structure of moduli spaces of G-bundles},  
Internat. Math. Res. Notices  1992,  no. 4, 63–74.

\bibitem{BD}
A. Beĭlinson and V. Drinfeld,  \emph{Affine Kac-Moody algebras and polydifferentials},  
Internat. Math. Res. Notices  1994,  no. 1, 1–11.

\bibitem{HS}
V. Hinich and V.Shechtman, \emph{ Deformation theory and Lie algebra homology  },
arXiv:alg-geom/9405013

\bibitem{K}
M. Kapranov, \emph{Rozansky-Witten invariants via Atiyah classes}, 
arXiv:alg-geom/9704009


\bibitem{S}
V.Shechtman, \emph{ Local structure of moduli spaces},
Math. Arbeitstagung 1997, arXiv:alg-geom/9708008


\bibitem{R}
Z. Ran, \emph{Jacobi cohomology, local geometry of moduli spaces, and Hitchin connections}, http://arxiv.org/abs/math/0108101 

\bibitem{BF}
E. Frenkel and D. Ben-Zvi : \emph{Vertex Algebras and Algebraic Curves}, 
Mathematical Surveys and Monographs 88, AMS Publications (2001).

\bibitem{BB}
D. Ben-Zvi and I. Biswas, \emph{Theta Functions and Szego Kernels}, 
http://arxiv.org/abs/math/0211441

\bibitem{FJ}
G. Friedlander and M. Joshi, {\it Introduction To The Theory Of Distributions} (Cambridge University
Press, 1982). 


\bibitem{DEF}
 P. Deligne and J. W. Morgan, ``Notes On Supersymmetry (following Joseph Bernstein),'' in P. Deligne et. al., eds., {\it Quantum Fields
And Strings: A Course For Mathematicians, Vol. 1} (American Mathematical Society, 1999).

\bibitem{WittenNotes}
E. Witten, ``Notes on Super Riemann Surfaces And Their Moduli,''  arXiv:1209.2459.


\end{thebibliography}
\end{document}